\documentclass[11pt]{article}

\usepackage{amssymb,amsmath,amsfonts,eurosym,geometry,ulem,graphicx,caption,color,setspace,sectsty,comment,footmisc,caption,natbib,pdflscape,array,bbm,bm,appendix,mathtools}
\usepackage[hidelinks]{hyperref}
\usepackage{multirow}
\usepackage{graphicx}
\usepackage{caption}
\usepackage{subcaption}

\usepackage{float} 
\usepackage{array}
\usepackage[flushleft]{threeparttable}
\usepackage{longtable}
\usepackage{booktabs}

\newcolumntype{L}{
>{\centering\arraybackslash}m{3cm}}
\include{setup}

\newtheorem{assumption}{Assumption}

\newtheorem{proposition}{Proposition}
\newenvironment{proof}[1][Proof]{\noindent\textbf{#1.} }{\ \rule{0.5em}{0.5em}}

\newcolumntype{L}[1]{>{\raggedright\let\newline\\arraybackslash\hspace{0pt}}m{#1}}
\newcolumntype{C}[1]{>{\centering\let\newline\\arraybackslash\hspace{0pt}}m{#1}}
\newcolumntype{R}[1]{>{\raggedleft\let\newline\\arraybackslash\hspace{0pt}}m{#1}}

\usepackage{tikz}
\usepackage{expl3}
\ExplSyntaxOn
\cs_set_eq:NN \fpeval \fp_eval:n
\ExplSyntaxOff
\usepackage{times}

\usepackage{enumitem}

\setdisplayskipstretch{1}
\AtBeginDocument{%
\setlength{\abovedisplayskip}{2pt plus 1pt minus 1pt}
\setlength{\abovedisplayshortskip}{0pt plus 1pt}%
\setlength{\belowdisplayshortskip}{2pt plus 1pt minus 1pt}%
}

\begin{document}

\newcommand*{\thisdraft}{August 2026} 
\newcommand*{\firstdraft}{}  
\begin{titlepage}

\title{The Geography of Research: The Trade-Off \\ \vspace{-0.5em} Between Knowledge Production and Access}
	
\author{Sitian Liu\thanks{Queen's University. Email: sitian.liu@queensu.ca.}
\and
Yichen Su\thanks{Southern Methodist University. Email: yichensu@outlook.com. We thank seminar participants at Hitotsubashi University and Renmin University of China for their valuable comments.}}

\date{\thisdraft \\ \firstdraft}

\maketitle
\vspace{-1.5em}
\begin{abstract}
\doublespacing
Research activity generates highly localized positive spillovers, yet in the U.S. it has become increasingly spatially misaligned with population and economic activity as people moved away from legacy cities where major research institutions remain anchored. Reallocating researchers toward population centers could broaden local access to knowledge but may reduce knowledge production if legacy institutions or large research clusters raise researcher output. The spatial reallocation of researchers therefore entails a trade-off between gains in knowledge access and losses in knowledge production. To quantify the knowledge production loss from reallocation, we use bibliographic data to separate individual effects from location effects and show that location effects account for substantial differences in research output across locations. Our instrumental-variables estimates provide robust evidence of own-institution agglomeration effects but weaker evidence of external-cluster effects. We then incorporate these estimates into counterfactual reallocations and show that marginally reallocating researchers toward large metropolitan areas with relatively little research activity can plausibly raise aggregate output because the associated knowledge production losses are modest and can be offset by small access gains.

\vspace{1.5em}
\noindent\textbf{Keywords: }Academic Research, Researchers, Knowledge Access, Knowledge Spillovers, Knowledge Production, Agglomeration Economies, Spatial Inequality, Universities, Researcher Reallocation\\ \vspace{3em}\noindent\textbf{JEL Codes:} O33, J24, R12 

\bigskip
\end{abstract}
\setcounter{page}{0}
\thispagestyle{empty}
\end{titlepage}
\pagebreak \newpage

\doublespacing

\section{Introduction}\label{introduction}

Universities and research institutions raise local productivity, innovation, entrepreneurship, and human capital, but many of these benefits decline sharply with distance \citep{jaff1989academic, moretti_return2004, aghion2009causal, kantor2014knowledge, kantor2019research, hausman2022university, biasi2023, mertens2024, howard2024universities}. The localized nature of these spillovers provides a rationale for dispersing research capacity beyond established hubs and toward population and economic activity, so that a broader segment of the economy can access frontier knowledge. This rationale has shaped higher-education policy for more than a century, from the U.S. land-grant system to more recent decentralization initiatives in Europe.\footnote{The U.S.\ land-grant system was established under the Morrill Act of 1862, which supported colleges teaching agriculture and the mechanic arts ``to promote the liberal and practical education of the industrial classes'' \citep{nara_morrill_1862}. Land-grant institutions were intended to broaden access beyond traditional elite colleges \citep{aplu_landgrant_tradition, nea_landgrant_overview}. For example, Illinois founded its land-grant campus at Urbana--Champaign to expand access for working people statewide. Later examples include the University of California's expansion into fast-growing regions through campuses such as Irvine and Merced \citep{pelfrey2004,merritt_lawrence2007,lao2024merced}. More recently, similar efforts have appeared in higher-education policy in Europe. For example, France’s Université 2000 plan aimed to decentralize higher education by creating new university campuses outside traditional centers.}

However, knowledge production may itself benefit from spatial concentration. Researchers can be more productive at some institutions than at others because of location-specific factors, and larger research clusters may further raise productivity through agglomeration effects. Reallocating researchers away from productive hubs may therefore lower movers’ productivity; by shrinking the sizes of major research clusters, it may also reduce the productivity of researchers who remain. These forces create a trade-off: Dispersing researchers toward population and economic activity may improve access to frontier knowledge, but it may also reduce the aggregate amount of knowledge produced.

This paper quantifies the trade-off between knowledge access and knowledge production. Specifically, we ask whether dispersing researchers toward population and economic activity reduces knowledge production enough to offset the resulting gains in knowledge access. Using detailed bibliographic data from 1970 to 2019, we first document the growing geographic misalignment between population and research activity across the U.S. and the resulting widening disparities in local access to knowledge. We then estimate location effects in knowledge production and the causal effects of own-institution and external-cluster sizes on research output using an instrumental-variables strategy. Finally, based on these estimates and a spatial model of researchers and non-researchers, we conduct a series of counterfactual reallocations. To our knowledge, this is the first paper to formally characterize and quantitatively evaluate the trade-off between access benefits of researcher dispersion and knowledge production benefits of researcher concentration.

We begin by documenting the geography of researchers and knowledge access. Combining bibliographic data from the Microsoft Academic Graph (MAG) with population and employment, we show that research activity is highly concentrated today. The Northeast Corridor, coastal California, and a number of established research centers in the Midwest account for a disproportionate share of researchers and publications, while many large and fast-growing areas in the South and West have relatively little nearby research activity. This mismatch was less pronounced historically: Research activity was more closely aligned with population in 1970, but the distributions of population and researchers diverged as population shifted toward the South and West while research remained anchored in long-established institutions. As a result, research output per capita became substantially more unequal across metropolitan areas. The same pattern appears when research output per capita is scaled by the college-educated population, employment, business establishments, or high-technology activity, and it holds for both university and non-university research and across nearly all broad fields. Given the highly localized spillovers from research activity, these patterns suggest potential gains from dispersing research resources toward population and economic activity. Such dispersion, however, may reduce knowledge production by moving researchers away from large and productive legacy research clusters.

To quantify the knowledge production loss from dispersing researchers, we estimate an individual-level research production function using a panel constructed from MAG. We model a researcher's output as the sum of a portable individual effect and an institution-specific location effect. First, we empirically separate the individual effects from the location effects. Separating the two effects is essential for evaluating reallocation: if research output is driven primarily by portable individual effects, moving researchers across locations would have little effect on knowledge production. But if location effects are important, reallocation can substantially change research output, leading to large aggregate effect on knowledge production. 

Exploiting the panel structure of the bibliographic data, we use the variation generated by moves across institutions, with researcher fixed effects, to separate location effects from sorting on individual effects. However, limited mobility leaves many institutions weakly connected, making unrestricted location fixed effects imprecise. We therefore parameterize the location effects as a field-specific function of lagged observable characteristics of the institution and its surrounding research environment, and use the fitted values as our location effects. These predicted effects capture the portion of location effects associated with observed characteristics.

A potential concern with using a mover design to distinguish location effects from individual effects is that researchers with greater expected productivity growth may sort into more productive research clusters. If so, our measured location effects may be biased by capturing movers' pre-existing productivity trajectories. To examine this concern, we conduct an event study around moves across institutions. The estimates show no differential trends before a move and a persistent increase in output after a move to a higher-output institution. We then assess the importance of location effects by decomposing output differences across individuals. Location effects explain a modest but meaningful share of the differences between high- and low-output researchers, but a substantially larger share of the differences between researchers at large and small institutions or in large and small metropolitan research clusters. These sizable location effects imply that reallocating researchers can directly affect the output of those who move.

Second, we estimate agglomeration elasticities for own institutions and external research clusters. Own-institution cluster size is the number of active researchers in the researcher's field at her institution in a given year, while external-cluster size is the number of active researchers in the same field at other institutions within the same metropolitan statistical area (MSA) in a given year. Estimating the agglomeration elasticities is essential for evaluating researcher reallocation because moving researchers changes cluster sizes at both origin and destination institutions and MSAs. These elasticities determine how strongly those size changes affect the output of researchers, including those who do not move. 

The main identification challenge is that cluster size may be correlated with unobserved local productivity advantages. We address this concern using Bartik shift-share instruments that combine the 1995 detailed {\it subfield} composition of each institution and external cluster within a broad field with subsequent {\it worldwide} growth in those subfields. The instruments predict greater cluster growth for research environments initially specialized in globally expanding subfields. The regressions absorb researcher fixed effects, field-specific life-cycle profiles, persistent differences across institutions within fields, and common field-year shocks. We also control for researchers' and their coauthors' direct exposure to worldwide subfield growth, helping ensure that the instruments do not simply capture expanding opportunities in their own research areas.

The IV estimates provide robust evidence of own-institution agglomeration effects across all three outcomes.  In our preferred specification, a 10\% increase in own-institution cluster size raises publications by about 2.0\%, citations by 8.3\%, and patent citations by 1.5\%, suggesting that institutional scale may matter especially for research impact. Evidence for external-cluster effects is weaker: A 10\% increase in external-cluster size raises publications by about 0.7\%, while the estimates for citations and patent citations are small and imprecise. The effects also vary across fields, with particularly strong effects in engineering and the life sciences. These sizable agglomeration effects imply that, beyond its direct effect on movers, researcher reallocation can also indirectly affect non-movers’ output by changing the size of their own-institution and surrounding research clusters.

We then combine the location effects and agglomeration elasticities to compute the counterfactual aggregate research production in counterfactual reallocation scenarios that move researchers from metropolitan areas with more researchers than their population share would imply to areas with fewer: movers experience changes in location effects, while researchers who remain at the origins and destinations are affected by changes in cluster sizes of their institutions and MSAs. Because we do not estimate the elasticity governing the local economic benefit of research access, we do not impose a particular value \textit{a priori}. Instead, for each reallocation, we invert the exercise and solve for the break-even access elasticity at which the access gain exactly offsets the knowledge production loss. 

The results are similar across several exercises. Bilateral reallocations of researchers from institutions in Boston to those in Dallas–Fort Worth and from San Francisco to Las Vegas imply break-even access elasticities of 0.018--0.019. Thus, if the local access spillover elasticity is at least this large, the aggregate gains from broader knowledge access would exceed the accompanying loss in knowledge production. Reallocating researchers among the 50 largest MSAs lowers the break-even access elasticity to about 0.01. Extending the exercise to all MSAs produces larger knowledge-production losses because many destinations contain very small institutions with low location effects, but even then, the break-even elasticity remains around 0.04. All of these thresholds fall far below the 0.08 estimate of local university research spillovers in \citet{kantor2014knowledge}, as well as related estimates in \citet{jaff1989academic}, although the margin is narrower when researchers are dispersed across all MSAs containing research institutions. These results suggest if the true access elasticity is 0.08 or above, marginally reallocating researchers to become closer the population and economic activity would raise aggregate output.

The findings are robust across alternative measures of knowledge access and cluster size. Measuring access to research by publications or citations per capita lowers the break-even elasticity because movers from research-intensive origins bring more research output and impact than their headcount alone captures. Measuring cluster size by publication counts rather than researcher counts also reduces the estimated knowledge production loss. Reallocating researchers toward the college-educated population rather than the total population raises the break-even access elasticity only modestly, and restricting the analysis to STEM researchers produces similar results. Taken together, the results suggest that reallocating researchers to align more closely with population and economic activity---particularly toward large, underserved metropolitan areas---can plausibly raise aggregate output.

The paper makes three contributions. First, it documents a new fact about the geography of research: geographic access to research has become increasingly unequal because research activity remained anchored in legacy cities while population and economic activity moved. This finding is particularly relevant given existing work that shows that universities generate important local spillovers in productivity, innovation, entrepreneurship, and economic activity \citep{jaff1989academic,aghion2009causal,kantor2014knowledge,kantor2019research,hausman2022university,lerner2024wandering}. A related literature shows that proximity to universities also affects educational attainment and access to higher education \citep{card1995distance,rouse1995cc,black2020access,fu2022access,mountjoy2022cc,biasi2023,acton2024access}. \cite{fabre2023geography} and \cite{ishimaru2024geographic} show how the uneven distribution of colleges contributes to persistent spatial inequality in France and the United States. We contribute to this literature by documenting the geography of research and showing how it has diverged from the geography of population and economic activity over time and highlight the potential benefits of reallocating researchers to be more aligned with population and economic activity.

Second, we contribute to the literature on how local environments and spatial concentration affect knowledge production. Within academia, \citet{bosquet2017sorting} find that same-field peers raise publication quantity and quality among economists in France, \cite{chandra2025person} show that institutional factors account for a large share of variation in life-science researchers' output, and \citet{helmers2017my} find that proximity to a major UK scientific facility increases related research activity. Evidence on localized peer effects is not uniform: \cite{azoulay2010superstar} and \cite{waldinger2012peer} find limited evidence that geographic proximity to prominent scientists raises peers' productivity. Beyond academia, a large literature documents agglomeration economies among firms and workers \citep{glaeser1999,glaeser2001,rosenthal_strange2003,charlot_duranton2004,duranton2005testing,Arzaghi2008,rosenthal_strange2008,ellisonglaeserkerr2010,milliondollarplant,dcosta_overman2014,roca_puga2017,gaubert2018,eckertwalsh2022,baumsnow_pavan2024}, while research on innovation shows that knowledge spillovers are highly localized and that inventors, startups, and other knowledge workers benefit from proximity to innovative clusters \citep{jaffe1993,belenzon2013spatial,belenzon2013spreading,carlino2015agglomeration,moretti2021,atkin2022,emanuel2023power,guzman2024go}.\footnote{A related literature examines how team structure and communication costs shape research, and how information technologies alter the process of knowledge production \citep{hesse1993returns,cohen1996computer,van1996could,walsh1996computer,kaminer1998bibliometric,walsh2000connecting,hamermesh2002tools,rosenblat2004getting,adams2005scientific,wuchty2007increasing,agrawal2008restructuring,butler2008equalizing,jones2008multi,kim2009elite,ding2010impact,winkler2010diffusion,adams2011role,goldstein2024communication}.} On the theory side, \citet{dingel_davis_2019} show how interactions among knowledge workers can increase more than proportionally with cluster size. We contribute to this literature by separating the location effects from individual effects in knowledge production and estimating agglomeration economies at both the own-institution and external-cluster levels for academic researchers across a broad range of fields.

Third, we connect the literature on the local benefits of research access with the literature on the productivity gains from concentrating knowledge workers. The first provides a rationale for dispersing researchers, while the second implies a potential production benefit from concentration. We formally characterize this trade-off in a spatial framework and take a first step toward quantifying it by embedding our estimates of location and agglomeration effects. In doing so, we provide an initial quantitative assessment of whether alternative spatial allocations of researchers can plausibly generate net gains in aggregate output. This access-concentration trade-off parallels \citet{dingel_medical2023}, who study productivity and geographic access in medical services, and \citet{rossi_hansberg2023}, who examine spatial allocations in the presence of localized high-skilled spillovers.

The rest of the paper proceeds as follows. Section \ref{framework} presents a spatial framework to characterize the trade-off between knowledge access and knowledge production. Section \ref{data} describes the data. Section \ref{geography_statistics} documents the geography of research and access to knowledge. Section \ref{estimation} presents our empirical strategies and the corresponding estimates of location effects and agglomeration effects. Section \ref{counterfactual} conducts the counterfactual analysis. Section \ref{conclusion} concludes. 

\section{Trade-off between Knowledge Access and Knowledge Production}
\label{framework}

To characterize the spatial trade-off between knowledge access and knowledge production, we develop a spatial framework with researchers and non-researchers. The spatial allocation of researchers affects aggregate output through two channels. First, researchers improve nearby non-researchers' access to frontier knowledge. Second, researchers produce the frontier knowledge itself, and their productivity can be higher in larger or more productive research clusters.

Consider $J$ locations, indexed by $j$. Let $N_j$ denote the quality-adjusted number of researchers in location $j$, and let $M_j$ denote the number of non-researchers. We define $N_j$ formally below after introducing individual researcher productivity. 
The baseline model takes the spatial allocation of non-researcher population $\{M_j\}_{j=1}^J$ as given and studies the effect of exogenously reallocating researchers on aggregate output. Appendix \ref{model_extension} endogenizes the location choices of researchers and non-researchers.

\paragraph{Knowledge Access and the Local Economy} Economic output in location $j$ is modeled as the product of non-researcher input $M_j$ and labor productivity $S_{0j}^{1-s^N}\, G^{s^N}_j$:
\begin{equation*}
    Y_j = S_{0j}^{1-s^N}\, G^{s^N}_j\, M_j.
\end{equation*}
Labor productivity combines a location-specific component unrelated to frontier knowledge, $S_{0j}$, and access to frontier knowledge, $G_j$, in Cobb-Douglas form, with weight $s^N$ on knowledge access. Access to frontier knowledge in location $j$ is
\begin{equation}
\label{eq:access}
    G_j = S(N_j, M_j)\, Q = \left( S_0 + \left(\frac{N_j}{M_j}\right)^\theta \right) Q,
\end{equation}
where $Q$ is the location-invariant stock of frontier knowledge relevant to production, endogenized later as the steady-state outcome of knowledge
accumulation. The function $S(N_j, M_j)$, abbreviated as $S_j$, governs how accessible this frontier knowledge is to the local non-researchers.

We adopt the tractable form in Equation~\eqref{eq:access}. The baseline term $S_0$ captures non-local channels, such as publications and online media, which deliver access regardless of local researchers' presence. 
The second term captures access mediated by nearby researchers. Holding $N_j$ fixed, local access declines with $M_j$, reflecting congestion in researchers’ attention.
We assume $0<\theta<1$, so local access is increasing and concave in $N_j$: an additional researcher improves access, but the marginal gain declines as the local research community expands. For a given stock $Q$, this concavity creates a force favoring the dispersion of researchers toward economically large locations with relatively few researchers.\footnote{The current specification assumes that local access to knowledge is rival: access mediated by nearby researchers is congestible because researchers have limited time and attention.} 

To summarize the relative importance of local and non-local access factors, define
\begin{equation*}
    \lambda_j \;\equiv\; \frac{(N_j/M_j)^{\theta}}{S_0 + (N_j/M_j)^{\theta}} \in (0,1),
\end{equation*}
which is the share of local knowledge access factor attributable to the relative presence of nearby researchers. This term governs how strongly local output responds to the number of researchers: The elasticity of local access with respect to the number of local researchers is $\partial \ln S_j / \partial \ln N_j = \theta\lambda_j$, holding the number of local non-researchers fixed. Thus, holding the frontier knowledge stock $Q$ and local non-researchers $M_j$ fixed, the elasticity of local output $Y_j$ with respect to $N_j$ is $s^N\theta\lambda_j$.\footnote{When $S_0$ is small, access comes primarily from local researchers---$\lambda_j$ approaches 1 and the output elasticity approaches $s^N\theta$. When $S_0$ is large, non-local channels dominate---$\lambda_j$ and the output elasticity approach 0. Between these cases, $\lambda_j$ is small where local researchers are scarce and large where they are abundant, so output becomes more responsive to researcher presence as the existing local research base grows. Without the baseline term $S_0$, a constant-elasticity specification would imply very large gains from adding researchers to locations with a very small initial research base. The term $S_0$ tempers this small-base effect: when researchers are scarce, $(N_j/M_j)^\theta$ is small relative to $S_0$, so $\lambda_j$ is also small. An additional researcher therefore has a limited effect on total access in the thinnest research locations.}

\paragraph{Knowledge Production} The stock of frontier knowledge, $Q$, is also endogenous to the spatial distribution of researchers across locations. Because knowledge accumulates but can become obsolete over time, we model $Q$ using a perpetual-inventory approach \citep{romer1990, jones1995, bloometal2020}. Researchers add to the stock of knowledge each period, while a constant fraction of existing knowledge depreciates as it moves away from the frontier. We first specify the flow of new research produced in each period, and then describe how these flows accumulate into the stock $Q$.

The research output of researcher $i$ in location $j$ at time $t$ is 
\begin{equation}
\label{eq:research_output}
   F_{ijt}={D_{i}} A_{j}(N_j) Q_{t-1}^\eta.
\end{equation} 
The term $D_i$ captures the researcher's portable individual productivity. 
The term $A_{j}(N_j)$ captures location productivity, through factors such as institutions, infrastructure, networks, and local research spillovers.  
Finally, $Q_{t-1}$ is the existing stock of frontier knowledge, capturing the idea that new discoveries may stand on the shoulders of giants.\footnote{Our analysis focuses on steady-state spatial allocations. We therefore suppress time subscripts on $D_i$ and $N_j$, treating individual productivity and the spatial allocation of researchers as fixed over time. We use time subscripts for $F$ and $Q$ only to describe the law of motion. After deriving the steady-state values, we drop these time subscripts.} We assume $0<\eta<1$, so the existing knowledge stock raises current research productivity with diminishing returns and a finite steady state exists.\footnote{\label{footnote:eta}We restrict attention to $0<\eta<1$, so the knowledge stock converges to a finite steady state, as in \cite{jones1995}. The case $\eta=1$ corresponds to the linear ``standing on shoulders'' specification in \citet{romer1990}, while $\eta<0$ captures a “fishing out” effect in which ideas become harder to find as the stock of knowledge grows. If $\eta>1$, knowledge production rises more than proportionally with the existing stock, so there is no finite steady state.
}

To allow for potential agglomeration economies, we specify location productivity as
\begin{equation}
\label{eq:agglomeration}
    A_j(N_j)= A_{0j} N^\alpha_j,
\end{equation}
where $A_{0j}$ captures exogenous local determinants of research productivity that are unrelated to cluster size, while $\alpha>0$ governs the strength of research agglomeration economies. It can be interpreted as the elasticity of location productivity with respect to the local research cluster size. A larger $\alpha$ implies stronger productivity gains from  researcher concentration.

Let $\mathcal{R}_j$ denote the set of researchers working in location $j$, and define the quality-adjusted number of  researchers as $N_j=\sum_{i\in \mathcal{R}_j} D_i$. Total research output in location $j$ is defined as the sum of individual research output across all researchers in that location:
\begin{align*}
   \sum_{i\in \mathcal{R}_j}F_{ijt} &=\sum_{i\in \mathcal{R}_j} D_{i} A_{j} (N_j) Q_{t-1}^\eta \\
   &= A_{j} (N_j) N_j Q_{t-1}^\eta = A_{0j} N_j^{1+\alpha} Q_{t-1}^\eta.
\end{align*}
Aggregate research output is the sum of research output across all locations:
\begin{equation*}
    F_t=\sum_{j=1}^J A_{0j}N_j^{1+\alpha} Q_{t-1}^\eta.
\end{equation*}
The stock of frontier knowledge then evolves according to the standard perpetual-inventory law of motion: 
\begin{equation*}
    Q_t=(1-\delta)Q_{t-1} + F_t,
\end{equation*}
where $0<\delta<1$ is the depreciation rate of the existing knowledge stock. $F_t$ denotes the flow of new knowledge produced by researchers in period $t$. In steady state, $Q_t=Q_{t-1}=Q$. The steady-state stock of frontier knowledge is therefore
\begin{equation*}
    Q=\left(\frac{\sum_{j=1}^J A_{0j} N_j^{1+\alpha}}{\delta}  \right)^{\frac{1}{1-\eta}}.
\end{equation*}
Because our analysis compares steady states, we drop time subscripts from this point forward. We use $Q$ and $F$ to denote their steady-state values. 

\paragraph{Aggregate Output}
Combining the steady-state knowledge stock with Equation \ref{eq:access} gives aggregate output:
\begin{equation}
\label{eq:aggregate_production}
    Y =\sum_{j=1}^{J}
M_j S_{0j}^{1-s^N}
\underbrace{ \left( S_0 + \left(\frac{N_j}{M_j}\right)^{\theta} \right)^{s^N} }_{\text{Access}}
\underbrace{ \left(
\frac{\sum_{j=1}^{J} A_{0j} N_{j}^{1+\alpha}}{\delta}
\right)^{\frac{s^N}{1-\eta}} }_{\text{Frontier Knowledge}}.
\end{equation}
Let $y=\ln Y$. Taking logs gives
\begin{equation*}
    y = \ln B + s^N \ln Q,
\end{equation*}
where
\begin{equation*}
    B = \sum_{j=1}^{J} M_j S_{0j}^{1-s^N} \left(S_0 + \left(\frac{N_j}{M_j}\right)^{\theta} \right)^{s^N}.
\end{equation*}
The term $B$ captures the spatial distributions of researchers and non-researchers and how they determine access factor given the frontier knowledge stock and the spatial distribution of non-research productivity. The term $Q$ is the steady-state stock of frontier knowledge defined above.

\subsection{Spatial Reallocation of Researchers}
\label{trade_off}
Having derived aggregate output, we now characterize the paper’s central trade-off: knowledge access and knowledge production. Consider the {\it marginal} effect on log aggregate output of moving one quality-adjusted researcher from location $j$ to location $j'$: 
\begin{equation}
\label{eq:trade_off}
 \Delta y_{j \rightarrow j'}
=
\underbrace{
s^N\!\left[
\frac{Y_{j'}}{Y}\,\frac{\theta\lambda_{j'}}{N_{j'}}
-
\frac{Y_j}{Y}\,\frac{\theta\lambda_j}{N_j}
\right]
}_{\text{Effect through knowledge access}}
+
\underbrace{
\frac{s^N(1+\alpha)}
{(1-\eta) \delta Q^{1-\eta}}
\left[
A_{0j'}N_{j'}^{\alpha}
-
A_{0j}N_j^{\alpha}
\right]
}_{\text{Effect through knowledge production}}.
\end{equation}
Equation \ref{eq:trade_off} separates the two channels through which researcher reallocation affects aggregate output.
The first term captures the change in access to a given stock of frontier knowledge. The second term captures the change in the steady-state knowledge stock. The net effect is generally ambiguous: access tends to favor dispersion toward economically large, researcher-scarce locations, whereas knowledge production tends to favor locations with high research productivity.

\paragraph{Effect Through Knowledge Access}
Relocating a researcher from $j$ to $j'$ reduces access in the origin and raises it in the destination, so aggregate output increases only when the access gain at the destination outweighs the loss at the origin. The marginal access effect in each location depends on three factors. The output share $Y_j/Y$ captures the economic importance of the location---improved access matters more in economically larger locations. The access elasticity $\theta\lambda_j$ measures how strongly local access responds to researcher presence. Finally, $1/N_j$ reflects diminishing marginal access as the research base expands.

To make the intuition concrete, suppose that the destination $j'$ is Dallas, which has few researchers relative to its economic size. Let the origin $j$ be Boston, which has a smaller economy but many more researchers. If $\lambda_j$ and $\lambda_{j'}$ do not differ too sharply, moving a researcher from Boston to Dallas can raise aggregate access because the gain in Dallas exceeds the loss in Boston. This channel creates a force toward a spatial allocation of researchers that is more closely aligned with the distribution of non-research economic activity.

\paragraph{Effect Through Knowledge Production}
The second term captures the effect of reallocation on the steady-state stock of frontier knowledge, which depends on the location-specific research productivity, $A_j(N_j)=A_{0j}N_j^{\alpha}$, of the origin and destination. Moving a researcher from $j$ to $j'$ reduces knowledge production and thus the steady-state knowledge stock when $A_j(N_j)>A_{j'}(N_{j'})$. This loss arises even when $\alpha=0$, because locations differ in baseline productivity $A_{0j}$. When $\alpha>0$, the contraction of the origin cluster and expansion of the destination cluster create an additional agglomeration channel.\footnote{Under the constant-elasticity specification, moving researchers from a location with higher marginal research productivity to one with lower marginal research productivity reduces knowledge production. If agglomeration gains decline sufficiently with cluster size, however, an additional researcher could contribute more in a small cluster than in a large one, and dispersion could instead increase knowledge production. Section \ref{estimation} examines this possibility by allowing the elasticity to vary with cluster size.}

\paragraph{Decomposing the Knowledge-Production Effect}
The effect on knowledge production consists of two channels. First, when a researcher $i$ moves from location $j$ to $j'$, she experiences a direct change in output because she is exposed to a different research environment. Holding each location's productivity component fixed, researcher $i$’s contribution to the change in log steady-state knowledge stock is
\begin{equation*}
\Delta_i^{direct}\ln Q =\frac{ D_i\left[A_{j'}(N_{j'}) - A_{j}(N_{j})\right]}{(1-\eta)\delta Q^{1-\eta}}
= \frac{D_i\left[A_{0j'} N^\alpha_{j'}-A_{0j} N^\alpha_{j}\right]}{(1-\eta)\delta Q^{1-\eta}}.
\end{equation*}
This is the direct effect of reallocation on the mover. It depends on the difference in location productivity between the destination and origin, so identifying it requires separating location productivity, $A_j(N_j)=A_{0j} N^\alpha_{j}$, from individual researcher productivity, $D_i$.

Second, the move reduces cluster size at the origin and increases it at the destination, so a move also changes the productivity of researchers who remain in both locations. This indirect contribution to the change in $\ln Q$ is
\begin{equation*}
\Delta_i^{indirect}\ln Q = \frac{ D_i\left[N_{j'} A'_{j'}(N_{j'}) - N_{j} A'_{j}(N_{j})\right]}{(1-\eta)\delta Q^{1-\eta}}
= \frac{\alpha D_i\left[A_{0j'}N^{\alpha}_{j'} - A_{0j}N^{\alpha}_{j}\right]}{(1-\eta)\delta Q^{1-\eta}}.
\end{equation*}
Intuitively, removing a researcher of quality $D_i$ from origin $j$ lowers its location productivity by $D_i A_j'(N_j)$, reducing the output of incumbent researchers by $D_i N_j A_j'(N_j)$. The corresponding increase in cluster size at the destination raises the output of incumbent researchers at destination $j'$ by $D_i N_{j'} A_{j'}'(N_{j'})$. Under the constant-elasticity specification, $N_j A_j'(N_j)=\alpha A_j(N_j)$, so the indirect effect equals the direct effect scaled by the agglomeration elasticity $\alpha$. Section \ref{estimation} estimates both location productivity at observed cluster size, $A_j(N_j)$, and the agglomeration elasticity $\alpha$.

\subsection{Model Extension: Endogenous Location Choices}

The baseline model treats the spatial distributions of researchers and non-researchers as given and evaluates exogenous reallocations of researchers. Appendix \ref{model_extension} endogenizes both groups' location choices. This extension introduces an additional general-equilibrium channel: When researcher reallocation changes local access to frontier knowledge, it changes local productivity and can induce non-researchers to migrate.

Reallocating researchers toward researcher-scarce locations may attract non-researchers toward places with lower baseline productivity, partially offsetting the gains from improved knowledge access. Section \ref{counterfactual} shows that this migration channel is quantitatively small relative to the two central forces in the model: the access gains from dispersion and the knowledge production losses from weakening productive research clusters. Thus, we present the extension in the appendix.

\section{Data}
\label{data}
We combine bibliographic and patent records with supplementary data on higher education and local economic conditions to estimate the research production function and to conduct counterfactual analyses.

\subsection{Bibliographic Data: Microsoft Academic Graph (MAG)}

Our primary data source is Microsoft Academic Graph (MAG), a large bibliographic database developed by Microsoft Research.\footnote{Microsoft Academic was an academic search engine developed by Microsoft Research. It ceased operations at the end of 2021. Its successor, OpenAlex, incorporates and extends MAG. The MAG dataset, compiled by \cite{mas_compile}, is publicly available at https://zenodo.org/records/2628216. Appendix \ref{app:MAG} discusses MAG’s coverage and reliability, showing that, relative to other major bibliographic databases, MAG provides broader coverage of scientific publications and comparable citation counts \citep{thelwall2018does,wang2020microsoft,martin2021google,visser2021large}. The section also summarizes MAG’s use in recent economics research.} 
MAG contains records for more than 260 million scientific publications and their citation relationships since 1900. For each publication, the data provide consistent identifiers of authors, institutional affiliations at the time of publication, journals, and fields of study. Institutional affiliations are further matched to geographic coordinates and U.S. state and county identifiers by \cite{emakg}.

MAG assigns publications to detailed subfields. We use its classification of 20 broad fields:   
\begin{itemize} [nosep]
    \item {\bf Social Sciences}: Economics, Business, Sociology, Political Science, Psychology, Population, and Geography.
    \item {\bf Humanities}: History, Philosophy, and Art.
    \item {\bf Natural Sciences}: Mathematics, Physics, Chemistry, Geology, and Environmental Science.
    \item {\bf Engineering and Computer Science}: Engineering, Computer Science, and Materials Science.
    \item {\bf Life Sciences}: Biology and Medicine.
    \end{itemize} 
We assign each researcher a primary field based on the modal field of her publications. When several fields have the same publication count, we randomly select one of the tied fields, following \cite{moretti2021} and \cite{lerner2024wandering}.

Using MAG, we construct an annual panel of researchers with at least one U.S. institutional affiliation from 1970 to 2015. We define the beginning of a researcher’s career as the year of her first publication and follow her for 35 years. A researcher’s geographic location is determined by her institutional affiliation. If multiple affiliations are reported in a year, we select the institution that appears most frequently over the researcher’s career.\footnote{A researcher may move between U.S. and non-U.S. institutions during her career; she enters the sample if she has at least one publication associated with a U.S. institution. When several affiliations appear with equal frequency over a researcher's career, we first select a U.S. institution when applicable and otherwise select the institution with the smallest research cluster to obtain a more conservative estimate.} Researchers do not publish in every year. For years without publications, we record zero publications and carry forward her most recently observed affiliation. We define a move as occurring in the first year in which she publishes with a new institution.\footnote{Table \ref{table:sample_size} reports sample sizes.} 

\paragraph{Research Outcomes}
MAG provides two core measures of researcher output. Our baseline quantity measure is the number of publications produced by a researcher in a given year.
To measure research impact, we construct a ten-year citation measure. For each paper, we count the citations it receives within ten years of publication. We then sum these citations across all papers published by the researcher in that year.

\paragraph{Research Clusters}
We measure cluster sizes by field, both within a researcher's own institution (the internal cluster) and outside it (the external cluster). Internal cluster size is the number of active researchers in the researcher’s field at her own institution. External cluster size is the number of active researchers in the same field at other institutions within the same MSA, excluding the home institution.

We also use alternative measures of cluster size and an alternative geographic definition of the external cluster. First, we measure internal and external cluster size using the number of publications rather than the number of active researchers. Second, we replace the MSA definition with a distance-based external cluster consisting of same-field researchers within 25 kilometers of the home institution, again excluding the home institution. Table \ref{table:cluster} presents the top 10 metropolitan areas by the number of researchers and publications.

\subsection{Patent Data}
To measure the influence of scientific research on patented innovation, we link MAG to U.S. patent records from PatentsView, an open-access database maintained by the U.S. Patent and Trademark Office. It provides linked records on patents, inventors, assignees, inventor locations, technology classes, and patent citations. 

We supplement PatentsView with the patent-to-science linkage developed by \cite{reliance_on_science}, which identifies citations from patents to {\it scientific} publications and matches the cited papers to MAG records. We aggregate these linked citations to the researcher-year level to construct our third research outcome: the number of patent citations received by papers published by the researcher in that year.

\subsection{Additional Data Sources}

\paragraph{Integrated Postsecondary Education Data System}
We use the Integrated Postsecondary Education Data System (IPEDS), administered by the National Center for Education Statistics, to measure tenured and tenure-track faculty across institutions and MSAs. These counts allow us to track changes in the geography of university faculty, a group distinct from the researchers identified in our bibliographic data.

\paragraph{American Community Survey}
We use the American Community Survey (ACS) accessed through IPUMS NHGIS to measure population and demographic composition at detailed geographic levels \citep{nhgis}. The data provide population counts by educational attainment, income, and occupation. We use census-tract centroids to calculate geographic access to research activity and aggregate these measures to larger geographic units using appropriate population weights. We also use 2015--2019 ACS wage data to construct the measure of non-research productivity used in the counterfactual analysis.

\paragraph{Employment and Business Establishments}
We use the Quarterly Census of Employment and Wages and County Business Patterns to measure local employment and business establishments by industry and geography \citep{cbp2021}. These data allow us to examine geographic access to research from the perspective of workers and firms, including those in high-technology industries.

\section{The Geography of Researchers and Access to Knowledge}
\label{geography_statistics}
\paragraph{Current Spatial Distribution} Research activity is highly uneven across the United States, generating large spatial differences in access to knowledge. Figure \ref{fig:map_access} maps nearby research papers per 1,000 residents between 2014 and 2019 at the county level, with values ranging from essentially zero to 95 papers per 1,000 residents. Research activity is concentrated along the Northeast Corridor, coastal California, and around several established research centers in the Midwest, reflecting the legacy of nineteenth- and early-twentieth-century research institutions. 
In contrast, many areas in the South, interior West, and Great Plains have relatively little nearby research activity despite substantial population growth in recent decades. Thus, the current geography of research does not match that of population and economic activity.

\paragraph{Historical Divergence} 
The current mismatch was not always as pronounced. Research activity was more closely aligned with population historically, but the two distributions diverged as population shifted toward the South and West while research remained anchored in long-established institutions in legacy cities. Figure \ref{fig:case_studies} illustrates this pattern for seven large MSAs. In 1970, research papers per resident were broadly similar across these metropolitan areas. Over the next five decades, Boston and San Francisco pulled far ahead, while research output per resident in Dallas/Fort Worth, Miami, and Phoenix remained much closer to its earlier levels despite substantial population growth. The pattern is similar when research output per capita is scaled by the college-educated population, indicating that the divergence is not simply driven by increasing concentration of college-educated workers in legacy cities \citep{diamond2016}. 

The spatial persistence of research activity helps explain this divergence. Figure \ref{fig:top_10_cities} shows that, in 1970, the ten largest research clusters accounted for similar shares of the population and researchers. Their shares of researchers and research papers remained broadly stable through 2019, while their population share declined substantially. Thus, population moved away from established research centers, while research activity remained anchored to long-lived institutions.

Figure \ref{fig:trends} shows that the selected-MSA examples reflect a broader national pattern. Both the 90/50 ratio and the Gini coefficient of research papers per resident increased substantially between 1970 and 2019. The same measures also rose when research output was scaled by the number of college graduates, indicating that geographic access to research has become increasingly unequal even relative to the distribution of high-skilled workers. This increase appears in research produced by both universities and non-university institutions (Figure \ref{fig:gini_uni_nonuni}) and in every broad field except the humanities (Figure \ref{fig:gini_field}).\footnote{Figure \ref{fig:institution_type} reports the composition of types of research institutions by field. Universities remain the dominant host of researchers, but some fields show notable shares elsewhere: A large share of researchers in the Life Sciences work at medical centers, and a meaningful share of those in Engineering and the Natural Sciences are based at private enterprises.}

The appendix complements these inequality measures with evidence from the full distributions. The distribution of research output per resident and per college graduate became substantially more dispersed between 1970 and 2019 (Figure \ref{fig:distribution}).\footnote{Figure \ref{fig:population_researchers} shows that the rising access inequality is primarily the result of population migration away from the existing research-intensive locations.} The distributions of research output relative to local employment and business establishments, including high-technology sectors only, also widened (Figure \ref{fig:dist_emp_est}). Similar widening occurred separately for research produced by universities and non-university institutions (Figure \ref{fig:dist_univ_nonuniv}). Finally, the distribution of faculty per capita remained comparatively stable over time, suggesting that the growing geographic dispersion of research access reflects more than the broader distribution of faculty employment (Figure \ref{fig:research_vs_faculty}).

\section{Estimation of the Research Production Function}
\label{estimation}

The previous section documents substantial geographic disparities in research access, suggesting potential access gains from reallocating researchers toward underserved locations.\footnote{Appendix \ref{app:back_of_envelope} provides a back-of-the-envelope calculation of the potential aggregate output gains from improving geographic access to research, holding the stock of frontier knowledge fixed.} The key remaining question is how much such reallocation would change the production of frontier knowledge. In this section, we estimate the research production function used to answer this question.

We proceed in two steps. First, we measure the field-specific location effects across institutions, separately from the portable individual effects that researchers carry with them. These location effects capture baseline differences in research output associated with moving across institutions. Second, we estimate the \textit{causal} effects of own-institution and external-cluster size on researchers’ output. These elasticities determine how reallocation affects researchers’ output through changes in cluster sizes, including that of researchers who remain in the origin and destination clusters.

\subsection{Individual vs. Location Effects}
\label{estimate_location}
The research production function in Equation \ref{eq:research_output} specifies that a researcher's output depends on both an individual-specific component and a location-specific component. Taking logs gives
\begin{equation}
\label{eq:log_research_output}
    \ln F_{ijt} = \underbrace{\ln D_i}_{\text{Individual-specific}} + \underbrace{\ln A_j(N_j)}_{\text{Location-specific}} + \eta \ln Q_{t-1}.
\end{equation}
Equation \ref{eq:log_research_output} clarifies why location effects are central to the counterfactual analysis. When a researcher is reallocated, she carries her individual productivity with her but her output may change because she enters a different research environment. Thus, to quantify the direct effect of reallocation on the productivity of researchers who move (hereafter ``movers"), we need to empirically separate location effects from individual researcher effects.

Because our data contain information on researchers' fields, institutions, and years of publication, we enrich the baseline model to allow for field- and time-specific variation. Let $F_{ijft}$ denote the research output of researcher $i$, affiliated with institution $j$, in field $f$, and year $t$. 
Let $a$ denote academic age, measured as years since the researcher's first publication. As an empirical counterpart of Equation \ref{eq:log_research_output}, we write research output as the additive sum of an individual effect, a field-specific location effect, a field-year effect, and an idiosyncratic component:
\begin{equation*}
\ln F_{ijft} = \underbrace{d_i + d_{fa}}_{\text{Individual Effect}}+ \underbrace{d_{jft}}_{\text{Location Effect}} + d_{ft} + u_{ijft}. 
\end{equation*}
The individual researcher effect $d_i$ captures time-invariant researcher productivity, and the field-by-academic-age effect $d_{fa}$ captures field-specific life-cycle profiles. Together, they correspond to the portable individual productivity component $\ln D_i$ in the model. The term $d_{jft}$ is the empirical counterpart of the location-specific productivity component $\ln A_j(N_j)$, allowing location productivity to vary over time and across fields within an institution. The field-year effect $d_{ft}$ absorbs the frontier-knowledge component $\eta \ln Q_{t-1}$, along with other nationwide field-specific time shocks. 

\paragraph{Inverse Hyperbolic Sine (IHS) Transformation of $F_{ijft}$} 
Because research output can be zero, in the empirical implementation we replace $\ln F_{ijft}$ with the inverse hyperbolic sine (IHS) transformation of $F_{ijft}$, which is defined and behave linearly at zero and behaves like a logarithm for large positive values. To preserve the connection with the model, we continue to express the specification in logarithmic form. In the counterfactual analysis, we use the IHS transformation in the research production function to remain consistent with our empirical estimates. In Section \ref{zero_outcomes}, we assess and demonstrate the robustness of the IHS specification. 

\paragraph{Identification of Location Effects} 
The annual researcher panel allows us to empirically separate individual effects from location effects. Identification comes from researchers who move across institutions. Because a mover carries her individual effect with her, the change in her output following a move identifies the difference between the location effects of the origin and destination institutions. Movers therefore identify location effects throughout the network of institutions they connect. These location effects, in turn, allow us to recover the individual effects of non-movers at those institutions as each researcher’s output net of the relevant location and field-year effects.

\paragraph{Empirical Challenge: Limited Mobility}
A natural approach would estimate $d_{jft}$ using institution-field-year fixed effects.\footnote{Because a complete set of institution-field-year effects spans the field-year effects, such an approach would first require an additional restriction, such as holding location effects fixed within multi-year periods.}
However, in our context, only 44,009 of 187,865 researchers---approximately 23\%---move at least once. Thus, many institution-field cells are weakly connected or disconnected from the mobility network. A fixed-effects approach would yield imprecise estimates. Restricting the analysis to a sufficiently connected mobility network would exclude many institutions and result in a substantially smaller selected sample.

\paragraph{Predicted Location Effects}
To address the limited-mobility problem, we model the location effect using observable characteristics of the institution and its surrounding research environment:
\begin{equation*}
d_{jft} = X_{jft}'\beta_f + \xi_{jft},
\end{equation*}
where $X_{jft}$ is a vector of observed local characteristics and $\xi_{jft}$ is the portion of location effects not captured by these observed characteristics. Substituting this expression into the empirical production function gives
\begin{equation}
\label{eq:location_predict}
\ln F_{ijft} = \underbrace{d_i + d_{fa}}_{\text{Individual Effect}}+ \underbrace{X_{jft}'\beta_f}_{\text{Predicted Location Effect}} + d_{ft} + \epsilon_{ijft}, 
\end{equation}
where $\epsilon_{ijft}=u_{ijft}+\xi_{jft}$. 

The vector $X_{jft}$ includes lagged log measures of active researchers, publications, ten-year citations, and star researchers in field $f$, measured both at institution $j$ and in the external MSA. These variables are constructed using the preceding five years, excluding the current year, so that the researcher’s current output does not directly enter the local characteristics. We allow the coefficients $\beta_f$ to vary across fields.

We define the predicted location effect as
\begin{equation*}
\hat{d}^{~pred}_{jft} = X_{jft}'\hat{\beta}_f,
\end{equation*}
which captures the portion of the full location effect predicted by observed characteristics of the institution and its surrounding research environment. The portion associated with unobserved local characteristics enters the residual through $\xi_{jft}$. These predicted location effects provide the location-productivity inputs used in the counterfactual analysis. When a researcher is reallocated, her individual effect is held fixed, while the predicted location effect changes from that of her origin to that of her destination.

\paragraph{Identification of $\beta_f$ and the Predicted Location Effects} 

Importantly, the coefficients $\beta_f$ should \textit{not} be interpreted as the causal effects of location characteristics, because those characteristics may be correlated with unobserved features of the research environment. The predicted location effects $X_{j,t}'\beta_f$ nevertheless \textit{do} have a causal interpretation. Coefficients in $\beta_f$, though predictive and non-causal, captures how the location effects, which are causal, vary with observed location characteristics. 

The identification of $\beta_f$ follows the same logic as the mover design rather than relying on exogenous variation in any particular location characteristic. Changes in a researcher’s output following a move identify differences in location effects across institutions, net of her portable individual effect and field-year shocks. The coefficients $\beta_f$ then describe how these mover-identified location effects covary with the observed characteristics $X_{j,t}$. If researchers experienced no systematic change in output when moving between institutions with different characteristics, the predicted difference $\left(X_{j,t}-X_{j',t}\right)'\beta_f$ would be close to zero. Conversely, if researchers systematically experienced larger output gains when moving to larger institutions, the coefficient on institution size would be positive, generating larger predicted location effects for larger institutions. This relationship does not imply that institution size causally increases location effects; it only describes how mover-identified location effects covary with institution size.

\subsubsection{Event-Study Evidence Around Moves}
\label{event_study}

A key identification concern in the mover design is that researchers anticipating future productivity growth may sort into more productive research clusters. In that case, the predicted location effects could partly reflect researchers’ pre-existing productivity trajectories rather than the effects of their new research environments. To assess its likelihood, we examine research output around moves and test whether changes in output occur only after migration or instead reflect trends already underway before the move. We do not use the event-study estimates to construct the location effects; the event study provides complementary evidence on whether changes in research environments are followed by changes in the same researcher’s output.

We follow the event-study approach in \cite{finkelstein2016sources} and \cite{lerner2024wandering}.\footnote{A standard event study that plots average output around all moves is not informative in this setting because movers have different origins and destinations. Moves from less to more productive locations and moves in the opposite direction could offset each other, even if location effects are large.} 
We restrict the sample to researchers who change institutions once, so that pre-move outcomes are not affected by earlier moves. To account for the direction and magnitude of each move, we scale the event-study coefficients by the destination-origin difference in average research output. Let $o(i)$ and $d(i)$ denote researcher $i$'s origin and destination institution, respectively, and let $f(i)$ denote her field. Define
\begin{equation*}
    \Delta_i
    =
    \overline{\ln F}_{d(i)f(i)}
    -
    \overline{\ln F}_{o(i)f(i)},
\end{equation*}
where $\overline{\ln F}_{jf}$ is the average log research output in institution $j$ and field $f$ over all researcher-years (including non-movers). We then estimate
\begin{equation}
\label{eq:event}
    \ln F_{ijft} = \sum_{s\neq -1} \delta_s \Delta_i \mathbf{1}\{t-t_i^m=s\}
    + \sum_{s\neq -1} \lambda_s \mathbf{1}\{t-t_i^m=s\}
    + d_i + d_{fa} + d_{ft} + u_{ijft},
\end{equation}
where $t_i^m$ is researcher $i$'s move year, and $s$ indexes years relative to the move. The omitted period is the year before the move ($s=-1$). 

The coefficients of interest are $\delta_s$, which measure how much a mover's research output changes around the move, scaled by the average output gap between the destination and origin. If moves are driven by anticipated output growth, we would expect $\delta_s$ to rise before the move. In the absence of sorting on expected productivity growth, the pre-move coefficients should be close to zero.

\paragraph{Event-Study Results}

Figure \ref{fig:event_location} shows no statistically significant pre-trend in any outcome during the 15 years before a move. Output rises after researchers move to higher-output institutions and remains persistently higher. 
The initial increase may partly reflect short-run institutional support, but the persistence of the effect is consistent with location-specific productivity advantages. 
The estimates for patent citations are less precise, likely because patent citations are rare and relevant to fewer fields. Overall, the results are consistent with adjustment to the destination research environment rather than a trend already underway.\footnote{Appendix Figure \ref{fig:event_location_direction} examines moves to higher- and lower-output institutions separately and shows similar adjustment patterns, suggesting that research output adjusts in the direction predicted by the destination-origin output gap. This symmetry also supports the additive separability assumption in Equation \ref{eq:location_predict}.}

\subsubsection{Additive Decomposition: Individual vs. Location Effects}
\label{decomposition}

We next use the estimates from Equation \ref{eq:location_predict} to quantify the importance of individual and predicted location effects for differences in research output across researchers.
We conduct the decomposition over 2010--2015. Let $g\in\{H,L\}$ denote the high and low groups in a given comparison. For comparisons by researcher productivity, researchers are ranked within their fields by their average value of the corresponding research output over the period. For comparisons by researchers' cluster size, institutions and MSAs are ranked within each field by their average number of active researchers.

For each researcher represented in group $g$, we average the outcome and each component estimated in Equation \ref{eq:location_predict} over the years in which she is observed in that group and then average across researchers. Let $\overline{\ln F}_{g}$, $\bar{d}^{\,indiv}_{g}$, $\bar{d}^{\,location}_{g}$, $\bar{d}^{\,field-year}_{g}$, and $\bar{\epsilon}_{g}$ denote the resulting group averages, where $\bar{d}^{\,indiv}_{g}$ is the group average of $\hat{d}_i + \hat{d}_{fa}$, $\bar{d}^{\,location}_{g}$ is the group average of $X'_{jft}\hat{\beta}_f$,  $\bar{d}^{\,field-year}_{g}$ is the group average of $\hat{d}_{ft}$, and $\bar{\epsilon}_{g}$ is the group average of the estimated residuals.

Therefore, the difference in average output between the high and low groups can be written as
\begin{align*}
\overline{\ln F}_{H}-\overline{\ln F}_{L} &= \left( \bar{d}^{\,indiv}_{H} - \bar{d}^{\,indiv}_{L}\right) + \left( \bar{d}^{\,location}_{H} - \bar{d}^{\,location}_{L}\right) \\
&+ \left( \bar{d}^{\,field-year}_{H} - \bar{d}^{\,field-year}_{L}\right) +\left( \bar{\epsilon}_{H} - \bar{\epsilon}_{L} \right).
\end{align*}
The shares of the observed output difference attributable to the difference in individual effects, predicted location effects, and unobserved factors are
\begin{align}
\label{eq:decomp_share}
    \text{Share}^{indiv}_{H,L} &=
    \frac{\bar d^{\,indiv}_{H} - \bar d^{\,indiv}_{L}}
    {\overline{\ln F}_{H} - \overline{\ln F}_{L} }, \\ \notag
    \text{Share}^{location}_{H,L} &=
    \frac{\bar d^{\,location}_{H} - \bar d^{\,location}_{L}}
    {\overline{\ln F}_{H}-\overline{\ln F}_{L}}, \\ \notag
     \text{Share}^{unexplained}_{H,L} &=\frac{\bar{\epsilon}_{H} - \bar{\epsilon}_{L}}
    {\overline{\ln F}_{H}-\overline{\ln F}_{L}}.
\end{align}
The unexplained share may partly capture location effects associated with unobserved local characteristics.
We expect the field-year component to be small because groups are defined within fields over a short period.

\paragraph{Decomposition Results}

Table \ref{table:decomp_output} presents the decomposition by individual researcher productivity. Its three panels correspond to IHS publications, citations, and patent citations; within each panel, the columns compare researchers above and below the median, in the top and bottom quartiles, and in the top and bottom deciles. The upper block reports the overall output difference and the corresponding component differences, while the lower block reports each component as a share of the overall difference. Researcher effects account for 63–67\% of the publication gap, 71–76\% of the citation gap, and 75\% of the patent-citation gap. Location effects account for 8\%, 7\%, and 5–10\% of these gaps, respectively.  Although these shares are smaller, they are still quantitatively important.\footnote{The remaining 16–29\% of the output gap is unexplained and may include location productivity associated with unobserved local characteristics. The field-year component is not reported because its contribution is negligible.} Since we are comparing \textit{individuals} based on their productivity, it is unsurprising that the researchers' individual effects account for the largest explanatory share. 

More directly relevant for assessing the effects of reallocating researchers across institutions and MSAs, Table \ref{table:decomp_size} groups researchers by institution or MSA size within fields. Location effects explain a substantial share of the research output differences: 38–80\% for publications, 34–64\% for citations, and 82–94\% for patent citations.\footnote{These large explanatory shares may partly arise mechanically because the predicted location effects are constructed from observed characteristics of the local research environment, including the grouping variables themselves, namely, institution and MSA size. Grouping researchers by these same characteristics may generate greater separation in predicted location effects by construction. Figures \ref{fig:decomp_field_output}--\ref{fig:decomp_field_msasize} report the corresponding field-specific results. The broad patterns are similar across fields, although the magnitude of the shares varies substantially.} The sizable variation in location effects across institutions and MSAs suggests that reallocating researchers across locations may have substantial effects on research output, which we formally quantify in Section \ref{counterfactual}.

\subsection{Estimating Agglomeration Effects}
\label{estimation_agglomeration}

Section \ref{estimate_location} separates individual from location effects to quantify the direct effect of reallocation on researchers who move. We now estimate how research output responds to changes in own-institution and external-cluster size. These elasticities determine the indirect effect of reallocation on researchers who remain in the origin and destination clusters.

Recall from Equation \ref{eq:agglomeration} that location productivity depends on research cluster size: $\ln A_j(N_j)=\ln A_{0j}+\alpha \ln N_j$, where $A_{0j}$ captures location-specific productivity unrelated to cluster size and $\alpha$ is the agglomeration elasticity. To capture agglomeration flexibly, we distinguish between internal and external cluster size. Internal cluster size, $N^I_{jft}$, is the number of active researchers in field $f$ at institution $j$ in year $t$. External cluster size, $N^E_{jft}$, is the number of active researchers in the MSA containing $j$ within the same field, excluding the home institution. 
Specifically, we estimate
\begin{equation}
\label{eq:agg_reg}
\ln F_{ijft} = \underbrace{d_i + d_{fa}}_{\text{Individual Effect}}+ \underbrace{\alpha^{I} \ln N^{I}_{jft} + \alpha^{E} \ln N^{E}_{jft} + d_{jf}}_{\text{Location Effect}} + d_{ft} + u_{ijft}, 
\end{equation}
where $\alpha^{I}$ and $\alpha^{E}$ are the internal and external agglomeration elasticities, respectively.\footnote{\label{footnote:log}Because cluster size can be zero, in the empirical implementation, we enter both cluster size measures as $\ln(N+c)$, with $c=1$ in the baseline analysis, although we retain the notation $\ln N$ to maintain the connection with the model. Adding one is the minimal adjustment needed to define the regressor at zero and keeps the specification closest to the model's log form. The implied elasticity with respect to $N$ is $\alpha N/(N+c)$, which approaches $\alpha$ when $N$ is large relative to $c$. When $N$ is small, the implied elasticity is more sensitive to the additive constant. Section \ref{counterfactual} examines this sensitivity in counterfactuals involving very small clusters.} The institution-field fixed effect $d_{jf}$ absorbs persistent productivity differences across institution-field cells that are unrelated to contemporaneous cluster size, corresponding to the empirical counterpart of $\ln A_{0j}$. 

Section \ref{estimate_location} explains that limited mobility makes it difficult to estimate the levels of location fixed effects precisely. This is less of a concern here because we do not use the estimates of $d_{jf}$---they enter only as controls. Our parameters of interest are $\alpha^{I}$ and $\alpha^{E}$, which are identified from changes in cluster size within institution-field cells over time, controlling for individual effects and common field-year shocks.

As a further robustness check, we add MSA-by-year fixed effects to absorb time-varying shocks common to all institutions in the same metropolitan area. 
These fixed effects also reduce the identifying variation in external-cluster size, so we treat this specification as a robustness check rather than the baseline.

\paragraph{Identification Challenge} 
The main identification challenge is that cluster size is endogenous. Cluster size may be correlated with unobserved local productivity advantages, such as the generosity of funding or the quality of research infrastructure. Institution-field fixed effects absorb time-invariant advantages of each institution-field pair, but they do not absorb time-varying institution-field productivity shocks. Thus, OLS estimates of $\alpha^I$ and $\alpha^E$ may be biased.

\subsubsection{Bartik Instrument for Cluster Size}

To address this identification problem, we construct Bartik shift-share instruments using detailed {\it subfields} nested within each broad field $f$. The instruments combine each institution or external cluster's predetermined composition across detailed subfields with subsequent {\it worldwide} growth in those subfields. The idea is that institutions and local research clusters initially specialized in globally expanding subfields should experience larger predicted growth in the number of researchers. This predicted growth is likely driven by the global evolution of scientific fields rather than by contemporaneous productivity shocks at any single institution or local cluster.\footnote{For example, a statistics department or regional statistics community with a large preexisting share of researchers in machine-learning-related areas would be predicted to grow more as machine learning expands worldwide. For a researcher in that department or cluster, controlling for characteristics of her own subfields, this predicted growth shifts local cluster size through channels that are plausibly unrelated to the direct productivity shocks of any individual researcher.}

\paragraph{Construction of the Bartik Instruments} 

To allow worldwide subfield growth sufficient time to generate variation while smoothing annual noise in research output, we organize the IV construction around three periods: 1995, 2005, and 2015. Each period pools data over the focal year and the two preceding years: For example, the 1995 period uses data from 1993--1995. We use 1995 to measure base-period cluster sizes and subfield shares. We then predict cumulative growth from the base period to each later period, 2005 and 2015, using worldwide growth in detailed subfields.

Let $h$ index a detailed subfield within broad field $f$. Let $N^I_{jh,0}$ and $N^E_{jh,0}$ denote the 1995 base-period number of active researchers in subfield $h$ at institution $j$ and in its external cluster, respectively.
The corresponding baseline shares of subfield $h$ within broad field $f$ at institution $j$ are
\begin{equation*}
    \omega^I_{jh(f)0} =
    \frac{N^I_{jh(f)0}}
    {\sum_{\ell \in f} N^I_{j\ell0}} \text{~~~and~~~}
    \omega^E_{jh(f)0} =
    \frac{N^E_{jh(f)0}}
    {\sum_{\ell \in f} N^E_{j\ell0}}.
\end{equation*}

Using worldwide publication records, let $g^w_{ht}$ denote the growth rate in the number of active researchers in subfield $h$ from the base period to period $t$, where $t\in\{2005,2015\}$. The predicted internal and external cluster growth rates are share-weighted sums of worldwide subfield growth:
\begin{equation*}
    \widehat g^I_{jft}
    =
    \sum_{h\in f}
    \omega^I_{jh(f)0} g^w_{ht}
    \text{~~~and~~~}
    \widehat g^E_{jft}
    =
    \sum_{h\in f}
    \omega^E_{jh(f)0} g^w_{ht}.
\end{equation*}

Because we estimate the production function in levels rather than first differences, we convert these predicted growth rates into predicted cluster-size {\it levels}. Let $N^I_{jf0}$ and $N^E_{jf0}$ denote base-period internal and external cluster sizes. The internal and external Bartik instruments for periods 2005 and 2015 are  
\begin{align}
\label{eq:bartik}
\ln \hat{N}^{I,\text{Bartik}}_{jft} &= \ln\!\left[\, N^{I}_{jf0}\left(1+\hat{g}^{I}_{jft}\right)\right], \text{~and} \\ \notag
\ln \hat{N}^{E,\text{Bartik}}_{jft} &= \ln\!\left[\, N^{E}_{jf0}\left(1+\hat{g}^{E}_{jft}\right)\right],
\end{align}
for $t\in\{2005,2015\}$. The predicted cluster size in 1995 is simply its baseline value and does not provide meaningful shift-share variation. Because all IV specifications include institution-field fixed effects, these base-period cluster sizes are absorbed by these fixed effects. Identification therefore comes from differences across institution-field cells in Bartik-predicted growth over 1995--2005 and 1995--2015.

\paragraph{Identifying Assumptions} The Bartik instruments must satisfy two conditions:
\begin{itemize}[noitemsep]
\item \textbf{Inclusion Restriction:} Baseline exposure to rapidly growing detailed subfields must predict subsequent growth in actual cluster size. We examine this restriction directly in the first-stage regressions.
\item \textbf{Exclusion Restrictions:} Conditional on the fixed effects and controls in Equation \ref{eq:agg_reg}, the Bartik-predicted changes in cluster size must affect a researcher's output only through internal and external cluster size. To ensure the validity of this restriction, one of the following two conditions should hold.
\begin{itemize}[noitemsep]
\item \textit{Exogeneity of worldwide subfield growth.} Following \citet{borusyak2022quasi}, worldwide subfield growth $g^{w}_{ht}$ must be uncorrelated with institution-field-specific productivity shocks. This requires worldwide subfield growth to reflect broad scientific trends rather than shocks originating from any particular institution or local research cluster. This condition is plausible because no single institution or cluster accounts for a large share of the worldwide stock of researchers in a subfield.

\item \textit{Exogeneity of baseline shares.} Following \cite{goldsmith2020bartik}, base-period subfield shares must be uncorrelated with subsequent institution-field productivity shocks. The condition is plausible because the shares are measured before the growth they predict, and persistent local advantages associated with the shares are absorbed by institution-field fixed effects. The remaining concern is that base-period shares may proxy for differential institution-field trends, which we examine using the pre-trend tests below.

\end{itemize}
\end{itemize}

\paragraph{Potential Threats to Identification}

We address three potential threats to the exclusion restriction.

\textit{Direct exposure to worldwide subfield growth.} Worldwide growth in a researcher's own subfields may directly affect her output by increasing publication opportunities or citation demand, rather than solely through changes in local cluster size. To address this concern, we control for the researcher’s and her coauthors’ direct exposure to worldwide subfield growth. Let $\mathcal H_i$ denote the set of subfields associated with researcher $i$, and let $\omega^A_{ijh,0}$ denote the base-period share of subfield $h$ among subfields in $\mathcal H_i$ at institution $j$. Researcher $i$'s predicted {\it direct} exposure to worldwide subfield growth is thus 
\begin{equation*}
\widehat g^A_{ijt} =
\sum_{h\in \mathcal H_i}
\omega^A_{ijh,0} g^w_{ht}.
\end{equation*}
We construct the author's exposure control as
\begin{equation}
\label{eq:author_control}
\ln \widehat{N}^{A}_{ijft}
=
\ln\left[
N^I_{jf0}
\left(1+\widehat g^A_{ijt}\right)
\right].
\end{equation}
We construct an analogous coauthor exposure control, $\ln \widehat{N}^{C}_{ijft}$, using the detailed subfields associated with researcher $i$'s coauthors. Including both controls helps absorb the direct effects of worldwide growth in the researcher’s own research areas and collaboration network, and thus isolates the variation in cluster size generated by the Bartik instruments.

\textit{Concentrated weights.} A second concern is that the instruments may be driven by a small number of rapidly growing subfields whose baseline shares are correlated with unobserved institution-field-level productivity shocks. To assess this concern, we reconstruct the Bartik instruments, omitting one detailed subfield at a time, and re-estimate the preferred specification. If the estimates remain stable across these exercises, the results are unlikely to be driven by any single subfield.

\textit{Pre-existing productivity trends.} A third concern is that institutions initially specialized in rapidly growing subfields may already have been on different productivity trajectories before 1995. If future Bartik-predicted growth is correlated with output growth before the base period, the instruments could capture these pre-existing trends rather than exogenous changes in cluster size.
To assess this concern, 
we test whether Bartik-predicted cluster growth after 1995 is associated with institution-field output growth before the base period. Let ${\overline{\ln F}}_{jfr}$ denote average IHS research output for institution $j$ in field $f$ over the three-year period ending in year $r$, where $r\in\{1985,1990\}$. We estimate
\begin{equation}
\label{eq:pretrend}
\overline{\ln F}_{jf,1995}-\overline{\ln F}_{jfr}
= \pi_0 + \pi^I\widehat g^I_{jf,2005} + \pi^E\widehat g^E_{jf,2005} + \pi_f + \varepsilon_{jf},~~r\in\{1985,1990\},
\end{equation}
where $\pi_f$ denotes field fixed effects.
We repeat the same regression using predicted cluster growth over the longer horizon, $\widehat g^I_{jf,2015}$ and $\widehat g^E_{jf,2015}$. If predicted growth in an institution-field cell is not preceded by an upward trend in research output, the coefficients $\pi^I$ and $\pi^E$ should not be positive.

\subsubsection{Main Results}

\paragraph{OLS Estimates}
Appendix Table \ref{table:result_OLS_full} presents the OLS estimates of Equation \ref{eq:agg_reg} using the full sample from 1970--2015. Own-institution size is positively associated with all three research outcomes across specifications. The estimates for external-cluster size are smaller and vary across specifications. Because these estimates are descriptive and do not enter the counterfactual analysis, we focus on the IV estimates below.

\paragraph{First-Stage Estimates}
Before discussing our main IV estimates, we assess the relevance of the instruments in Table \ref{table:result_first_stage}.
The IV sample is organized around three periods: 1995, 2005, and 2015; each period pools data over the focal year and the two preceding years. Columns 1--3 use log own-institution size as the dependent variable, and Columns 4--6 use log external cluster size. Columns 1 and 4 include the baseline fixed effects from Equation \ref{eq:agg_reg}. Columns 2 and 5 add the author and coauthor controls for direct exposure to worldwide subfield growth. Columns 3 and 6 further add MSA-by-year fixed effects.

The results show that own-institution size is predicted almost entirely by the own-institution instrument, whereas external cluster size responds to both instruments. This cross-loading is plausible because an institution's initial subfield composition may reflect broader local scientific specialization and therefore predict growth both within the institution and in the surrounding research environment. As a result, the external-cluster elasticity is identified from part of predicted external-cluster growth that is distinct from predicted own-institution growth. This limited independent variation may contribute to the lower precision of the external-cluster estimates.

Adding the author and coauthor controls has little effect on the first stage. MSA-by-year fixed effects absorb additional geographic variation, particularly variation generated by the external-cluster instrument, but the instruments remain jointly relevant. The Kleibergen--Paap rk Wald F-statistics reported in Table \ref{table:result_IV_main} range from 12.97 to 15.88 across specifications.

\paragraph{Main OLS and IV Estimates}
Table \ref{table:result_IV_main} reports OLS and 2SLS estimates using the IV sample and follows the same sequence of fixed effects and controls as in Table \ref{table:result_first_stage}. We use Column 5 as our preferred specification because it includes the author and coauthor controls for direct exposure to worldwide subfield growth, while MSA-by-year fixed effects are included as a more demanding robustness specification in Column 6.

The OLS estimates provide a within-sample benchmark. Own-institution size is positively associated with all three research outcomes, whereas the coefficients on external-cluster size are close to zero.

The IV estimates show a robust positive effect of own-institution size. The coefficient is positive and statistically significant for all three outcomes and across all specifications. In our preferred specification, the estimates imply that a 10\% increase in own-institution size raises output by approximately 1.99\% for publications, 8.28\% for citations, and 1.48\% for patent citations.\footnote{One concern with the patent-citation results is that patent citations are relevant to fewer fields and are zero for many observations. As a robustness check, we restrict the sample to fields in which more than 30\% of papers receive at least one patent citation. The estimates remain similar. Due to space constraints, these results are not reported but are available upon request.} The effect is largest for citations, suggesting that agglomeration may matter more for research impact than for publication quantity. 

The evidence for external-cluster effects is weaker. The external-cluster coefficient is positive and statistically significant for publications, but is small and insignificant for citations and patent citations. Adding MSA-by-year fixed effects further reduces the precision of the estimates. The weaker evidence for external-cluster effects is consistent with the limited independent variation discussed in the first-stage results.\footnote{The IV estimates are larger than the corresponding OLS estimates, especially for citation-based outcomes. This pattern is consistent with attenuation bias in OLS due to measurement error in cluster size. It may also reflect heterogeneous treatment effects: The Bartik instrument identifies effects for institution-field cells whose size grows because they were initially exposed to globally expanding subfields. These cells may have especially high returns to scale, since larger institutions in fast-growing research areas may be particularly effective at increasing visibility, improving research quality, and strengthening connections to other researchers and inventors. This may generate especially large effects on citations and patent citations.}

Overall, the results indicate a robust {\it own-institution} agglomeration effect. For the baseline counterfactual analyses, we use the publication agglomeration elasticities $\hat{\alpha}^I=0.199$ and $\hat{\alpha}^E=0.074$.

\paragraph{Heterogeneity Across Fields}

The strength of agglomeration forces may differ across fields because the processes of knowledge production and dissemination vary substantially across fields. In addition, fields differ in the extent to which research relies on team production, specialized infrastructure, and commercialization, all of which may affect the gains from geographic clustering.

Table \ref{table:result_IV_field} re-estimates our preferred IV specification separately for the five broad field groups. Because splitting the sample weakens the first stage, we interpret these estimates as suggestive. The estimates broadly mirror the pooled results.
The own-institution coefficient is positive for publications in every field group, and the estimates are of the same order of magnitude, although only statistically significant for engineering and computer science. For citations, the coefficient is positive and statistically significant in all groups except the humanities, with the largest estimate in the life sciences. This is consistent with the possibility that institution scale may be particularly important for research impact in fields that rely heavily on teams, infrastructure, and professional networks.
For patent citations, the larger positive estimates are concentrated in the natural and life sciences, where chemistry, physics, biology, and medicine are closely linked to patents.

The external-cluster coefficients are generally smaller and less precisely estimated. They are positive and statistically significant for publications in the social sciences and life sciences, but show no consistent pattern for citations or patent citations.
Overall, the field-group-specific estimates are consistent with the pooled results: Agglomeration effects appear to operate primarily through the size of researchers’ own institutions, while the evidence for broader external-cluster effects is weaker. Because the subgroup estimates are less precisely identified, the counterfactual analysis uses the pooled estimates from Table \ref{table:result_IV_main}.

\paragraph{Leave-One-Out Test}
To assess whether a small number of subfields drive the IV estimates, we reconstruct both Bartik instruments 364 times after omitting one detailed subfield at a time and re-estimate the preferred specification for each outcome. Appendix Figure \ref{fig:leave_one_out} displays, for each outcome and both internal and external elasticities, the ten leave-one-out estimates with the largest absolute deviations from the corresponding baseline estimate. Even among these most influential omissions, the estimates remain nearly unchanged for both elasticities and all three outcomes. Moreover, every leave-one-out estimate of the own-institution elasticity remains statistically significant across all three outcomes, and every estimate of the external-cluster elasticity remains statistically significant for publications.

\paragraph{Pre-Trend Tests}
Table \ref{table:result_pretrend} reports the estimates of Equation \ref{eq:pretrend}, relating pre-1995 changes in institution-field research output to Bartik-predicted growth in own-institution and external-cluster size. We examine output growth over 1990--1995 and 1985--1995, using predicted cluster growth over 1995--2005 and 1995--2015. Predicted own-institution growth is generally uncorrelated or negatively correlated with pre-1995 output growth. These patterns would tend to {\it attenuate} rather than generate our positive own-institution estimates. Predicted external-cluster growth is positively associated with some pre-period outcomes, so we interpret the external-cluster estimates more cautiously, consistent with the generally weaker evidence for this channel.\footnote{The positive pre-trends associated with predicted external-cluster growth are less relevant to the own-institution elasticity, because in the first stage, the estimated effect of the external Bartik instrument on own-institution size is close to zero.}

\paragraph{Anderson-Rubin Confidence Sets} Although the Kleibergen--Paap rk Wald F-statistics do not indicate a serious weak-instrument problem, we report Anderson–Rubin confidence sets because they provide inference that remains valid under weak identification. Figure \ref{fig:ar} shows that the 95\% confidence sets exclude all parameter pairs with non-positive values of $\alpha^I$
for all three outcomes.

\subsubsection{Additional Robustness Checks}
\paragraph{Heterogeneity by Cluster Size}
Our model and baseline empirical specification assume {\it constant} internal and external agglomeration elasticities. However, if agglomeration gains decline sufficiently with cluster size, an additional researcher could contribute more in a small cluster than in a large one, so dispersion might raise rather than reduce knowledge production.
To assess this possibility, we assign each institution-field cell to a tertile based on its 1995 cluster size and interact current log cluster size with the corresponding tertile indicators: 
\begin{equation}
\label{eq:heter}
\ln F_{ijft} = d_i + d_{fa}+ \sum_{q=1}^3 \alpha^{I}_q \left( \ln N^{I}_{jft} \times Q^I_{jf,q} \right) + \sum_{q=1}^3 \alpha^{E}_q \left(\ln N^{E}_{jft} \times Q^E_{jf,q} \right) + d_{jf} + d_{ft} + u_{ijft},
\end{equation}
where $Q^I_{jf,q}$ indicates that institution-field cell $(j,f)$ belongs to tertile $q$ of the institution size distribution, and $Q^E_{jf,q}$ is defined analogously using external-cluster size; $\alpha_q^I$ and $\alpha_q^E$ are the internal and external agglomeration elasticities that vary across cluster-size groups. 

Table \ref{table:result_IV_heter} reports IV estimates of $\alpha_q^I$ and $\alpha_q^E$.
For publications, the own-institution elasticity is similar in the smallest and largest tertiles. For citations and patent citations, the own-institution elasticity declines with cluster size but remains positive and sizable even in the largest tertile. The external-cluster estimates show no systematic decline with size.
Overall, the results do not indicate a sharp decline in agglomeration gains, so we retain the pooled constant-elasticity estimates in the counterfactual analysis.

\paragraph{Alternative Measures of Cluster Size}
Table \ref{table:result_IV_cluster} considers alternative measures of cluster size---the number of active researchers and publications---and alternative definitions of external cluster---the MSA or a 25-km radius. All cluster measures are field-specific, and the home institution is excluded from the external cluster in all specifications. Own-institution effects remain positive and are especially stable for citations. However, for publications, the estimate becomes statistically insignificant when cluster size is measured by paper counts rather than author counts. The external-cluster effect on publications is positive and statistically significant under the MSA definition, but we find no statistically significant external-cluster effect for any outcome when the cluster is defined using a 25-km radius. 

\paragraph{Alternative Approaches for Zero-Valued Outcomes} 
\label{zero_outcomes}
Our baseline analysis uses the IHS transformation of a research outcome as the dependent variable because research outcomes contain zeros. 
\citet{chen2024logs} show that estimates using log-like transformations with zero-valued outcomes are not unit-invariant and the unit of the outcome implicitly determines the weight placed on the extensive margin. Thus, we use their approach by explicitly calibrating the value placed on the extensive margin.\footnote{\citet{chen2024logs} also recommend estimating the ATE in levels using Poisson regression and expressing it as a percentage effect. However, we do not pursue this approach because they note that Poisson IV regression with a continuous IV may not have a local average treatment effect interpretation.} We transform a research outcome using $m(F)=\log(F)$ for $F>0$ and $m(0)=-x$, considering $x=0.1$ and $x=1$, so that moving from zero to one is valued as equivalent to 10- and 100-log-point changes along the intensive margin, respectively.
Table \ref{table:result_IV_extensive} shows that the estimated effects using $x=0.1$ are slightly smaller than the baseline estimates, while the estimates using $x=1$ are slightly larger. Overall, the estimates are close to the baseline results and are not very sensitive to the choice of $x$.

\section{Costs and Benefits of Spatial Reallocation of Researchers}
\label{counterfactual}
Section \ref{estimation} provides the inputs from the research-production side needed to evaluate researcher reallocation: location effects and agglomeration effects. We now move to the counterfactual analysis based on Equation \ref{eq:aggregate_production}. For any counterfactual reallocation, the change in log aggregate output is
\begin{equation*}
\Delta \ln Y = \Delta \ln B + s^N \Delta \ln Q,
\end{equation*}
where $\Delta \ln B$ is the change in the knowledge access component of aggregate output and $\Delta \ln Q$ is the change in the steady-state knowledge stock.

In principle, we could evaluate each reallocation by directly computing both terms. However, in practice, we estimate the knowledge-production-side inputs in the previous section but not the access elasticity that governs $\Delta \ln B$.  Rather than imposing a particular estimate from the prior literature, we invert the exercise. For each reallocation, we solve for the smallest access elasticity that sets $\Delta \ln Y=0$, which we refer to as the {\it break-even access elasticity}. If estimates from the prior literature exceed this threshold, the gain from improved access would be large enough to offset the loss in knowledge production.

\subsection{Parameter Calibrations}
\label{section:calibration}
To calibrate the knowledge production parameters, we use the predicted location effects, $\hat{d}^{~pred}_{jft}$, and internal and external agglomeration elasticities, $\hat{\alpha}^I$ and $\hat{\alpha}^E$, estimated in Section \ref{estimation}. On the knowledge-access side, we take $s^N$, the importance of frontier knowledge in economic output; $\delta$, the depreciation rate of knowledge; and $\eta$, the dependence of new research on the existing knowledge stock, from the literature. We calibrate $\lambda_j$, the share of local knowledge access attributable to nearby researchers to several plausible constant values. We do not impose a value for $\theta$, the elasticity of knowledge access with respect to researchers per capita, but instead leave it free and solve for its break-even value in each counterfactual.

\paragraph{Production-Side Inputs} The internal and external agglomeration elasticities are
\begin{equation*}
\hat{\alpha}^I=0.199 \text{~~and~~} \hat{\alpha}^E=0.074
\end{equation*}
in the baseline counterfactual analyses. They are from our preferred IV specification in Column 5 of Table \ref{table:result_IV_main} using publications as the research output. As a robustness check, we also use elasticities in Table \ref{table:result_IV_cluster}, where cluster size is measured by paper counts, rather than author counts.

\paragraph{Knowledge-Stock Parameters} 
We set $s^N = 0.2$, near the upper end of estimates of the elasticity of aggregate productivity with respect to research activity in the macro and cross-country literature, roughly ranging from 0.08--0.23 \citep{coe_helpman1995,van_reenen2019,mertens2024}. Because $s^N$ scales the contributions of both knowledge access and knowledge production, this choice does not mechanically favor either side of the trade-off.

We set the depreciation rate of the frontier knowledge stock $\delta = 0.15$, the conventional value in the R\&D-stock literature \citep{hall2010measuring}.\footnote{Because $\delta$ enters the steady-state knowledge stock as an additive constant in $\ln Q$, it cancels from the log change in output that our reallocation exercise computes, so our results are insensitive to its value.}
We set the dependence of new research on the existing knowledge stock $\eta = 0.5$, a value in the range implied by \citet{jones1995}.\footnote{There is a debate about the magnitude of $\eta$ in the growth literature. More details are provided in footnote \ref{footnote:eta}. }

\paragraph{Access-Side Calibration}
The share of local knowledge access attributable to nearby researchers is
\begin{equation*}
\lambda_j = \frac{(N_j/M_j)^\theta}{S_0 + (N_j/M_j)^\theta}.
\end{equation*}
Because the non-local access component $S_0$ is unobserved, we do not attempt to recover the location-specific values of access share $\lambda_j$. Instead, we approximate $\lambda_j$ using common values $\bar{\lambda}=0.2, 0.5,$ or $0.7$, assuming that the local and non-local channels each account for a fixed share of knowledge access everywhere. For each value, we solve for the break-even value $\theta^*$ and calculate the corresponding access elasticity $s^N \theta^*\bar{\lambda}$. This is the object that we compare with estimates of local research spillovers in the literature. 

\subsection{Counterfactual Reallocation Procedure}
\label{section:reallocation_rule}
We conduct each counterfactual separately within each field $f$, reallocating researchers from MSAs with more researchers than their population share implies to MSAs with fewer. We then aggregate the resulting outcomes across fields. To simplify notation, we suppress the field subscript below.
\begin{enumerate}
    \item {\bf Measure researcher surpluses and deficits.} Let $\bar r = \sum_j N_j / \sum_j M_j$ denote the aggregate researcher-to-population ratio. The researcher allocation proportional to population is $N_j^{\text{eq}} = \bar r\, M_j$. Define MSA $j$'s researcher \emph{gap} as 
    \begin{equation*}
    g_j = N_j^{\text{eq}} - N_j.
    \end{equation*}
    An MSA has a researcher surplus if $g_j < 0$ and a deficit if $g_j > 0$.
    \item {\bf Choose the reallocation intensity.} Let $\kappa \in [0,1]$ denote the fraction of the gap to be closed, i.e., the reallocation intensity. The counterfactual number of researchers in MSA $j$ is
  \[
    N_j(\kappa) = N_j + \kappa\, g_j = (1-\kappa)\,N_j + \kappa\,\bar r\, M_j .
  \]
  At $\kappa = 0$, the observed allocation is unchanged. At $\kappa = 1$, researchers are allocated in proportion to population. 
 \item {\bf Select and assign movers.} From each MSA with a researcher surplus, we randomly select researchers without replacement until the required outflow, $N_j-N_j(\kappa)$, is reached. We assign each mover to a destination MSA with a researcher deficit with probability proportional to its remaining researcher deficit, $g_j\,\kappa$. Within a destination MSA, we assign movers to institutions in proportion to their sizes.
  \item {\bf Compute counterfactual outcomes.} We calculate the counterfactual access $\ln B$ and the counterfactual steady-state knowledge stock $\ln Q$, and the implied changes $\Delta \ln B$ and  $\Delta \ln Q$. Because individual assignments are random, we repeat each exercise 50 times and report the average outcome.
\end{enumerate}

\subsection{Counterfactual Knowledge Access}
A counterfactual reallocation affects population's access to knowledge through each MSA's counterfactual researcher count. In each counterfactual, we hold fixed the non-research population $M_j$, baseline non-research productivity $S_{0j}$, and the calibrated non-local access component $S_0$. Only the researcher count changes from its observed value $N_j$ to its counterfactual value $N_j(\kappa)$. Under reallocation intensity $\kappa$, the counterfactual access term is
\[
\widehat{B}(\kappa) = \sum_{j=1}^{J} M_j\, S_{0j}^{1-s^N}\,
\left[( S_0 + \left( \frac{N_{j}(\kappa)}{M_j}\right)^{\theta} \right]^{s^N}.
\]

\paragraph{Non-Researcher Migration} 
The baseline analysis holds the non-research population fixed across MSAs. As a robustness exercise, we allow non-researchers to respond to counterfactual wage changes using the location-choice model in Appendix \ref{equalizing_access}. Researcher reallocation raises knowledge access and wages in destination MSAs, which can attract non-researchers toward locations with lower baseline productivity. This response only modestly offsets the direct access gain and does not materially alter our conclusion. The spatial misallocation due to migration is similar to the efficiency cost of place-based policies in general.

\subsection{Counterfactual Knowledge Production}
Throughout this subsection, we retain the log notation for log research outcome $\ln F$ used in the model, but the empirical calculations use the IHS-transformed $F$ instead, consistent with our estimation. 
\paragraph{Non-Movers} Based on Equation \ref{eq:aggregate_production}, researchers who are not relocated are indirectly affected because a reallocation changes the size of their cluster. Thus, for a researcher who remains at institution $j$, her output changes through an agglomeration adjustment:
\begin{equation*}
\Delta \ln \hat{F}^{non-mover}_{ijft} = \hat{\alpha}^I \left [ \ln(N^{I,cf}_{jft}) - \ln(N^{I,0}_{jft}) \right] + \hat{\alpha}^E \left [ \ln(N^{E,cf}_{jft}) - \ln(N^{E,0}_{jft}) \right] 
\end{equation*}
where $N^{I,cf}_{jft}$ and $N^{E,cf}_{jft}$ are the counterfactual internal and external cluster sizes, and $N^{I,0}_{jft}$ and $N^{E,0}_{jft}$ are the corresponding initial values.\footnote{\label{footnote:log2}As noted in footnote \ref{footnote:log}, we use $\ln (N+1)$ to estimate agglomeration elasticities, so in counterfactual analyses, we apply the same transformation to both observed and counterfactual cluster sizes. Because this transformation is steep near zero, predictions can be sensitive when a counterfactual changes very small clusters by only a few researchers. Thus, for the exercise in Section \ref{counterfactual_all} that involves very small clusters, we apply a larger value of $c$ to make the transformation less curved near zero, reducing the influence of changes involving only a few researchers. Specifically, we re-estimate the elasticities using $c=5$ and apply the same transformation in the counterfactual. Results are similar if we use $c=10$.}

\paragraph{Movers} For a researcher who is relocated from institution $j$ to $j'$, the change in her output reflects the change in the full location effect between the destination and origin. This is empirically approximated using the difference in the predicted location effect, augmented by the change in agglomeration effects in the destination induced by the reallocation:
\begin{equation*}
\Delta \ln \hat{F}^{mover}_{ijft} = \hat{d}^{pred} _{j'ft} - \hat{d}^{pred} _{jft} + \hat{\alpha}^I \left [ \ln(N^{I,cf}_{j'ft}) - \ln(N^{I,0}_{j'ft}) \right] + \hat{\alpha}^E \left [ \ln(N^{E,cf}_{j'ft}) - \ln(N^{E,0}_{j'ft}) \right]
\end{equation*}
In sum, non-movers are affected only through changes in cluster size, while movers experience a change in the full location effect. Individual fixed effects, academic-age effects, and field-year effects are held fixed.

We add the corresponding change to each researcher's initial IHS research output and apply the standard hyperbolic sine function to convert the {\it counterfactual} IHS research output back to levels, $\hat{F}^{cf}_{ijft}$. After aggregating counterfactual {\it flow} research output $\hat{F}^{cf}_{ijft}$ across researchers, the change in the steady-state frontier knowledge {\it stock} is
\begin{equation*}
    \Delta \ln \widehat{Q}=\ln \left(\frac{\sum_i \widehat{F}_{ijft}}{\delta}\right)^{\frac{1}{1-\eta}} - \ln \left(\frac{\sum_i F_{ijft}}{\delta}\right)^{\frac{1}{1-\eta}}.
\end{equation*}

After completing the reallocation separately within each field, we aggregate counterfactual research output across all researchers and fields to obtain the national change in the knowledge stock.
For each value of $\theta$, we compute $\Delta \ln \hat{Y} = \Delta \ln \hat{B}(\theta) + s^N \Delta \ln \hat{Q}$. We identify $\theta^{\ast}$ as the smallest value of $\theta$ for which $\Delta\ln \hat{Y} \geq 0$ in each simulation, and report the implied access elasticity $s^N \theta^{\ast} \bar{\lambda}$.

\subsection{Counterfactual Results}
Tables \ref{table:case_studies}--\ref{tab:alternatives} present the counterfactual results. The left block reports changes in {\it flow} research output for three groups: researchers who remain in the origin (``Origin Stayers"), researchers already located in the destination (``Dest. Incumb."), and researchers who move (``Movers"). The group-specific percentages use each group’s own baseline output as the denominator and therefore do not add directly. The ``Net Change" column reports the corresponding aggregate change. The final two columns convert this production loss into the break-even $\theta^{\ast}$ and the implied break-even knowledge-access elasticity $s^N \theta^{\ast} \bar{\lambda}$.

\subsubsection{Bilateral Case Studies}
We begin with two transparent bilateral counterfactual reallocations: Boston to Dallas–Fort Worth and San Francisco to Las Vegas. These exercises move researchers from established research hubs to large or rapidly growing MSAs with relatively few researchers per capita. In each case, we set $\kappa=0.1$.\footnote{Under $\kappa=0.1$, Boston and San Francisco would lose around 5\% and 3\% of their researchers, while Dallas–Fort Worth and Las Vegas would gain approximately 31\% and 70\%, respectively.} 

Table \ref{table:case_studies} shows that shrinking the origin lowers output among stayers at institutions in the origin MSAs, expanding the destination raises output among incumbents in the destination MSAs, and movers lose output because destination institutions have lower location effects. The declines for movers are moderated because they carry their individual effects. Moreover, the convergence in cluster sizes narrows the gap in location effects between the destination and the origin. On net, aggregate knowledge production flow (i.e., publications) would fall by around 0.6\% in each case.

The implied break-even access elasticity ranges from 0.0176 to 0.0193, well below existing estimates of localized university spillovers. \citet{kantor2014knowledge}, for example, estimate an elasticity of local noneducation income with respect to university spending of 0.08, while \citet{jaff1989academic} report similarly large or larger elasticities of local patenting with respect to university R\&D. These estimates do not correspond exactly to our parameter $s^N \theta \bar{\lambda}$, which measures the elasticity of local non-research productivity with respect to access to researchers. Nevertheless, they provide useful benchmarks for its plausible magnitude. Unless $s^N \theta \bar{\lambda}$ is substantially smaller than these estimates, the access gains from the bilateral reallocations would outweigh the associated loss in knowledge production.

\subsubsection{Reallocation Among Large MSAs}

We next extend the bilateral exercises to the 50 most populous MSAs. Within this group, we move researchers from the ten MSAs with the most researchers per capita to the ten with the fewest, again setting $\kappa=0.1$. Restricting the exercise to large MSAs avoids reallocations into very small research clusters, where counterfactual predictions are more sensitive to functional-form assumptions.\footnote{See footnotes \ref{footnote:log} and \ref{footnote:log2} for a detailed discussion of this sensitivity.}

Panel A of Table \ref{table:systematic} shows that reallocation among the 50 largest MSAs lowers aggregate flow research output by only 0.32\%. The break-even access elasticity is about 0.010 and rises modestly to 0.012 when non-researcher migration is allowed.\footnote{Appendix \ref{app:nonresearcher} provides more details on how we simulate non-researchers' migration.} Thus, the conclusion from the bilateral exercises extends to a broader set of large MSAs: only a modest access gain is required to offset the loss in knowledge production.

\subsubsection{Reallocation Across All MSAs}
\label{counterfactual_all}

We finally extend the exercise to all MSAs that host research institutions. This broader reallocation includes many very small research clusters, for which the log transformation is particularly sensitive to changes involving only a few researchers. To limit the influence of this near-zero curvature, we use transformations that are less sensitive at small cluster sizes.\footnote{As discussed in footnote \ref{footnote:log2}, the baseline specification uses $\ln(N+1)$. For this exercise, we re-estimate the agglomeration coefficients using $\ln(N+5)$ and apply the same transformation to observed and counterfactual cluster sizes. Although the coefficient estimates vary with the additive constant (as shown in Table \ref{table:result_IV_zero}), the counterfactual results are stable when the same transformation is used in estimation and simulation. Counterfactual results  using $\ln(N+10)$ are omitted for brevity and available upon request. }

Panel B of Table \ref{table:systematic} shows that extending the exercise to all MSAs raises the production loss to 1.14\%, largely because many movers are assigned to very small institutions with lower location effects. The break-even access elasticity rises to 0.039--0.043, still below the 0.08 benchmark but by a narrower margin. This contrast suggests that reallocation toward large underserved MSAs could be less costly than broad dispersion into small, isolated research clusters.

\subsubsection{Alternative Specifications}

Table \ref{tab:alternatives} considers four alternatives to the baseline large-MSA reallocation, all under $\bar{\lambda}=0.5.$

\paragraph{Alternative Measures of Knowledge Access} Panel A considers the same reallocation as Table \ref{table:systematic} Panel A but allows local access to depend on publications or citations per capita rather than researcher counts. The break-even elasticity falls from 0.0096 to 0.0068 and 0.0034, respectively, because movers from research-intensive origins bring more research output and impact to destinations than their headcount alone captures.

\paragraph{Agglomeration Through Research Activity} Panel B measures cluster size by publication output (a quality-weighted scale) rather than researcher counts and applies the corresponding agglomeration estimates from Table \ref{table:result_IV_cluster}. Because movers are relatively productive, their relocation changes publication-based cluster size more than headcount-based size, narrowing the origin–destination gap in location productivity. The net decline in flow research output is only 0.10\%, reducing the break-even elasticities to 0.0008--0.0031.

\paragraph{College-Educated Population} If college-educated workers benefit more from frontier knowledge, the baseline exercise may overstate access gains because these workers already co-locate more closely with researchers. Panel C therefore reallocates researchers according to the spatial distribution of college graduates and measures access relative to that population. The break-even elasticities rise only modestly and remain between 0.0042 and 0.0117 across the alternative access measures.

\paragraph{STEM Researchers} If economically relevant local spillovers are concentrated in technical fields, an exercise covering all fields may overstate the access gains. Panel D restricts the reallocation to STEM researchers---those in the life sciences, physical sciences, and engineering---and again uses the college-educated population as the reallocation target and access base. The results are nearly identical to Panel C, consistent with STEM fields accounting for most research volume in our data.

In conclusion, across all alternatives, the break-even access elasticities remain below the 0.08 benchmark, reinforcing the conclusion that a moderate reallocation toward large underserved MSAs can plausibly raise aggregate output.

\section{Conclusion}
\label{conclusion}

This paper studies the geography of research in the U.S. and its implications for the broader economy. Research activity generates highly localized spillovers, providing a rationale for dispersing researchers toward population and economic activity so that a broader segment of the economy can access frontier knowledge. Yet dispersion may move researchers away from productive institutions and shrink the size of productive research clusters, reducing the aggregate stock of knowledge. We formally characterize and quantify this trade-off between the access benefit of dispersion and the production benefit of concentration.

Using detailed bibliographic data, we first show that research activity is highly concentrated across the U.S. and has become increasingly misaligned with population. As population shifted toward the South and West, research activity remained anchored in long-established institutions, making geographic access to research increasingly unequal across metropolitan areas. 

We then estimate an individual-level research production function to quantify the potential production cost of dispersion. Location effects account for meaningful differences in research output across researchers, suggesting that researchers’ productivity can change when they move between research environments. Our IV estimates also provide robust evidence that own-institution cluster size raises research output, while evidence of external-cluster effects is weaker. These estimates imply that researcher reallocation can affect knowledge production both directly through movers’ location productivity and indirectly through changes in the cluster sizes faced by researchers who remain.

Finally, we combine these estimates on the research-production side in a spatial framework to evaluate counterfactual reallocations and calculate the knowledge-access elasticity required for access gains to offset production losses. This provides a first quantitative assessment of whether alternative spatial allocations of researchers can generate net gains in aggregate output. The results suggest that marginally reallocating researchers toward large metropolitan areas with relatively limited research capacity would lead to modest production losses and require relatively small local access gains to raise aggregate output. 

\clearpage

\begin{figure}[!h]
     \centering
        \caption{County-Level Access to Research Papers, 2014--2019} 
     \begin{subfigure}[b]{0.9\linewidth}
         \centering
         \includegraphics[width=\linewidth]{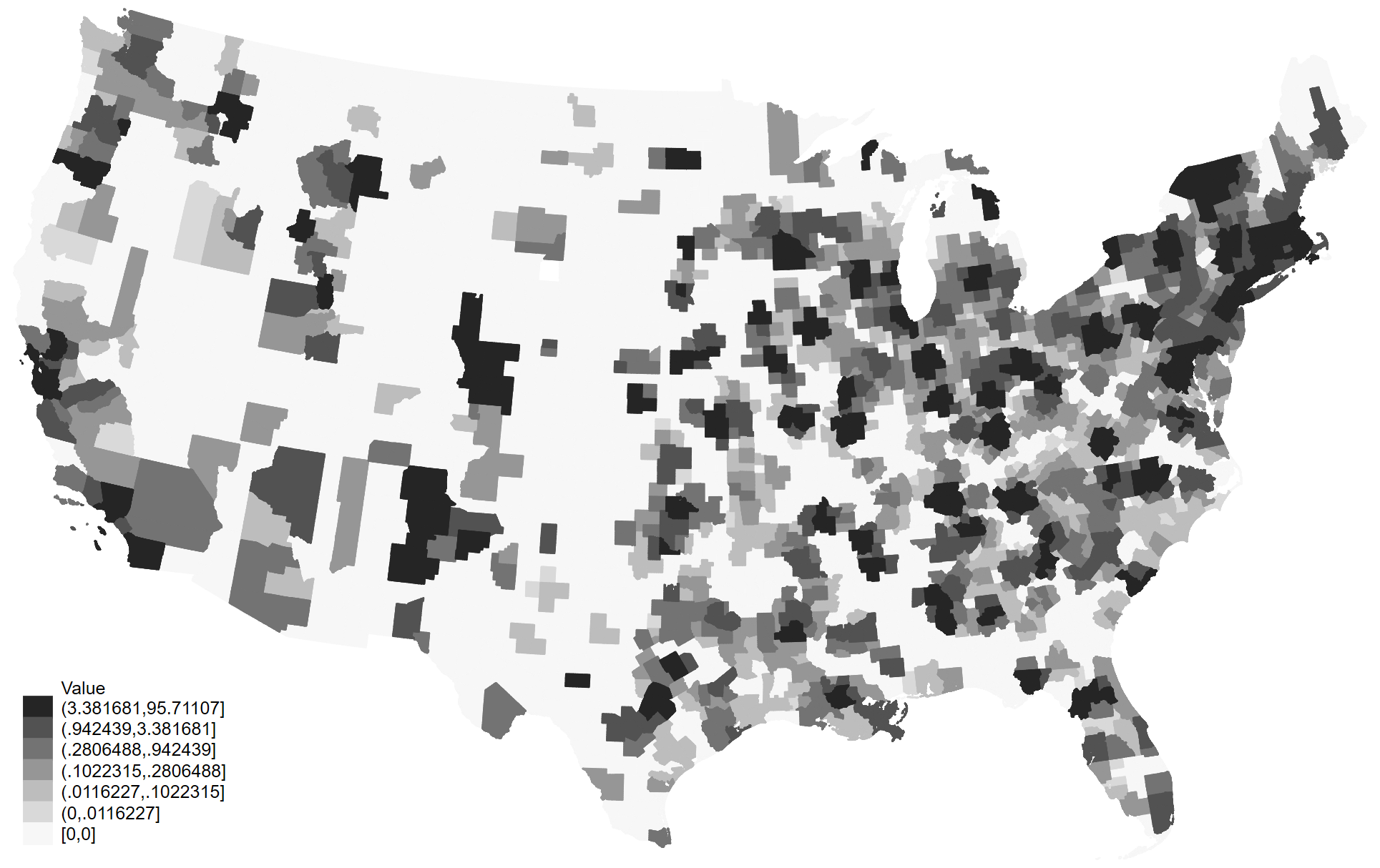}
     \end{subfigure}
        \label{fig:map_access}
\begin{minipage}{\textwidth}
\footnotesize{{\it Notes:} The map shows county-level access to nearby research. For each census tract, we calculate the number of research papers produced between 2014 and 2019 by institutions within 50 miles per 1,000 residents living within the same radius. County-level values are population-weighted averages of the tract-level measures.
} 
\end{minipage}
\end{figure}

\begin{figure}[!h]
     \centering
        \caption{Trends in Access to Research Papers: Selected MSA Case Studies} 
     \begin{subfigure}[b]{0.7\linewidth}
         \centering
         \includegraphics[width=\linewidth]{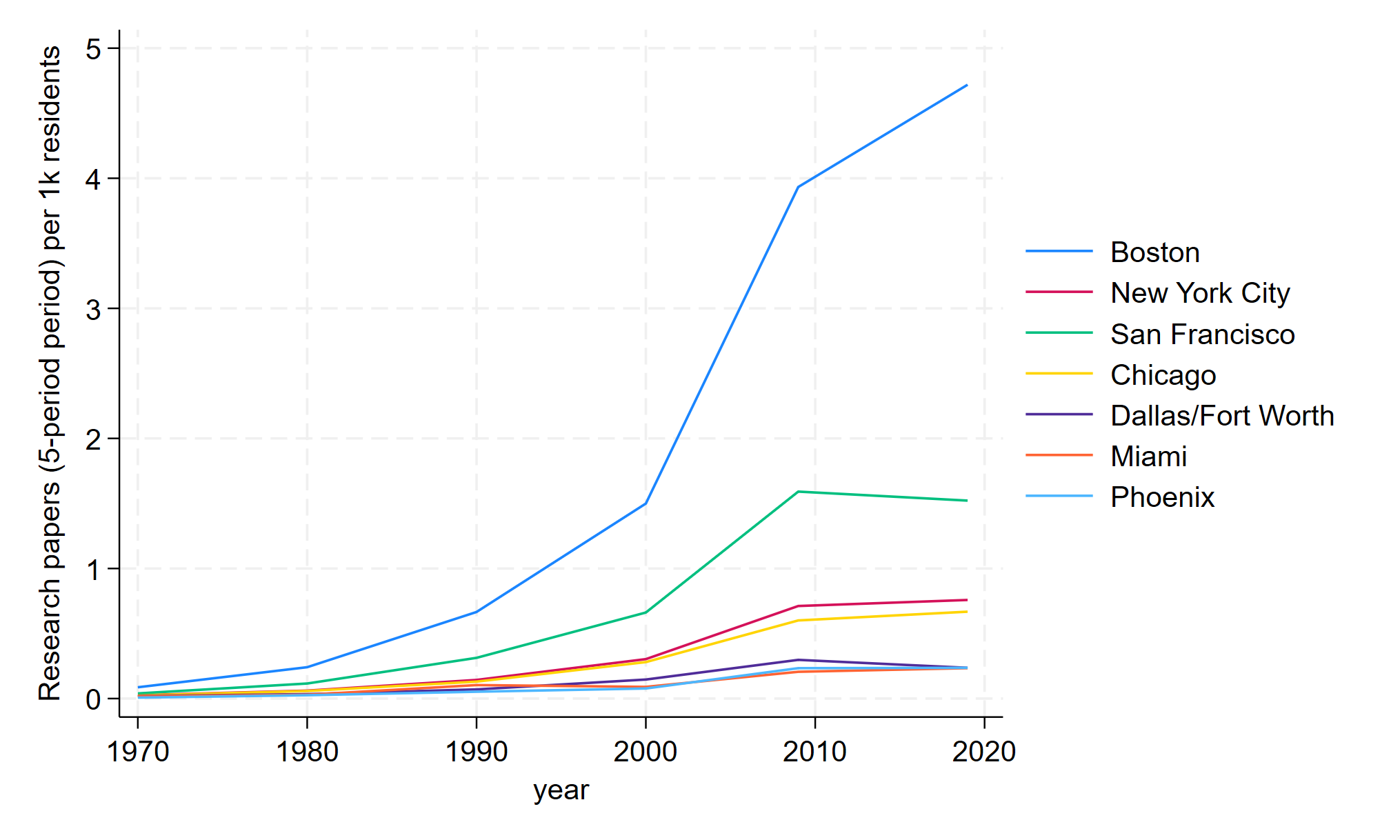}
      \caption{\centering Research Papers per 1,000 Residents}
     \end{subfigure}
   \\
        \begin{subfigure}[b]{0.7\linewidth}
         \centering
         \includegraphics[width=\linewidth]{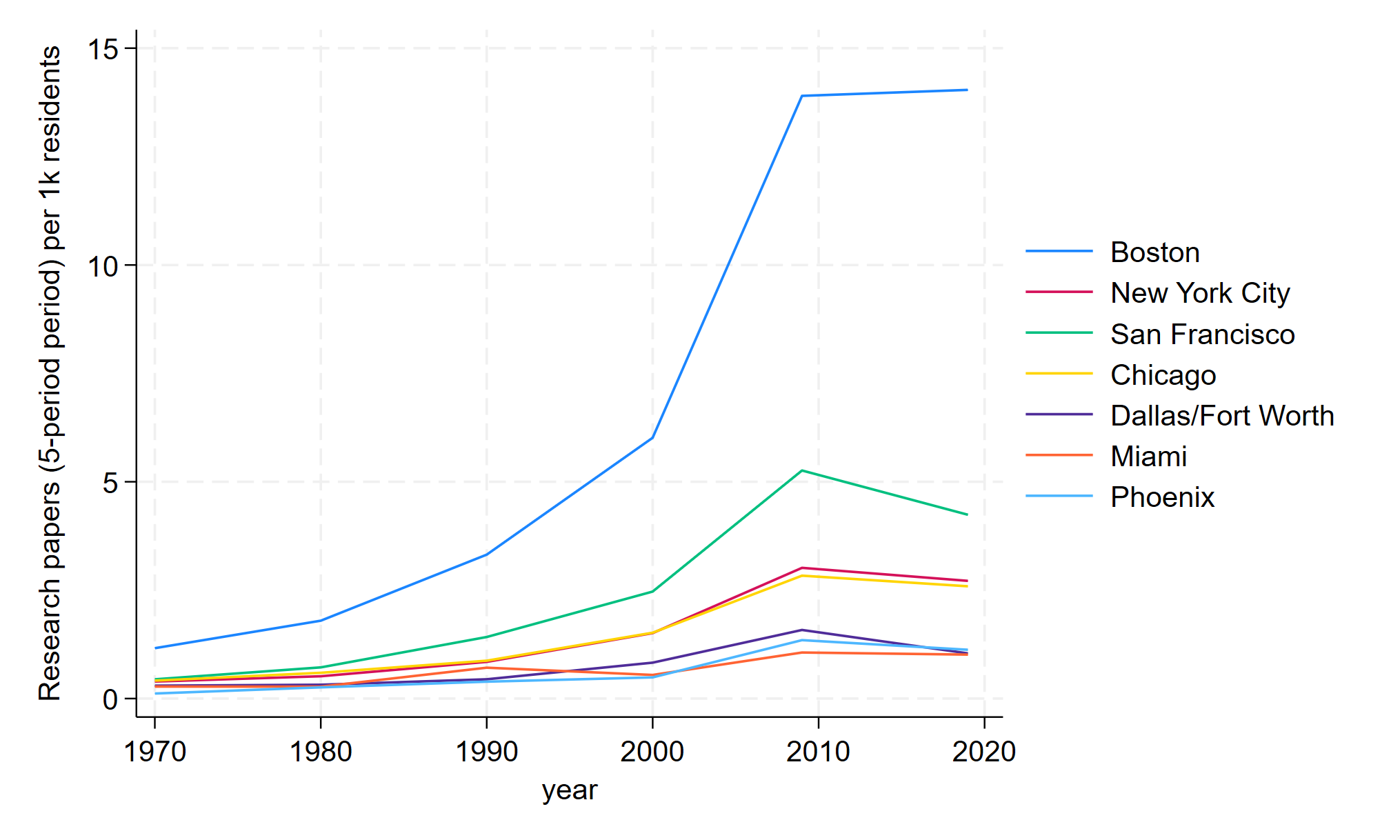}
     \caption{\centering Research Papers per 1,000 College Graduates}
     \end{subfigure}
\begin{minipage}{\textwidth}
\footnotesize{{\it Notes:} Panel (a) reports research papers per 1,000 residents, and Panel (b) reports research papers per 1,000 college graduates. For each MSA, publication counts include papers produced by authors affiliated with institutions in that MSA and are divided by the corresponding population measure. Publication counts are aggregated over 1965–1970, 1975–1980, 1985–1990, 1995–2000, 2005–2010, and 2014–2019. Population data come from the 1970, 1980, 1990, and 2000 Censuses and the 2005–2009 and 2015–2019 ACS.}
\end{minipage}
        \label{fig:case_studies}
\end{figure}

\clearpage

\begin{figure}[!h]
     \centering
        \caption{Rising Inequality in Access to Research Papers: 1970--2019} 
     \begin{subfigure}[b]{0.7\linewidth}
         \centering
         \includegraphics[width=\linewidth]{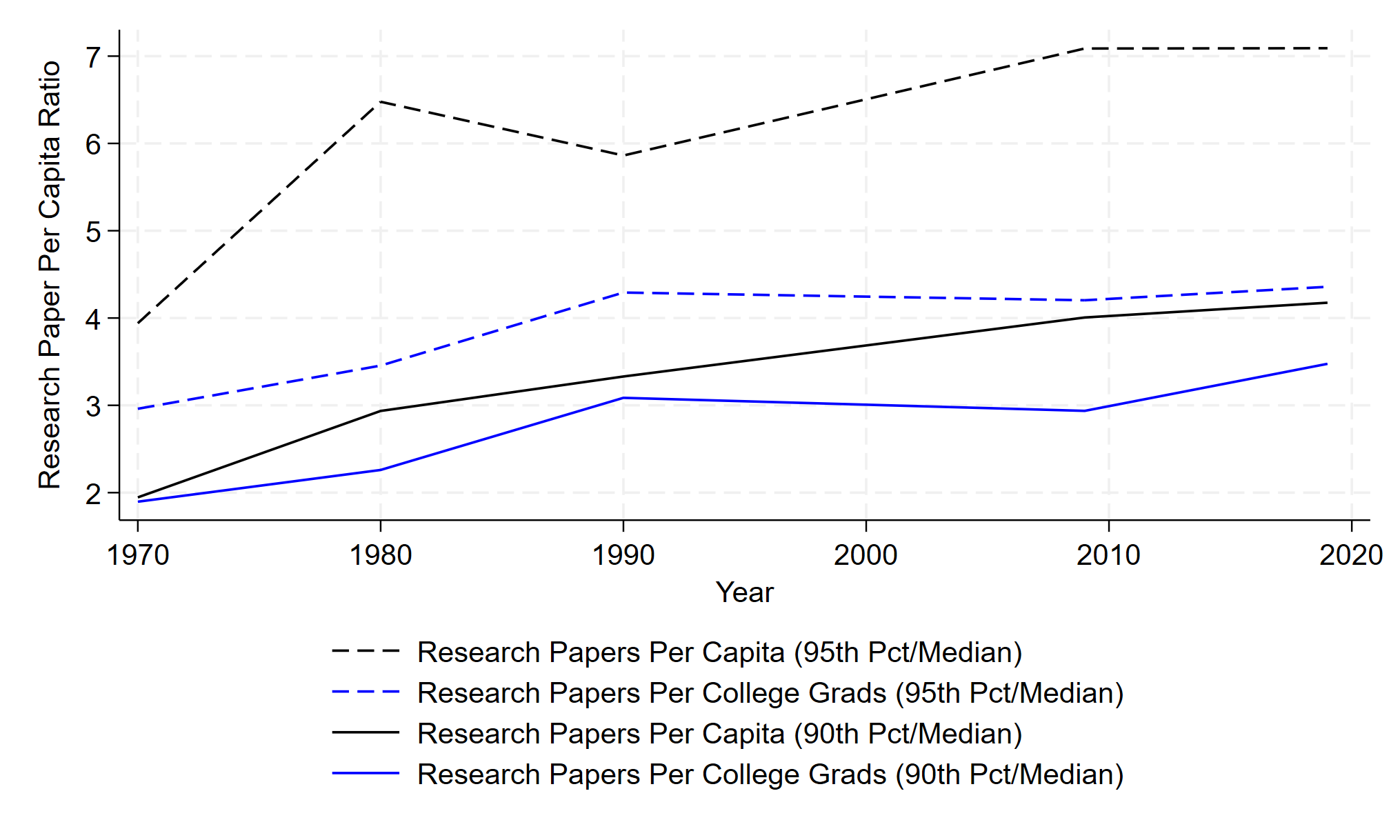}
      \caption{\centering 90/50 and 95/50 Ratios}
             \label{fig:ratio90_50}
     \end{subfigure}
   \\
        \begin{subfigure}[b]{0.7\linewidth}
         \centering
         \includegraphics[width=\linewidth]{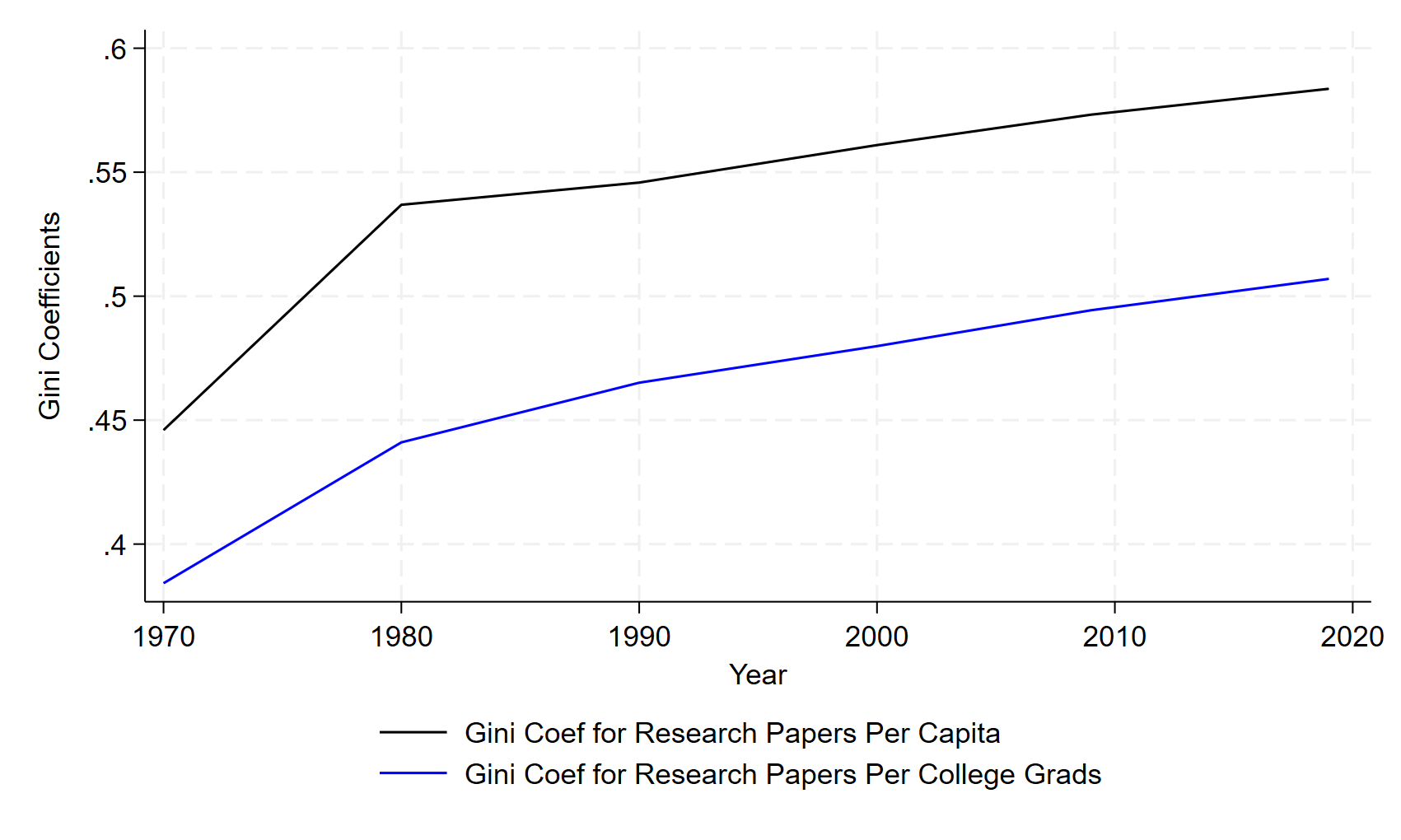}
     \caption{\centering Gini Coefficients}
        \label{fig:gini}
     \end{subfigure}
\begin{minipage}{\textwidth}
\footnotesize{Notes: Panel (a) plots the 90/50 and 95/50 ratios of access to research papers across MSAs from 1970 to 2019, and Panel (b) plots the corresponding Gini coefficients. The black series use research papers per resident, while the blue series use research papers per college graduate.} 
\end{minipage}
        \label{fig:trends}
\end{figure}

\clearpage

\begin{figure}[!h]
     \centering
    \caption{Event Study: Changes in Research Output Around a Move} 
     \begin{subfigure}[b]{0.47\linewidth}
         \centering
         \includegraphics[width=\linewidth]{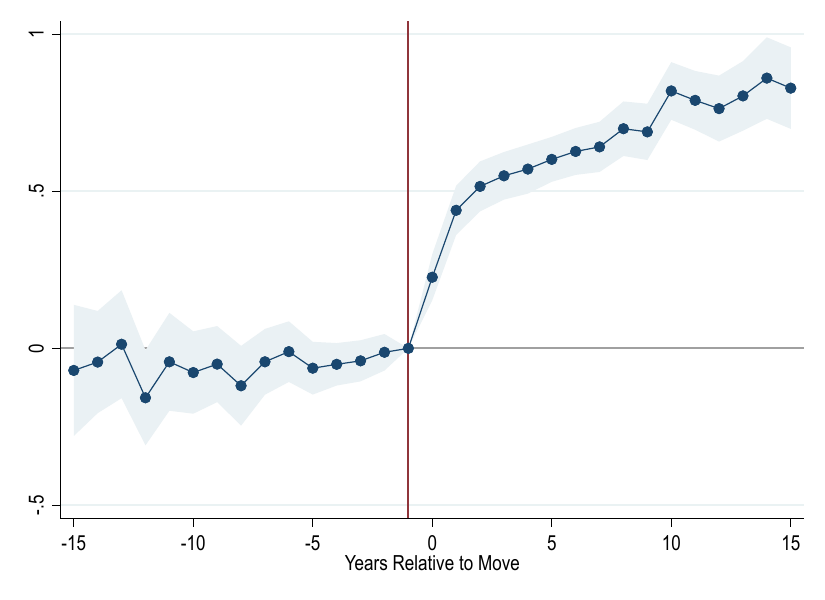}
         \caption{\centering Publications }
         \label{fig:event_location_publication}
     \end{subfigure}
     \qquad
    \begin{subfigure}[b]{0.47\linewidth}
         \centering
         \includegraphics[width=\linewidth]{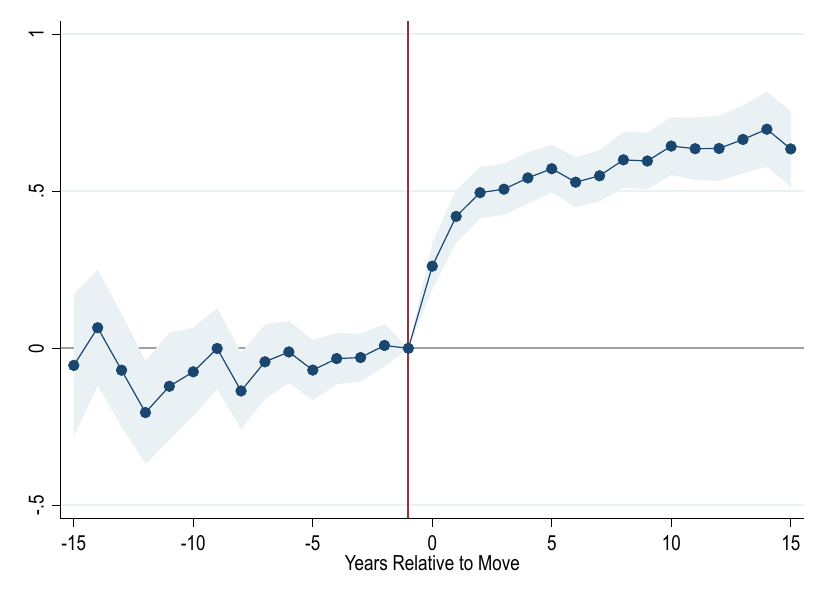}
         \caption{\centering Citations}
         \label{fig:event_location_citation}
     \end{subfigure}
    \\
     \vspace{0.25cm}    
     \begin{subfigure}[b]{0.47\linewidth}
         \centering
         \includegraphics[width=\linewidth]{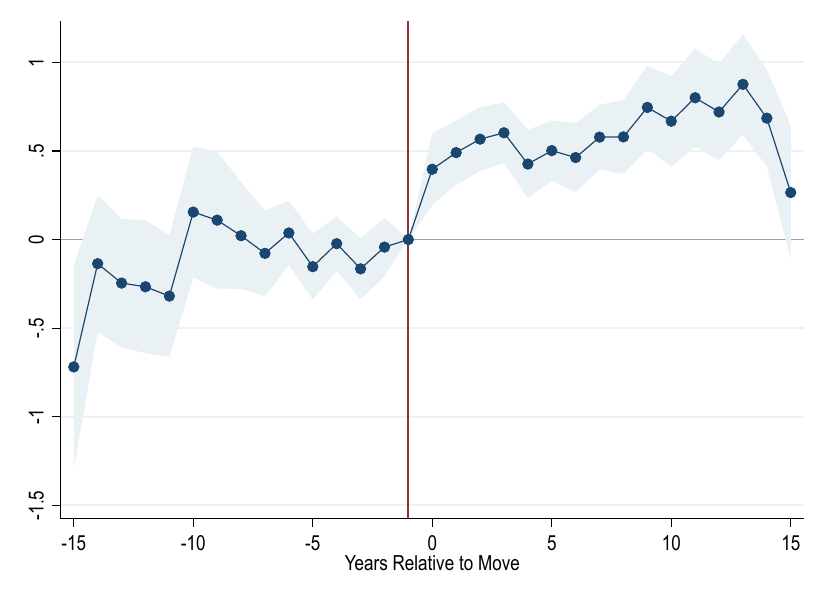}
         \caption{\centering Patent Citations }
         \label{fig:event_location_patent}
     \end{subfigure}
    \vspace{0.25cm}
    \label{fig:event_location}
    \vspace{0.25cm}
\begin{minipage}{\textwidth}
\footnotesize{Notes: Each panel plots the estimated coefficients $\delta_s$ for $-15\leq s \leq 15$ ($s \neq -1$) from Equation \ref{eq:event} using the sample of researchers who changed institutions once. The dependent variable is IHS publications (Panel (a)), IHS citations (Panel (b)), and IHS patent citations (Panel (c)). The $x$-axis denotes years relative to the move. The relative year $-1$ is omitted and normalized to zero, and the vertical line marks the year before the move. The shaded bands show 95\% confidence intervals based on standard errors clustered at the institution level.}
\end{minipage}
\end{figure}

\clearpage

\begin{table}[h]
\centering
\captionsetup{justification=centering}
\caption{Additive Decomposition of Research Outcomes Across Researchers: \\ By Researcher Productivity Group}
\begin{tabular}{
    >{\raggedright\arraybackslash}p{3.5cm}
    *{3}{>{\centering\arraybackslash}p{2.4cm}}
}\toprule
                     & \multicolumn{3}{c}{Researcher Productivity Groups} \\ \cmidrule(lr){2-4}
Top vs. Bottom       & 50\%        & 25\%        & 10\%        \\ \midrule
\multicolumn{4}{l}{{\it Panel A: Publications}}                \\
\multicolumn{4}{l}{Differences in IHS Publications}            \\ 
\quad Overall        & 1.15        & 1.74        & 2.38        \\
\quad Individuals    & 0.77        & 1.15        & 1.50        \\
\quad Locations      & 0.09        & 0.13        & 0.18        \\
\quad Unexplained    & 0.29        & 0.46        & 0.69        \\ \addlinespace
\multicolumn{4}{l}{Share of Difference Attributable to}        \\ 
\quad Individuals    & 0.67        & 0.66        & 0.63        \\
\quad Locations      & 0.08        & 0.08        & 0.08        \\
\quad Unexplained    & 0.26        & 0.27        & 0.29        \\ \midrule
\multicolumn{4}{l}{{\it Panel B: Citations}}                   \\
\multicolumn{4}{l}{Differences in IHS Citations}               \\ 
\quad Overall        & 2.68        & 3.97        & 5.17        \\
\quad Individuals    & 2.03        & 2.92        & 3.68        \\
\quad Locations      & 0.19        & 0.28        & 0.36        \\
\quad Unexplained    & 0.48        & 0.78        & 1.14        \\ \addlinespace
\multicolumn{4}{l}{Share of Difference Attributable to}        \\ 
\quad Individuals    & 0.76        & 0.74        & 0.71        \\
\quad Locations      & 0.07        & 0.07        & 0.07        \\
\quad Unexplained    & 0.18        & 0.20        & 0.22        \\ \midrule
\multicolumn{4}{l}{{\it Panel C: Patent Citations}}            \\
\multicolumn{4}{l}{Differences in IHS Patent Citations}        \\ 
\quad Overall        & 0.67        & 0.67        & 1.01        \\
\quad Individuals    & 0.51        & 0.51        & 0.75        \\
\quad Locations      & 0.07        & 0.07        & 0.05        \\
\quad Unexplained    & 0.11        & 0.11        & 0.21        \\ \addlinespace
\multicolumn{4}{l}{Share of Difference Attributable to}        \\ 
\quad Individuals    & 0.75        & 0.75        & 0.75        \\
\quad Locations      & 0.10        & 0.10        & 0.05        \\
\quad Unexplained    & 0.16        & 0.16        & 0.21        \\
\bottomrule  \noalign{\vskip 0.1in}
\multicolumn{4}{l}{%
\begin{minipage}{12.5cm}%
\footnotesize{\textit{Note:} This table reports the additive decomposition of differences in IHS publications, citations, and patent citations across researcher productivity groups during 2010--2015. Researchers are ranked within their field by their average research output for the corresponding outcome. The 50\%, 25\%, and 10\% columns compare researchers above and below the median, in the top and bottom quartiles, and in the top and bottom deciles, respectively. The researcher component includes author fixed effects and field-specific academic-age effects. The location component is predicted using lagged observable characteristics of the institution--field and external MSA--field research environments. The unexplained component includes differences in residuals. The field-year component is not reported because its contribution to the overall difference is negligible. Component shares are calculated as the difference in each reported component divided by the overall difference in the corresponding outcome; because the field-year component is omitted from the table, the reported shares may not sum exactly to one.}
\end{minipage}}%
\end{tabular}
\label{table:decomp_output}
\end{table}

\clearpage

\begin{table}[h]
\centering
\captionsetup{justification=centering}
\caption{Additive Decomposition of Research Outcomes Across Researchers: \\ By Institution and MSA Size Group}
\begin{tabular}{
    >{\raggedright\arraybackslash}p{3.2cm}
    *{6}{>{\centering\arraybackslash}p{1.6cm}}
}\toprule                         & \multicolumn{3}{c}{Institution Size Group}  & \multicolumn{3}{c}{MSA Size Group} \\ \cmidrule(lr){2-4} \cmidrule(lr){5-7}
Top vs. Bottom           & 50\%             & 25\%            & 10\%         & 50\%       & 25\%      & 10\%      \\ \midrule
\multicolumn{7}{l}{{\it Panel A: Publications}}                                                                   \\
\multicolumn{7}{l}{Differences in IHS Publications}                                                               \\
\quad Overall            & 0.40             & 0.62            & 0.77         & 0.11       & 0.32      & 0.56      \\
\quad Individuals        & 0.10             & 0.23            & 0.33         & 0.01       & 0.10      & 0.27      \\
\quad Locations          & 0.26             & 0.32            & 0.34         & 0.09       & 0.18      & 0.21      \\
\quad Unexplained        & 0.03             & 0.07            & 0.09         & 0.01       & 0.03      & 0.07      \\ \addlinespace
\multicolumn{7}{l}{Share of Difference Attributable to}                                                           \\
\quad Individuals        & 0.25             & 0.36            & 0.42         & 0.09       & 0.32      & 0.48      \\
\quad Locations          & 0.66             & 0.51            & 0.44         & 0.80       & 0.57      & 0.38      \\
\quad Unexplained        & 0.07             & 0.10            & 0.12         & 0.06       & 0.08      & 0.13      \\ \midrule
\multicolumn{7}{l}{{\it Panel B: Citations}}                                                                      \\
\multicolumn{7}{l}{Differences in IHS Citations}                                                                  \\
\quad Overall            & 0.95             & 1.45            & 1.82         & 0.31       & 0.80      & 1.30      \\
\quad Individuals        & 0.42             & 0.77            & 1.07         & 0.15       & 0.43      & 0.84      \\
\quad Locations          & 0.53             & 0.65            & 0.71         & 0.20       & 0.39      & 0.44      \\
\quad Unexplained        & -0.01            & 0.05            & 0.07         & -0.03      & -0.02     & 0.06      \\ \addlinespace
\multicolumn{7}{l}{Share of Difference Attributable to}                                                           \\
\quad Individuals        & 0.45             & 0.53            & 0.58         & 0.47       & 0.55      & 0.64      \\
\quad Locations          & 0.56             & 0.45            & 0.39         & 0.64       & 0.49      & 0.34      \\
\quad Unexplained        & -0.01            & 0.03            & 0.04         & -0.10      & -0.02     & 0.04      \\ \midrule
\multicolumn{7}{l}{{\it Panel C: Patent Citations}}                                                               \\
\multicolumn{7}{l}{Differences in IHS Patent Citations}                                                           \\
\quad Overall            & 0.10             & 0.14            & 0.17         & 0.07       & 0.11      & 0.13      \\
\quad Individuals        & 0.02             & 0.04            & 0.06         & 0.02       & 0.03      & 0.05      \\
\quad Locations          & 0.09             & 0.12            & 0.14         & 0.06       & 0.10      & 0.11      \\
\quad Unexplained        & -0.01            & -0.02           & -0.02        & -0.01      & -0.02     & -0.02     \\ \addlinespace
\multicolumn{7}{l}{Share of Difference Attributable to}                                                           \\
\quad Individuals        & 0.23             & 0.31            & 0.35         & 0.28       & 0.32      & 0.40      \\
\quad Locations          & 0.94             & 0.86            & 0.82         & 0.92       & 0.90      & 0.83      \\
\quad Unexplained        & -0.13            & -0.11           & -0.10        & -0.13      & -0.14     & -0.13     \\
\bottomrule  \noalign{\vskip 0.1in}
\multicolumn{7}{l}{%
\begin{minipage}{15.3cm}%
\footnotesize{\textit{Note:} This table reports the additive decomposition of differences in IHS publications, citations, and patent citations across researchers by institution- or MSA-size group during 2010--2015. Institutions and MSAs are ranked separately within each field by their number of active researchers. The 50\%, 25\%, and 10\% columns compare researchers at institutions or MSAs above and below the median, in the top and bottom quartiles, and in the top and bottom deciles, respectively. The researcher component includes author fixed effects and field-specific academic-age effects. The location component is predicted using lagged observable characteristics of the institution-field and external MSA-field research environments. The field-year component is not reported because its contribution to the overall difference is negligible. Component shares are calculated as the difference in each reported component divided by the overall difference in the corresponding outcome; because the field-year component is omitted from the table, the reported shares may not sum exactly to one.}
\end{minipage}}%
\end{tabular}
\label{table:decomp_size}
\end{table}

\begin{table}[h]
\centering
\captionsetup{justification=centering}
\caption{First-Stage Estimates for the Bartik Instruments}
\begin{tabular}{lcccccc}
\toprule
                               & \multicolumn{3}{c}{Ln(Institution Size)} & \multicolumn{3}{c}{Ln(External Size)} \\ \cmidrule(lr){2-4} \cmidrule(lr){5-7}
                               & (1)          & (2)         & (3)         & (4)         & (5)        & (6)        \\ \midrule
$\ln \widehat{N}^{I,Bartik}$   & 0.851***     & 0.816***    & 0.675***    & 0.604***    & 0.590***   & 0.623***   \\
                               & (0.162)      & (0.161)     & (0.145)     & (0.208)     & (0.207)    & (0.174)    \\
$\ln \widehat{N}^{E,Bartik}$   & -0.026       & -0.034      & -0.000      & 0.387***    & 0.385***   & 0.190***   \\
                               & (0.024)      & (0.023)     & (0.019)     & (0.028)     & (0.028)    & (0.036)    \\
Author Control                 &              & 0.175***    & 0.140***    &             & 0.003      & -0.023     \\
                               &              & (0.019)     & (0.018)     &             & (0.038)    & (0.038)    \\
Coauthor Control               &              & 0.004       & 0.005       &             & 0.036**     & 0.030*    \\
                               &              & (0.016)     & (0.012)     &             & (0.017)    & (0.016)    \\ \midrule
Observations                   & 798,548      & 798,548     & 797,434     & 798,548     & 798,548    & 797,434    \\
R-squared                      & 0.964        & 0.964       & 0.967       & 0.937       & 0.937      & 0.940      \\
Baseline FEs                   & Yes          & Yes         & Yes         & Yes         & Yes        & Yes        \\
Author/Coauthor Controls       & No           & Yes         & Yes         & No          & Yes        & Yes        \\
MSA $\times$ Year FE           & No           & No          & Yes         & No          & No         & Yes        \\
\bottomrule  \noalign{\vskip 0.1in}
\multicolumn{7}{l}{%
\begin{minipage}{15.6cm}%
\footnotesize{\textit{Note:} This table reports first-stage estimates for the two endogenous variables in Equation \ref{eq:agg_reg}. The dependent variable is log own-institution size in Columns 1--3 and log external-cluster size in Columns 4--6. The excluded instruments are the Bartik-predicted own-institution size and Bartik-predicted external-cluster size, constructed from base-period subfield shares and worldwide subfield growth as described in Equation \ref{eq:bartik}. The sample consists of author-year observations in three periods: 1995, 2005, and 2015, where each period pools the focal year and the two preceding years. Columns 1 and 4 include the baseline fixed effects from Equation \ref{eq:agg_reg}, including author fixed effects, academic-age-by-field fixed effects, field-by-year fixed effects, and affiliation-by-field fixed effects. Columns 2 and 5 add the author and coauthor direct-exposure controls. Columns 3 and 6 further add MSA-by-year fixed effects. Standard errors are in parentheses and clustered at the institution level: *** $p< 0.01$, ** $p< 0.05$, * $p< 0.1$.}
\end{minipage}}%
\end{tabular}
\label{table:result_first_stage}
\end{table}

\begin{table}[h]
\centering
\captionsetup{justification=centering}
\caption{OLS and IV Estimates of Agglomeration Elasticities}
\begin{tabular}{lcccccc}
\toprule
                               & \multicolumn{3}{c}{OLS}          & \multicolumn{3}{c}{IV}           \\ \cmidrule(lr){2-4} \cmidrule(lr){5-7}
                               & (1)      & (2)       & (3)       & (4)      & (5)       & (6)       \\ \midrule
\multicolumn{7}{l}{\textit{Panel A: Publications}}                                                   \\
Ln(Institution Size)           & 0.174*** & 0.173***  & 0.175***  & 0.234*** & 0.199***  & 0.300***  \\
                               & (0.007)  & (0.007)   & (0.007)   & (0.072)  & (0.070)   & (0.103)   \\
Ln(External Size)              & 0.003    & 0.003     & 0.002     & 0.082*** & 0.074***  & 0.037     \\
                               & (0.002)  & (0.002)   & (0.002)   & (0.020)  & (0.020)   & (0.052)   \\
Author Control                 &          & 0.001     & -0.004    &          & -0.005    & -0.021    \\
                               &          & (0.007)   & (0.006)   &          & (0.015)   & (0.017)   \\
Coauthor Control               &          & 0.096***  & 0.101***  &          & 0.091***  & 0.097***  \\
                               &          & (0.010)   & (0.010)   &          & (0.011)   & (0.011)   \\ \addlinespace
\multicolumn{7}{l}{\textit{Panel B: Citations}}                                                      \\
Ln(Institution Size)           & 0.375*** & 0.379***  & 0.384***  & 0.898*** & 0.828***  & 1.028***  \\
                               & (0.016)  & (0.017)   & (0.016)   & (0.192)  & (0.190)   & (0.271)   \\
Ln(External Size)              & 0.002    & 0.000     & -0.001    & 0.054    & 0.056     & -0.022    \\
                               & (0.004)  & (0.004)   & (0.004)   & (0.062)  & (0.060)   & (0.134)   \\
Author Control                 &          & -0.157*** & -0.158*** &          & -0.238*** & -0.248*** \\
                               &          & (0.015)   & (0.016)   &          & (0.038)   & (0.044)   \\
Coauthor Control               &          & 0.281***  & 0.296***  &          & 0.269***  & 0.288***  \\
                               &          & (0.027)   & (0.024)   &          & (0.031)   & (0.027)   \\ \addlinespace
\multicolumn{7}{l}{\textit{Panel C: Patent Citations}}                                               \\
Ln(Institution Size)           & 0.022*** & 0.022***  & 0.024***  & 0.153*** & 0.148***  & 0.192***  \\
                               & (0.004)  & (0.004)   & (0.004)   & (0.042)  & (0.043)   & (0.062)   \\
Ln(External Size)              & -0.002** & -0.002**  & -0.002**  & -0.013   & -0.011    & -0.030    \\
                               & (0.001)  & (0.001)   & (0.001)   & (0.012)  & (0.012)   & (0.029)   \\
Author Control                 &          & -0.011*** & -0.007*   &          & -0.033*** & -0.031*** \\
                               &          & (0.003)   & (0.004)   &          & (0.008)   & (0.010)   \\
Coauthor Control               &          & 0.025***  & 0.026***  &          & 0.023***  & 0.025***  \\
                               &          & (0.005)   & (0.005)   &          & (0.006)   & (0.006)   \\ \midrule
Observations                   & 798,548  & 798,548   & 797,434   & 798,548  & 798,548   & 797,434   \\
Baseline FEs                   & Yes      & Yes       & Yes       & Yes      & Yes       & Yes       \\
Author/Coauthor Controls       & No       & Yes       & Yes       & No       & Yes       & Yes       \\
MSA $\times$ Year FE           & No       & No        & Yes       & No       & No        & Yes       \\
Kleibergen–Paap Wald F         & N/A      & N/A       & N/A       & 15.88    & 14.90     & 12.97     \\
\bottomrule  \noalign{\vskip 0.1in}
\multicolumn{7}{l}{%
\begin{minipage}{15.8cm}%
\footnotesize{
\textit{Note:} This table reports OLS and IV estimates of Equation \ref{eq:agg_reg} using the IV sample. The dependent variable is IHS publications in Panel A, IHS citations in Panel B, and IHS patent citations in Panel C. The sample consists of author-year observations in three periods: 1995, 2005, and 2015, where each period pools the focal year and the two preceding years. Columns 1--3 report OLS estimates. Columns 4--6 report IV estimates that instrument log own-institution size and log external-cluster size using the Bartik instruments in Equation \ref{eq:bartik}. Columns 1 and 4 include the baseline fixed effects from Equation \ref{eq:agg_reg}, including author fixed effects, academic-age-by-field fixed effects, field-by-year fixed effects, and affiliation-by-field fixed effects. Columns 2 and 5 add the author and coauthor direct-exposure controls. Columns 3 and 6 further add MSA-by-year fixed effects. The Kleibergen--Paap rk Wald F statistic is reported for the IV specifications. Standard errors are in parentheses and clustered at the institution level: *** $p< 0.01$, ** $p< 0.05$, * $p< 0.1$.
}
\end{minipage}}%
\end{tabular}
\label{table:result_IV_main}
\end{table}

\begin{table}[h]
\centering
\captionsetup{justification=centering}
\caption{IV Estimates of Agglomeration Elasticities by Field Group}
\begin{tabular}{lccccc}
\toprule
                       & Social Sciences & Humanities & Natural Sciences & Engineering       & Life Sciences \\
                       & (1)             & (2)        & (3)              & (4)               & (5)           \\ \midrule
\multicolumn{6}{l}{\textit{Panel A: Publications}}                                                           \\
Ln(Institution Size)   & 0.118           & 0.123      & 0.216            & 0.210***          & 0.292         \\
                       & (0.121)         & (0.198)    & (0.143)          & (0.066)           & (0.186)       \\
Ln(External Size)      & 0.082***        & -0.018     & 0.074            & 0.019             & 0.118***      \\
                       & (0.022)         & (0.033)    & (0.058)          & (0.046)           & (0.040)       \\ \addlinespace
\multicolumn{6}{l}{\textit{Panel B: Citations}}                                                              \\
Ln(Institution Size)   & 0.920**         & -0.637     & 0.685*           & 0.654***          & 1.516***      \\
                       & (0.377)         & (0.589)    & (0.352)          & (0.192)           & (0.515)       \\
Ln(External Size)      & 0.019           & -0.142     & 0.119            & -0.160            & 0.161         \\
                       & (0.074)         & (0.125)    & (0.135)          & (0.126)           & (0.117)       \\ \addlinespace
\multicolumn{6}{l}{\textit{Panel C: Patent Citations}}                                                       \\
Ln(Institution Size)   & 0.012           & -0.013     & 0.138*           & 0.003             & 0.499***      \\
                       & (0.025)         & (0.035)    & (0.071)          & (0.028)           & (0.169)       \\
Ln(External Size)      & -0.004          & -0.007     & -0.008           & -0.024            & -0.003         \\
                       & (0.004)         & (0.008)    & (0.028)          & (0.026)           & (0.038)       \\ \midrule
Observations           & 118,826         & 21,133     & 173,316          & 152,814           & 332,459       \\
Kleibergen-Paap Wald F & 8.48            & 2.37       & 4.01             & 20.45             & 4.66          \\
\bottomrule  \noalign{\vskip 0.1in}
\multicolumn{6}{l}{%
\begin{minipage}{16.5cm}%
\footnotesize{
\textit{Note:} This table reports IV estimates of Equation \ref{eq:agg_reg} using the IV sample separately for each broad field group. The dependent variable is IHS publications in Panel A, IHS citations in Panel B, and IHS patent citations in Panel C. The sample consists of author-year observations in three periods: 1995, 2005, and 2015, where each period pools the focal year and the two preceding years. All columns include the baseline fixed effects from Equation \ref{eq:agg_reg} and the author and coauthor direct-exposure controls. Standard errors are in parentheses and clustered at the institution level: *** $p< 0.01$, ** $p< 0.05$, * $p< 0.1$.
}
\end{minipage}}%
\end{tabular}
\label{table:result_IV_field}
\end{table}

\clearpage 

\begin{table}[htbp]
\centering
\begin{threeparttable}
\caption{Counterfactual Reallocations of Researchers: Bilateral Cases}
\label{table:case_studies}
\begin{tabular}{cccc ccc}
\toprule
\multicolumn{4}{c}{Knowledge Production Changes (\%)} & & & \\
\cmidrule(lr){1-4}
Origin & Dest. & & Net & Access & Break-Even & Implied Access\ \\
Stayers & Incumb. & Movers & Change & Calibration & $\theta^{\ast}$ & Elast.\ $s^{N}\theta^{\ast}\bar{\lambda}$ \\
\midrule
\multicolumn{7}{l}{\textit{Panel A: Boston $\rightarrow$ Dallas--Fort Worth}}\\[2pt]
$-1.5\%$ & $+8.65\%$ & $-4.91\%$ & $-0.59\%$ & \quad $\bar\lambda=0.2$ & 0.482 & 0.0193 \\
 & & & & \quad $\bar\lambda=0.5$ & 0.186 & 0.0186 \\
 & & & & \quad $\bar\lambda=0.7$ & 0.132 & 0.0185 \\
\addlinespace
\multicolumn{7}{l}{\textit{Panel B: San Francisco $\rightarrow$ Las Vegas}}\\[2pt]
$-0.78\%$ & $+15.54\%$ & $-5.33\%$ & $-0.56\%$ & \quad $\bar\lambda=0.2$ & 0.462 & 0.0185 \\
 & & & & \quad $\bar\lambda=0.5$ & 0.178 & 0.0178 \\
 & & & & \quad $\bar\lambda=0.7$ & 0.126 & 0.0176 \\
\bottomrule
\end{tabular}
\begin{tablenotes}[flushleft]
\footnotesize
\item \emph{Notes:} Each panel reallocates researchers within one origin–destination MSA pair, with $\kappa=0.1$. Researchers are assigned following the procedure in Section \ref{section:reallocation_rule}, and results are averaged across 50 simulations. The first three columns report percentage changes in output among researchers remaining in the origin, incumbent researchers in the destination, and movers, each relative to that group’s baseline output. ``Net Change" reports the percentage change in aggregate flow research output. The change in knowledge production depends only on research output, so the first four columns are invariant to the access calibration. For each value of $\bar{\lambda}$, the break-even $\theta^{\ast}$ solves $\Delta \ln Y=0$, and the final column reports the corresponding access elasticity $s^{N}\theta^{\ast}\bar{\lambda}$. Parameter calibrations are described in Section \ref{section:calibration}: $s^{N}=0.2$, $\eta=0.5$, $\delta=0.15$, $\alpha^{I}=0.199$, $\alpha^{E}=0.074$.
\end{tablenotes}
\end{threeparttable}
\end{table}

\clearpage

\begin{table}[htbp]
\centering
\begin{threeparttable}
\caption{Counterfactual Reallocations of Researchers: Across Metropolitan Areas}
\label{table:systematic}
\begin{tabular}{cccc ccc}
\toprule
\multicolumn{4}{c}{Knowledge Production Changes (\%)} & & & \\
\cmidrule(lr){1-4}
Origin & Dest. & & Net & Access & Break-Even & Implied Access \\
Stayers & Incumb. & Movers & Change & Calibration & $\theta^{\ast}$ & Elast.\ $s^{N}\theta^{\ast}\bar{\lambda}$ \\
\midrule
\multicolumn{7}{l}{\textit{Panel A: Top 50 MSAs (10 Highest- to 10 Lowest-Researchers-per-Capita MSAs)}}\\[2pt]
$-0.99\%$ & $+8.78\%$ & $-1.42\%$ & $-0.32\%$ & \quad $\bar\lambda=0.2$ & 0.245 & 0.0098 \\
 & & & & \quad $\bar\lambda=0.5$ & 0.096 & 0.0096 \\
 & & & & \quad $\bar\lambda=0.7$ & 0.069 & 0.0096 \\
 & & & & \quad $\bar\lambda=0.5$, w/ migration & 0.120 & 0.0120 \\
\addlinespace
\multicolumn{7}{l}{\textit{Panel B: All MSAs with Research Institutions}}\\[2pt]
$-1.94\%$ & $+6.13\%$ & $-14.54\%$ & $-1.14\%$ & \quad $\bar\lambda=0.2$ & 1.075 & 0.0430 \\
 & & & & \quad $\bar\lambda=0.5$ & 0.393 & 0.0393 \\
 & & & & \quad $\bar\lambda=0.7$ & 0.277 & 0.0388 \\
\bottomrule
\end{tabular}
\begin{tablenotes}[flushleft]
\footnotesize
\item \emph{Notes:} Panel A reallocates researchers from the ten MSAs with the highest researchers per capita to the ten with the lowest among the 50 most populous MSAs. Panel B extends the reallocation to all MSAs that host research institutions. In both panels, $\kappa = 0.1$; assignments follow Section \ref{section:reallocation_rule} and results are averaged across 50 simulations. The first three columns report percentage changes in output among researchers remaining in the origin, incumbent researchers in the destination, and movers, each relative to that group’s baseline output. ``Net Change" reports the percentage change in aggregate flow research output. The change in knowledge production depends only on research output, so the first four columns are invariant to the access calibration. For each value of $\bar{\lambda}$, the break-even $\theta^{\ast}$ solves $\Delta \ln Y=0$, and the final column reports the corresponding access elasticity $s^{N}\theta^{\ast}\bar{\lambda}$. The migration row in Panel A allows non-researchers to relocate as described in Appendix \ref{app:nonresearcher}.  Parameter calibrations for Panel A are the same as for Table \ref{table:case_studies}. Panel B re-estimates and applies the agglomeration coefficients using $\ln(N+5)$ to make the transformation less curved near zero, reducing the
influence of changes involving only a few researchers: $\alpha^{I}=0.537$, $\alpha^{E}=0.043$ (Table \ref{table:result_IV_zero} Column 2).
\end{tablenotes}
\end{threeparttable}
\end{table}

\begin{table}[htbp]
\centering
\begin{threeparttable}
\caption{Counterfactual Reallocations of Researchers: Alternative Specifications ($\bar{\lambda}=0.5$)}
\label{tab:alternatives}
\begin{tabular}{cccc lcc}
\toprule
\multicolumn{4}{c}{Knowledge Production Changes (\%)} & & & \\
\cmidrule(lr){1-4}
Origin & Dest. & & Net &   & Break-even & Implied Access \\
Stayers & Incumb. & Movers & Change & \multicolumn{1}{c}{Access Measure} & $\theta^{\ast}$ & Elast.\ $s^{N}\theta^{\ast}\bar{\lambda}$ \\
\midrule
\multicolumn{7}{l}{\textit{Panel A: Alternative Measures of Knowledge Access}}\\[2pt]
$-0.99\%$ & $+8.78\%$ & $-1.42\%$ & $-0.32\%$ & \quad Researchers per capita     & 0.096 & 0.0096 \\
 & & & & \quad Publications per capita & 0.068 & 0.0068 \\
 & & & & \quad Citations per capita       & 0.034 & 0.0034 \\
\addlinespace
\multicolumn{7}{l}{\textit{Panel B: Publication-Based Agglomeration}}\\[2pt]
$-0.55\%$ & $+6.24\%$ & $+1.07\%$ & $-0.10\%$ & \quad Researchers per capita     & 0.031 & 0.0031 \\
 & & & & \quad Publications per capita & 0.024 & 0.0024 \\
 & & & & \quad Citations per capita       & 0.008 & 0.0008 \\
\addlinespace
\multicolumn{7}{l}{\textit{Panel C: Reallocation Toward the College-Educated Population}}\\[2pt]
$-0.87\%$ & $+6.33\%$ & $-3.97\%$ & $-0.29\%$ & \quad Researchers per capita     & 0.117 & 0.0117 \\
 & & & & \quad Publications per capita & 0.086 & 0.0086 \\
 & & & & \quad Citations per capita       & 0.042 & 0.0042 \\
\addlinespace
\multicolumn{7}{l}{\textit{Panel D: Reallocation of STEM Researchers Toward the College-Educated Population}}\\[2pt]
$-0.85\%$ & $+6.40\%$ & $-4.04\%$ & $-0.30\%$ & \quad Researchers per capita     & 0.121 & 0.0121 \\
 & & & & \quad Publications per capita & 0.090 & 0.0090 \\
 & & & & \quad Citations per capita       & 0.043 & 0.0043 \\
\bottomrule
\end{tabular}
\begin{tablenotes}[flushleft]
\footnotesize
\item \emph{Notes:} All panels reallocate researchers from the ten MSAs with the highest researchers per capita to the ten with the lowest among the 50 most populous MSAs, with $\kappa = 0.1$ and $\bar{\lambda}=0.5$. Panel A considers the same reallocation as Table \ref{table:systematic} Panel A, but allows local access to depend on publications or citations per capita, in addition to researchers per capita. Panel B measures cluster size by paper counts (rather than author counts) and applies the corresponding agglomeration estimates reported in Table \ref{table:result_IV_cluster}. Panel C defines researcher surpluses, deficits, and knowledge access relative to the college-educated population. Panel D repeats the exercise in Panel C but reallocates STEM researchers only. Within each panel, the rows measure knowledge access using researcher counts, publications, or citations per capita. The first three columns report percentage changes in output among researchers remaining in the origin, incumbent researchers in the destination, and movers, each relative to that group’s baseline output. ``Net Change" reports the percentage change in aggregate flow research output. The change in knowledge production depends only on research output, so the first four columns are invariant to the access calibration. The break-even $\theta^{\ast}$ solves $\Delta \ln Y=0$, and the final column reports the corresponding access elasticity $s^{N}\theta^{\ast}\bar{\lambda}$. Parameter calibrations for Panels A, C, and D are the same as for Table \ref{table:case_studies}. For Panel B: $\alpha^{I}=0.101$, $\alpha^{E}=0.049$.
\end{tablenotes}
\end{threeparttable}
\end{table}

\clearpage
\begin{singlespace}
\bibliography{reference}

@article{chandra2025person,
  title={{Person and Place Effects in Scientific Discovery}},
  author={Chandra, Amitabh and Xu, Connie},
  year={2025},
  journal={Working Paper},
  publisher={National Bureau of Economic Research Working Paper}
}

@article{chen2024logs,
  title={{Logs with Zeros? Some Problems and Solutions}},
  author={Chen, Jiafeng and Roth, Jonathan},
  journal={The Quarterly Journal of Economics},
  volume={139},
  number={2},
  pages={891--936},
  year={2024},
  publisher={Oxford University Press}
}

@article{wang2020microsoft,
  title={{Microsoft Academic Graph: {When} Experts are Not Enough}},
  author={Wang, Kuansan and Shen, Zhihong and Huang, Chiyuan and Wu, Chieh-Han and Dong, Yuxiao and Kanakia, Anshul},
  journal={Quantitative Science Studies},
  volume={1},
  number={1},
  pages={396--413},
  year={2020},
  publisher={MIT Press One Rogers Street, Cambridge, MA 02142-1209, USA}
}

@article{visser2021large,
  title={{Large-Scale Comparison of Bibliographic Data Sources: {Scopus}, {Web} of {Science}, {Dimensions}, {Crossref}, and {Microsoft Academic}}},
  author={Visser, Martijn and Van Eck, Nees Jan and Waltman, Ludo},
  journal={Quantitative Science Studies},
  volume={2},
  number={1},
  pages={20--41},
  year={2021},
  publisher={MIT Press One Rogers Street, Cambridge, MA 02142-1209, USA}
}

@article{emanuel2023power,
  title={{The Power of Proximity}},
  author={Emanuel, Natalia and Harrington, Emma and Pallais, Amanda},   
  journal={Working Paper},
  year={2023}
}

@article{belenzon2013spatial,
  title={{Spatial Heterogeneity in the Diffusion of Technologies: Evidence from Patent Citations}},
  author={Belenzon, Sharon and Schankerman, Mark},
  journal={Journal of Industrial Economics},
  volume={61},
  number={3},
  pages={733--758},
  year={2013},
  publisher={Wiley Online Library},
  doi={10.1111/joie.12025}
}

@article{belenzon2013spreading,
  title   = {Spreading the Word: Geography, Policy, and Knowledge Spillovers},
  author  = {Belenzon, Sharon and Schankerman, Mark},
  journal = {The Review of Economics and Statistics},
  year    = {2013},
  volume  = {95},
  number  = {3},
  pages   = {884--903},
  doi     = {10.1162/REST_a_00334}
}

@article{carlino2015agglomeration,
  title={{Agglomeration and Innovation}},
  author={Carlino, Gerald and Kerr, William R},
  journal={Handbook of Regional and Urban economics},
  volume={5},
  pages={349--404},
  year={2015},
  publisher={Elsevier}
}

@article{guzman2024go,
  title={{Go West Young Firm: {T}he Impact of Startup Migration on the Performance of Migrants}},
  author={Guzman, Jorge},
  journal={Management Science},
  volume={70},
  number={7},
  pages={4824--4846},
  year={2024},
  publisher={INFORMS}
}

@article{roca_puga2017,
  title={{Learning by Working in Big Cities}},
  author={De La Roca, Jorge and Puga, Diego},
  journal={Review of Economic Studies},
  volume={84},
  number={1},
  pages={106–142},
  year={2017}
}

@article{glaeser2001,
  title={{Cities and Skills}},
  author={Glaeser, Edward and Mare, David},
  journal={Journal of Labor Economics},
  volume={19},
  number={2},
  pages={316-342},
  year={2001}
}

@article{milliondollarplant,
  title={{Identifying Agglomeration Spillovers: Evidence from Winners and Losers of Large Plant Openings}},
  author={Greenstone, Michael and Hornbeck, Richard and Moretti, Enrico},
  journal={Journal of Political Economy},
  volume={118},
  number={3},
  pages={536-598},
  year={2010}
}

@article{dcosta_overman2014,
  title={{The Urban Wage Growth Premium: Sorting or Learning?} },
  author={D'Costa, Sabine and Overman, Henry},
  journal={Regional Science and Urban Economics},
  volume={48},
  pages={168-179},
  year={2014}
}

@article{eckertwalsh2022,
  title={{The Return to Big City Experience: Evidence from Refugees in Denmark}},
  author={Eckert, Fabian and Hejlesen, Mads and Walsh, Conor},
  journal={Journal of Urban Economics},
  number={103454},
  year={2022}
}

@article{rosenthal_strange2003,
  title={{Geography, Industrial Organization, and Agglomeration}},
  author={Rosenthal, Stuart and Strange, William},
  journal={Review of Economics and Statistics},
  volume={85},
  number={2},
  pages={377–393},
  year={2003}
}

@article{rosenthal_strange2008,
  title={{The Attenuation of Human Capital Spillovers}},
  author={Rosenthal, Stuart and Strange, William},
  journal={Journal of Urban Economics},
  volume={64},
  pages={373-389},
  year={2008}
}

@article{ellisonglaeserkerr2010,
  title={{What Causes Industry Agglomeration? Evidence from Coagglomeration Patterns}},
  author={Ellison, Glenn and Glaeser, Edward and Kerr, William},
  journal={American Economic Review},
  volume={100},
  number={3},
  pages={1195-1213},
  year={2010}
}

@article{diamond2016,
  title={{The Determinants and Welfare Implications of US Workers' Diverging Location Choices by Skill: 1980-2000}},
  author={Diamond, Rebecca},
  journal={American Economic Review},
  volume={106},
  number={3},
  pages={479-524},
  year={2016}
}

@article{glaeser1999,
  title={{Learning in Cities}},
  author={Glaeser, Edward},
  journal={Journal of Urban Economics},
  volume={46},
  number={2},
  pages={254-277},
  year={1999}
}

@article{charlot_duranton2004,
  title={{Communication Externalities in Cities}},
  author={Charlot, Sylvie and Duranton, Gilles},
  journal={Journal of Urban Economics},
  volume={56},
  number={3},
  pages={581–613},
  year={2004}
}

@article{jaffe1993,
  title={{Geographic Localization of Knowledge Spillovers as Evidenced by Patent Citations}},
  author={Jaffe, Adam and Trajtenberg, Manuel and Henderson, Rebecca},
  journal={Quarterly Journal of Economics},
  volume={108},
  number={3},
  pages={577-598},
  year={1993}
}

@article{gaubert2018,
  title={{Firm Sorting and Agglomeration}},
  author={Gaubert, Cecile},
  journal={American Economic Review},
  volume={108},
  number={11},
  pages={3117-3153},
  year={2018}
}

@article{baumsnow_pavan2024,
  title={Local {P}roductivity {S}pillovers},
  author={Baum-Snow, Nathaniel and Gendron-Carrier, Nicolas and Pavan, Ronni},
  journal={American Economic Review},
  year={2024},
  volume={114},
  number={4},
  pages={1030--69}
}

@article{dingel_davis_2019,
  title={{A Spatial Knowledge Economy}},
  author={Davis, Donald and Dingel, Jonathan},
  journal={American Economic Review},
  volume={109},
  number={1},
  pages={153-170},
  year={2019}
}

@article{moretti2021,
  title={{The Effect of High-Tech Clusters on the Productivity of Top Inventors}},
  author={Enrico Moretti},
  journal={American Economic Review},
  volume={111},
  number={10},
  pages={3328–3375},
  year={2021}
}

@article{atkin2022,
  title={The {R}eturns to {F}ace-to-{F}ace {I}nteractions: {K}nowledge {S}pillovers in {S}ilicon {V}alley},
  author={David Atkin and Keith Chen and Anton Popov},
  journal={Working Paper},
  year={2022}
}

@article{Arzaghi2008,
  title={Networking off {M}adison {A}venue},
  author={Mohammad Arzaghi and Vernon Henderson},
  journal={Review of Economic Studies},
  volume={75},
  number={4},
  pages={1011–1038},
  year={2008}
}

@article{dingel_medical2023,
  title={Market {S}ize and {T}rade in {M}edical {S}ervices},
  author={Jonathan Dingel and Josh Gottlieb and Maya Lozinski and Pauline Mourot},
  journal={Working Paper},
  year={2023}
}

@article{biasi2023,
  title={The {E}ducation-{I}nnovation {G}ap},
  author={Barbara Biasi and Song Ma},
  journal={Working Paper},
  year={2023}
}

@article{mas_compile,
  title={{An Overview of Microsoft Academic Service (MAS) and Applications}},
  author={Arnab Sinha and Zhihong Shen and Yang Song and Hao Ma and Darrin Eide and Bo-June Hsu and Kuansan Wang},
  journal={WWW'15 Companion: Proceedings of the 24th International Conference on World Wide Web},
  pages={243-246},
  year={2015}
}

@article{emakg,
  title={{EMAKG: An Enriched Version of the Microsoft Academic Knowledge Graph}},
  author={Pollacci, Laura},
  journal={https://github.com/LauraPollacci/EMAKG},
  year={2022}
}

@article{reliance_on_science,
  title={{Reliance on Science: Worldwide Front-Page Patent Citations to Scientific Articles}},
  author={Matt Marx and Aaron Fuegi},
  journal={Strategic Management Journal},
  volume={41},
  number={9},
  pages={1572--1594},
  year={2020}
}

@incollection{card1995distance,
  title={{Using Geographic Variation in College Proximity to Estimate the Return to Schooling}},
  author={David Card},
  booktitle={Aspects of Labour Economics: Essays in Honour of John Vanderkamp, edited by Louis Christofides, E. Kenneth Grant and Robert Swindinsky},
  year={1995},
  publisher={University of Toronto Press}
}

@article{rouse1995cc,
  title={{Democratization or Diversion? {T}he Effect of Community Colleges on Educational Attainment}},
  author={Cecilia Rouse},
  journal={Journal of Business and Economic Statistics},
  volume={13},
  number={2},
  pages={217-224},
  year={1995},
}

@article{black2020access,
  title={{Apply Yourself: {R}acial and Ethnic Differences in College Application}},
  author={Sandra Black and Kalena Cortes and Jane Arnold Lincove},
  journal={Education Finance and Policy},
  volume={15},
  number={2},
  pages={209-240},
  year={2020},
}

@article{fu2022access,
  title={{Students’ Heterogeneous Preferences and the Uneven Spatial Distribution of Colleges}},
  author={Chao Fu and Junjie Guo and Adam Smith and Alan Sorensen},
  journal={Journal of Monetary Economics},
  volume={129},
  pages={49-64},
  year={2022},
}

@article{mountjoy2022cc,
  title={{Community Colleges and Upward Mobility}},
  author={Jack Mountjoy},
  journal={American Economic Review},
  volume={112},
  number={8},
  pages={2580-2630},
  year={2022},
}

@inproceedings{acton2024access,
  title={{Distance to Opportunity: {H}igher Education Deserts and College Enrollment Choices}},
  author={Riley Acton and Kalena Cortes and Camila Morales},
  booktitle={Financing Institutions of Higher Education, Chapter 2},
  year={2024}
}

@article{ishimaru2024geographic,
  title={{Geographic Mobility of Youth and Spatial Gaps in Local College and Labor Market Opportunities}},
  author={Ishimaru, Shoya},
  journal={Journal of Labor Economics},
  volume={43},
  number={4},
  pages={1251--1294},
  year={2025},
  publisher={The University of Chicago Press Chicago, IL}
}

@article{fabre2023geography,
  title={{The Geography of Higher Education and Spatial Inequalities}},
  author={Fabre, Ana{\i}s},
  year={2023},
  journal={Working Paper}
}

@article{duranton2005testing,
  title={{Testing for Localization using Micro-Geographic Data}},
  author={Duranton, Gilles and Overman, Henry G},
  journal={The Review of Economic Studies},
  volume={72},
  number={4},
  pages={1077--1106},
  year={2005},
  publisher={Wiley-Blackwell}
}

@article{kantor2014knowledge,
  title={{Knowledge Spillovers from Research Universities: {E}vidence from Endowment Value Shocks}},
  author={Kantor, Shawn and Whalley, Alexander},
  journal={Review of Economics and Statistics},
  volume={96},
  number={1},
  pages={171--188},
  year={2014},
  publisher={The MIT Press}
}

@article{kantor2019research,
  title={{Research Proximity and Productivity: {L}ong-Term Evidence from Agriculture}},
  author={Kantor, Shawn and Whalley, Alexander},
  journal={Journal of Political Economy},
  volume={127},
  number={2},
  pages={819--854},
  year={2019},
  publisher={The University of Chicago Press Chicago, IL}
}

@article{hausman2022university,
  title={{University Innovation and Local Economic Growth}},
  author={Hausman, Naomi},
  journal={Review of Economics and Statistics},
  volume={104},
  number={4},
  pages={718--735},
  year={2022},
  publisher={MIT Press One Rogers Street, Cambridge, MA 02142-1209, USA journals-info~…}
}

@techreport{lerner2024wandering,
  title={{The Wandering Scholars: {U}nderstanding the Heterogeneity of University Commercialization}},
  author={Lerner, Josh and Manley, Henry and Stein, Carolyn and Williams, Heidi},
  year={2024},
  institution={National Bureau of Economic Research}
}

@article{jaff1989academic,
  title={{Real Effects of Academic Research}},
  author={Adam Jaffe},
  journal={American Economic Review},
  volume={79},
  number={5},
  pages={957--970},
  year={1989},
  publisher={American Economic Association 2014 Broadway, Suite 305, Nashville, TN 37203}
}

@article{waldinger2012peer,
  title={{Peer Effects in Science: {E}vidence from the Dismissal of Scientists in {N}azi {G}ermany}},
  author={Waldinger, Fabian},
  journal={Review of Economic Studies},
  volume={79},
  number={2},
  pages={838--861},
  year={2012},
  publisher={Oxford University Press}
}

@article{finkelstein2016sources,
  title={{Sources of Geographic Variation in Health Care: {E}vidence from Patient Migration}},
  author={Finkelstein, Amy and Gentzkow, Matthew and Williams, Heidi},
  journal={Quarterly Journal of Economics},
  volume={131},
  number={4},
  pages={1681--1726},
  year={2016},
  publisher={MIT Press}
}

@article{rossi_hansberg2023,
  title={{Cognitive Hubs and Spatial Redistribution}},
  author={Esteban Rossi-Hansberg and Sarte, Pierre-Daniel and Schwartzman, Felipe},
  journal={American Economic Journal: Macroeconomics},
  volume={18},
  number={2},
  pages={72--111},
  year={2026},
  publisher={American Economic Association}
}

@article{mertens2024,
  title={{The Returns to Government R\&D: Evidence from U.S. Appropriations Shocks}},
  author={Andrew Fieldhouse and Karel Mertens},
  journal={Working Paper},
  year={2024}
}

@article{moretti_return2004,
  title={{Estimating the Social Return to Higher Education: Evidence from Longitudinal and Repeated Cross-Sectional Data}},
  author={Moretti, Enrico},
  journal={Journal of Econometrics},
  volume={121},
  number={1-2},
  pages={175--212},
  year={2004},
  publisher={Elsevier}
}

@article{van_reenen2019,
  title={{The Economic Impact of Universities: Evidence from Across the Globe}},
  author={Valero, Anna  and Van Reenen, John },
  journal={Economics of Education Review},
  volume={68},
  pages={53-67},
  year={2019}
}

@article{aghion2009causal,
  title={{The Causal Impact of Education on Economic Growth: Evidence from the United States}},
  author={Aghion, Philippe and Boustan, Leah and Hoxby, Caroline and Vandenbussche, J{\'e}r{\^o}me},
  journal={Brookings Papers on Economic Activity},
  volume={2009},
  number={2},
  pages={1--73},
  year={2009},
  publisher={Brookings Institution Press}
}

@article{helmers2017my,
  title={{My Precious! {T}he Location and Diffusion of Scientific Research: {E}vidence from the Synchrotron Diamond Light Source}},
  author={Helmers, Christian and Overman, Henry G},
  journal={The Economic Journal},
  volume={127},
  number={604},
  pages={2006--2040},
  year={2017},
  publisher={Oxford University Press Oxford, UK}
}

@article{bosquet2017sorting,
  title={{Sorting and Agglomeration Economies in {F}rench Economics Departments}},
  author={Bosquet, Cl{\'e}ment and Combes, Pierre-Philippe},
  journal={Journal of Urban Economics},
  volume={101},
  pages={27--44},
  year={2017},
  publisher={Elsevier}
}

@article{azoulay2010superstar,
  title={{Superstar Extinction}},
  author={Azoulay, Pierre and Graff Zivin, Joshua S and Wang, Jialan},
  journal={The Quarterly Journal of Economics},
  volume={125},
  number={2},
  pages={549--589},
  year={2010},
  publisher={MIT Press}
}

@misc{nhgis,
  author       = {Manson, Steven and Schroeder, Jonathan and Van Riper, David and Kugler, Tracy and Ruggles, Steven},
  title        = {{IPUMS National Historical Geographic Information System: Version 17.0}},
  year         = {2022},
  publisher    = {IPUMS},
  address      = {Minneapolis, MN},
  howpublished = {\url{https://www.nhgis.org}},
  doi          = {10.18128/D050.V17.0}
}

@misc{cbp2021,
  author       = {{U.S. Census Bureau}},
  title        = {{County Business Patterns, 2021}},
  year         = {2023},
  howpublished = {\url{https://www.census.gov/programs-surveys/cbp.html}},
  note         = {U.S. Department of Commerce}
}

@misc{nara_morrill_1862,
  author       = {{National Archives}},
  title        = {Morrill Act (1862)},
  howpublished = {U.S. National Archives, Milestone Documents},
  year         = {1862},
  url          = {https://www.archives.gov/milestone-documents/morrill-act}
}

@misc{aplu_landgrant_tradition,
  author       = {{Association of Public and Land-Grant Universities}},
  title        = {{The Land-Grant Tradition}},
  howpublished = {APLU report},
  year         = {2012},
  url          = {https://www.aplu.org/wp-content/uploads/the-land-grant-tradition.pdf}
}

@misc{nea_landgrant_overview,
  author       = {{National Education Association}},
  title        = {{Land Grant Institutions: An Overview}},
  howpublished = {NEA brief},
  year         = {2022},
  url          = {https://www.nea.org/sites/default/files/2022-03/Land%20Grant%20Institutions%20-%20An%20Overview.pdf}
}

@article{hesse1993returns,
  title={{Returns to Science: {C}omputer Networks in Oceanography}},
  author={Hesse, Bradford W and Sproull, Lee S and Kiesler, Sara B and Walsh, John P},
  journal={Communications of the ACM},
  volume={36},
  number={8},
  pages={90--101},
  year={1993},
  publisher={ACM New York, NY, USA}
}

@article{cohen1996computer,
  title={{Computer Mediated Communication and Publication Productivity Among Faculty}},
  author={Cohen, Joel},
  journal={Internet Research},
  volume={6},
  number={2/3},
  pages={41--63},
  year={1996},
  publisher={MCB UP Ltd}
}

@article{van1996could,
  title={{Could the {I}nternet Balkanize Science?}},
  author={Van Alstyne, Marshall and Brynjolfsson, Erik},
  journal={Science},
  volume={274},
  number={5292},
  pages={1479--1480},
  year={1996},
  publisher={American Association for the Advancement of Science}
}

@article{walsh1996computer,
  title={{Computer Networks and Scientific Work}},
  author={Walsh, John P and Bayma, Todd},
  journal={Social Studies of Science},
  volume={26},
  number={3},
  pages={661--703},
  year={1996},
  publisher={Sage Publications London}
}

@article{kaminer1998bibliometric,
  title={{Bibliometric Analysis of the Impact of Internet Use on Scholarly Productivity}},
  author={Kaminer, Noam and Braunstein, Yale M},
  journal={Journal of the American Society for Information Science},
  volume={49},
  number={8},
  pages={720--730},
  year={1998},
  publisher={Wiley Online Library}
}

@article{walsh2000connecting,
  title={{Connecting Minds: {C}omputer-Mediated Communication and Scientific Work}},
  author={Walsh, John P and Kucker, Stephanie and Maloney, Nancy G and Gabbay, Shaul},
  journal={Journal of the American Society for Information Science},
  volume={51},
  number={14},
  pages={1295--1305},
  year={2000},
  publisher={Wiley Online Library}
}

@article{hamermesh2002tools,
  title={{Tools or Toys? {T}he Impact of High Technology on Scholarly Productivity}},
  author={Hamermesh, Daniel S and Oster, Sharon M},
  journal={Economic Inquiry},
  volume={40},
  number={4},
  pages={539--555},
  year={2002},
  publisher={Wiley Online Library}
}

@article{rosenblat2004getting,
  title={{Getting Closer or Drifting Apart?}},
  author={Rosenblat, Tanya S and Mobius, Markus M},
  journal={The Quarterly Journal of Economics},
  volume={119},
  number={3},
  pages={971--1009},
  year={2004},
  publisher={MIT Press}
}

@article{adams2005scientific,
  title={Scientific teams and institutional collaborations: {E}vidence from {US} universities, 1981--1999},
  author={Adams, James D and Black, Grant C and Clemmons, J Roger and Stephan, Paula E},
  journal={Research Policy},
  volume={34},
  number={3},
  pages={259--285},
  year={2005},
  publisher={Elsevier}
}

@article{wuchty2007increasing,
  title={{The Increasing Dominance of Teams in Production of Knowledge}},
  author={Wuchty, Stefan and Jones, Benjamin F and Uzzi, Brian},
  journal={Science},
  volume={316},
  number={5827},
  pages={1036--1039},
  year={2007},
  publisher={American Association for the Advancement of Science}
}

@article{agrawal2008restructuring,
  title={{Restructuring Research: {C}ommunication Costs and the Democratization of University Innovation}},
  author={Agrawal, Ajay and Goldfarb, Avi},
  journal={American Economic Review},
  volume={98},
  number={4},
  pages={1578--1590},
  year={2008},
  publisher={American Economic Association}
}

@article{butler2008equalizing,
  title={{The Equalizing Effect of the Internet on Access to Research Expertise in Political Science and Economics}},
  author={Butler, Daniel M and Butler, Richard J and Rich, Jesse T},
  journal={PS: Political Science \& Politics},
  volume={41},
  number={3},
  pages={579--584},
  year={2008},
  publisher={Cambridge University Press}
}

@article{jones2008multi,
  title={{Multi-University Research Teams: {S}hifting Impact, Geography, and Stratification in Science}},
  author={Jones, Benjamin F and Wuchty, Stefan and Uzzi, Brian},
  journal={Science},
  volume={322},
  number={5905},
  pages={1259--1262},
  year={2008},
  publisher={American Association for the Advancement of Science}
}

@article{kim2009elite,
  title={{Are Elite Universities Losing Their Competitive Edge?}},
  author={Kim, E Han and Morse, Adair and Zingales, Luigi},
  journal={Journal of Financial Economics},
  volume={93},
  number={3},
  pages={353--381},
  year={2009},
  publisher={Elsevier}
}

@article{ding2010impact,
  title={{The Impact of Information Technology on Academic Scientists' Productivity and Collaboration Patterns}},
  author={Ding, Waverly W and Levin, Sharon G and Stephan, Paula E and Winkler, Anne E},
  journal={Management Science},
  volume={56},
  number={9},
  pages={1439--1461},
  year={2010},
  publisher={INFORMS}
}

@article{winkler2010diffusion,
  title={{The Diffusion of {IT} in Higher Education: {P}ublishing Productivity of Academic Life Scientists}},
  author={Winkler, Anne E and Levin, Sharon G and Stephan, Paula E},
  journal={Economics of Innovation and New Technology},
  volume={19},
  number={5},
  pages={481--503},
  year={2010},
  publisher={Taylor \& Francis}
}

@article{adams2011role,
  title={{The Role of Search in University Productivity: {I}nside, Outside, and Interdisciplinary Dimensions}},
  author={Adams, James D and Clemmons, J Roger},
  journal={Industrial and Corporate Change},
  volume={20},
  number={1},
  pages={215--251},
  year={2011},
  publisher={Oxford University Press}
}

@article{goldstein2024communication,
  title={{Communication Costs in Science: {E}vidence from the {National Science Foundation Network}}},
  author={Goldstein, Ezra G},
  journal={Industrial and Corporate Change},
  volume={33},
  number={4},
  pages={785--807},
  year={2024},
  publisher={Oxford University Press UK}
}

@article{krugman1980scale,
  title={{Scale Economies, Product Differentiation, and the Pattern of Trade}},
  author={Krugman, Paul R},
  journal={The American Economic Review},
  volume={70},
  number={5},
  pages={950--959},
  year={1980},
  publisher={JSTOR}
}

@techreport{hoxby2012endowment,
  author = {Hoxby, Caroline M.},
  title = {{Endowment Management Based on a Positive Model of the University}},
  year = {2012},
  institution = {National Bureau of Economic Research},
  type = {Working Paper},
  number = {18626}
}

@techreport{blair2021elite,
  author = {Blair, Peter Q. and Smetters, Kent},
  title = {{Why Don't Elite Colleges Expand Supply?}},
  year = {2021},
  institution = {National Bureau of Economic Research},
  type = {Working Paper},
  number = {29309}
}

@techreport{ehrenberg2003startup,
  author = {Ehrenberg, Ronald G. and Rizzo, Michael J. and Condie, Scott S.},
  title = {{Start-Up Costs in American Research Universities}},
  year = {2003},
  institution = {Cornell Higher Education Research Institute},
  type = {Working Paper},
  number = {33}
}

@article{howard2024universities,
  title={{Do Universities Improve Local Economic Resilience?}},
  author={Howard, Greg and Weinstein, Russell and Yang, Yuhao},
  journal={Review of Economics and Statistics},
  volume={106},
  number={4},
  pages={1129--1145},
  year={2024},
  publisher={MIT Press},
  doi={10.1162/rest_a_01212}
}

@article{jones1995,
  author  = {Jones, Charles I.},
  title   = {{R\&D-Based Models of Economic Growth}},
  journal = {Journal of Political Economy},
  year    = {1995},
  volume  = {103},
  number  = {4},
  pages   = {759--784},
  doi     = {10.1086/262002}
}

@article{romer1990,
  author  = {Romer, Paul M.},
  title   = {{Endogenous Technological Change}},
  journal = {Journal of Political Economy},
  year    = {1990},
  volume  = {98},
  number  = {5, Part 2},
  pages   = {S71--S102},
  doi     = {10.1086/261725}
}

@article{bloometal2020,
  author  = {Bloom, Nicholas and Jones, Charles I. and Van Reenen, John and Webb, Michael},
  title   = {{Are Ideas Getting Harder to Find?}},
  journal = {American Economic Review},
  year    = {2020},
  volume  = {110},
  number  = {4},
  pages   = {1104--1144},
  doi     = {10.1257/aer.20180338}
}

@article{goldsmith2020bartik,
  author  = {Goldsmith-Pinkham, Paul and Sorkin, Isaac and Swift, Henry},
  title   = {{Bartik Instruments: What, When, Why, and How}},
  journal = {American Economic Review},
  year    = {2020},
  volume  = {110},
  number  = {8},
  pages   = {2586--2624},
  doi     = {10.1257/aer.20181047}
}

@article{borusyak2022quasi,
  author  = {Borusyak, Kirill and Hull, Peter and Jaravel, Xavier},
  title   = {{Quasi-Experimental Shift-Share Research Designs}},
  journal = {Review of Economic Studies},
  year    = {2022},
  volume  = {89},
  number  = {1},
  pages   = {181--213},
  doi     = {10.1093/restud/rdab030}
}

@incollection{hall2010measuring,
  author    = {Hall, Bronwyn H. and Mairesse, Jacques and Mohnen, Pierre},
  title     = {Measuring the Returns to {R\&D}},
  booktitle = {Handbook of the Economics of Innovation},
  editor    = {Hall, Bronwyn H. and Rosenberg, Nathan},
  publisher = {North-Holland},
  address   = {Amsterdam},
  year      = {2010},
  volume    = {2},
  chapter   = {24},
  pages     = {1033--1082},
  doi       = {10.1016/S0169-7218(10)02008-3}
}

@article{coe_helpman1995,
  author  = {Coe, David T. and Helpman, Elhanan},
  title   = {International {R\&D} Spillovers},
  journal = {European Economic Review},
  volume  = {39},
  number  = {5},
  pages   = {859--887},
  year    = {1995}
}

@article{hsieh_moretti2019,
  author  = {Hsieh, Chang-Tai and Moretti, Enrico},
  title   = {{Housing Constraints and Spatial Misallocation}},
  journal = {American Economic Journal: Macroeconomics},
  volume  = {11},
  number  = {2},
  pages   = {1--39},
  year    = {2019},
  doi     = {10.1257/mac.20170388}
}

@book{pelfrey2004,
  author    = {Pelfrey, Patricia A.},
  title     = {A Brief History of the University of California},
  edition   = {2nd},
  publisher = {University of California Press},
  address   = {Berkeley},
  year      = {2004}
}

@book{merritt_lawrence2007,
  editor    = {Merritt, Karen and Lawrence, Jane Fiori},
  title     = {From Rangeland to Research University: The Birth of {University of California, Merced}},
  series    = {New Directions for Higher Education},
  number    = {139},
  publisher = {Jossey-Bass},
  address   = {San Francisco},
  year      = {2007}
}

@techreport{lao2024merced,
  author      = {{Legislative Analyst's Office}},
  title       = {{{UC Merced} at 20: Campus Developments and Key State-Level Takeaways}},
  institution = {California Legislative Analyst's Office},
  address     = {Sacramento},
  year        = {2024},
  month       = {November}
}

@article{thelwall2018does,
  title={{Does {M}icrosoft {A}cademic Find Early Citations?}},
  author={Thelwall, Mike},
  journal={Scientometrics},
  volume={114},
  number={1},
  pages={325--334},
  year={2018},
  publisher={Springer}
}

@article{martin2021google,
  title={{Google Scholar, Microsoft Academic, Scopus, Dimensions, Web of Science, and OpenCitations’ COCI: A Multidisciplinary Comparison of Coverage via Citations}},
  author={Mart{\'\i}n-Mart{\'\i}n, Alberto and Thelwall, Mike and Orduna-Malea, Enrique and Delgado L{\'o}pez-C{\'o}zar, Emilio},
  journal={Scientometrics},
  volume={126},
  number={1},
  pages={871--906},
  year={2021},
  publisher={Springer}
}

@techreport{airoldi2024inequality,
  title={{Inequality in Science: {W}ho Becomes a Star?}},
  author={Airoldi, Anna and Moser, Petra},
  year={2024},
  institution={National Bureau of Economic Research}
}

@article{kim2025women,
  title={{Women in Science. {L}essons from the {B}aby {B}oom}},
  author={Kim, Scott and Moser, Petra},
  journal={Econometrica},
  volume={93},
  number={5},
  pages={1521--1560},
  year={2025},
  publisher={Wiley Online Library}
}

@article{truffa2025undergraduate,
  title={{Undergraduate Gender Diversity and the Direction of Scientific Research}},
  author={Truffa, Francesca and Wong, Ashley},
  journal={American Economic Review},
  volume={115},
  number={7},
  pages={2414--2448},
  year={2025},
  publisher={American Economic Association 2014 Broadway, Suite 305, Nashville, TN 37203}
}

@article{koffi2026cassatts,
  title={{Cassatts in the Attic: {I}s There a Gender Gap in the Commercialization of Science}},
  author={Koffi, Marl{\`e}ne and Marx, Matt},
  journal={American Economic Journal: Applied Economics},
  volume={18},
  number={2},
  pages={266--298},
  year={2026},
  publisher={American Economic Association 2014 Broadway, Suite 305, Nashville, TN 37203-2425}
}
\bibliographystyle{apalike}
\end{singlespace}

\clearpage

\clearpage
\appendix
\setcounter{table}{0}
\setcounter{figure}{0}

\renewcommand{\thesection}{A\arabic{section}}
\renewcommand{\thetable}{A\arabic{table}}
\renewcommand{\thefigure}{A\arabic{figure}}

\section*{\hfil Appendix \hfil}

\section{Model Extension: Endogenous Location Choices}
\label{model_extension}

\subsection{Researchers' Location Choice}

For simplicity, researchers have homogeneous location preferences and value locations according to the following reduced-form indirect utility function:\footnote{In the stylized model, researchers have homogeneous location preferences, while non-researchers draw idiosyncratic location tastes that act as migration frictions. The asymmetry is deliberate: researcher location is the policy instrument we manipulate, not a response margin, whereas non-researchers' migration is the endogenous response at the center of the analysis. Productivity heterogeneity among researchers is consistent with this simplification: since individual productivity $D_i$ is separable from location productivity, it enters indirect utility as a location-invariant constant, so every researcher evaluates locations identically. We can therefore treat quality-adjusted researchers as solving a common location-choice problem.}
\begin{equation*}
    V^N_{j}=a_j(N_j) + b_j(N_j, M_j) - \beta^N_r r_j(M_j),
\end{equation*}
where 
\begin{equation*}
a_j(N_j) \equiv \ln A_j(N_j)=a_{0j} + \alpha \ln N_j, ~~~~a_{0j} \equiv \ln A_{0j}.
\end{equation*}
The term $b_j(N_j, M_j)$ captures local amenities for researchers, which include any non-productivity-related factors that affect researchers' valuation of a location; $r_j(M_j)$ is the local log rent, which reflects local cost of living. Research productivity may affect utility through several channels: More productive locations may offer higher salaries, greater funding opportunities, or stronger intrinsic research value and thus attract researchers through higher $a_j(N_j)$. The amenity term $b_j(N_j,M_j)$ captures both attractive and unattractive features of institutions and surrounding areas, such as institutional support and the availability of research positions. For simplicity, we assume that the local log rent $r_j(M_j)$ depends only on the non-research population, reflecting the assumption that researchers are a small share of the total population and therefore do not materially affect local living costs.

To capture spatial variation in the availability of research positions, we specify 
\begin{equation*}
  b_j(N_j,M_j)  = \beta^N_{0j} + \beta^N_{b} \ln\left( \frac{M_j}{N_j}\right),
\end{equation*}
where $\beta^N_{0j}$ captures exogenous variation in the availability of research positions and the desirability of research environments, driven by factors such as the historical presence of research universities, support for local institutions, and traditions of scholarly activity. The parameter $\beta^N_{b}$ governs how strongly the availability of research positions responds to local population. It is a reduced-form parameter that captures, for example, the responsiveness of local faculty positions to the population served by local universities. A larger $\beta^N_{b}$ implies that institutional demand for researchers is more responsive to the size of the local population. To ensure the existence of an interior equilibrium, we assume $\beta_b^N>\alpha$, so that researchers face net decreasing returns to scale in each location. Because researchers are assumed to move frictionlessly, their spatial distribution is determined by equalized utility across locations.

\subsection{Non-Researchers' Location Choice}

Non-researchers choose locations based on local wages, non-researcher amenities, and living costs. Based on the production function in each location, the local wage equals the marginal product of labor, which depends partially on local access to frontier knowledge. This is the key channel through which the spatial distribution of researchers affects non-researchers’ location choices.

Non-researchers are imperfectly mobile across locations. Their indirect utility is
\begin{equation*}
    V^M_{ij} = \left(1-s^N\right)s_{0j}+ s^{N} \ln\left[ S_0 + \left(\frac{N_j}{M_j} \right)^{\theta}\right] + b_j^M(M_j) - \beta^M_r r_{j}(M_j) + \varepsilon_{ij}, 
\end{equation*}
where the first two terms capture the location-varying component of log wages. Here, $s_{0j}\equiv \ln S_{0j}$ is the component of local productivity that is unrelated to exposure to frontier knowledge, while $S_0 + \left(N_j/M_j \right)^{\theta}$ measures access to frontier knowledge. The elasticity of local productivity with respect to $N_j/M_j$ is $s^{N} \theta \lambda_j$. The term $b_j^M(M_j)$ is location $j$’s amenity value for non-researchers, which may depend on the size of the local population (largely represented by the non-research population). As with researchers, non-researchers dislike higher log rent, with sensitivity governed by $\beta^M_r$. The term $\varepsilon_{ij}$ is an individual-specific idiosyncratic preference for location $j$, which equals $\sigma$ times a random variable drawn from the Type-I Extreme Value distribution. The parameter $\sigma$ governs the strength of idiosyncratic attachment to locations and therefore the degree of migration frictions.

The population of non-researchers in location $j$ is therefore
\begin{equation*}
    M_j=M \frac{\exp\left(\bar{V}_{j}/\sigma\right)}{\sum_{j'} \exp\left(\bar{V}_{j'}/\sigma \right)},
\end{equation*}
where $M$ is the national non-researcher population, and $\bar{V}_{j}$ is the deterministic component of indirect utility. 

The spatial equilibrium consists of the distributions $\{N_j, M_j\}$ and rents $r_j$ that clear local housing markets. With $J$ locations, the equilibrium and its comparative statics generally do not admit closed-form solutions. To illustrate how shocks to the spatial distribution of researchers affect knowledge access and knowledge production, Section \ref{two_locations} studies a tractable two-location case.

\subsection{Spatial Effects on Access to Frontier Knowledge: A Two-Location Analysis}
\label{two_locations}

To illustrate the model's implications, we consider a setting with two locations, 1 and 2. Location 1 resembles a city such as Boston: a long-established center of scholarship with a dense cluster of universities built up over centuries. Location 2 resembles a city such as Dallas: a fast-growing regional economic hub that has experienced substantial productivity and population growth in recent decades but lacks the same historical depth in higher education and research. Accordingly, we assume that location 1 has a higher location-specific amenity value for researchers, $\beta^N_{01}>\beta^N_{02}$, and that this advantage is sufficiently large to imply $N_1>N_2$. 

For tractability, we assume that local log rents take the form $r(M_j)=r_0 + e \ln (M_j)$, where $e$ is the inverse housing supply elasticity and is common across locations. To further simplify the equilibrium characterization, we set researchers' housing cost sensitivity to zero, $\beta^N_r=0$.\footnote{In this case, researchers' housing costs can be interpreted as being absorbed into the location-specific amenity value $\beta_{0j}^N$.} We also shut down endogenous amenities for non-researchers by setting $b^M_j(M_j)=0$.\footnote{We reintroduce these mechanisms in the quantitative evaluation of the trade-off later in the paper.}

Finally, because the local access shares $\lambda_j$ generally differ across locations, we evaluate these comparative statics at a common value $\bar\lambda$ when solving the system. This is exact at a symmetric equilibrium and a first-order approximation otherwise. 

\subsubsection{``Home-Market'' Effect}

A location with higher non-research productivity attracts more non-researchers. Through the endogenous provision of research positions, captured by $\beta^N_b$, a larger non-research population in turn attracts more researchers. In addition, because researcher productivity rises with local cluster size, a larger non-research population can generate a more than proportional increase in the number of researchers. As a result, the more productive and more populous location may also enjoy greater researcher access per capita.
This result echoes the home-market effect in trade models with agglomeration economies \citep{krugman1980scale} and is also related to the spatial concentration of medical services studied by \cite{dingel_medical2023}.\footnote{Places with larger non-research populations attract disproportionately more researchers, similar to the classic home-market result in which producers concentrate in locations with larger consumer markets. The mechanism here differs, however, because researchers generate local productivity spillovers rather than traded goods.} 

We summarize the effect of local non-research productivity, $s_{01}$, on the spatial distribution of researcher access in the following proposition.
\begin{proposition}
\label{prop_1}
Under Assumption \ref{assumption_1}, an increase in non-research productivity in location 1, $s_{01}$, raises $M_1$ and increases $N_1/M_1$ relative to $N_2/M_2$.
\end{proposition}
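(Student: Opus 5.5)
The plan is to reduce the two-location equilibrium to two equations in the log-differences $n \equiv \ln N_1 - \ln N_2$ and $m \equiv \ln M_1 - \ln M_2$, and then do comparative statics in $s_{01}$. The researcher condition gives a linear map from $m$ to $n$ that delivers the ``more than proportional'' response. The non-researcher logit condition then pins down how $m$ responds to $s_{01}$.

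\emph{Step 1 (researchers).} With $\beta^N_r=0$ and frictionless researcher mobility, equilibrium requires $V^N_1=V^N_2$:
\begin{equation*}
a_{01}+\alpha\ln N_1+\beta^N_{01}+\beta^N_b\ln(M_1/N_1) = a_{02}+\alpha\ln N_2+\beta^N_{02}+\beta^N_b\ln(M_2/N_2).
\end{equation*}
Rearranging gives
\begin{equation*}
(\beta^N_b-\alpha)\, n = (a_{01}-a_{02}) + (\beta^N_{01}-\beta^N_{02}) + \beta^N_b\, m .
\end{equation*}
Hence
\begin{equation*}
n-m = \text{const} + \frac{\alpha}{\beta^N_b-\alpha}\, m .
\end{equation*}
Since $\beta^N_b>\alpha>0$, the relative researcher density $\ln\big[(N_1/M_1)/(N_2/M_2)\big]=n-m$ is strictly increasing in $m$. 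The constant does not involve $s_{01}$. So the second claim of the proposition follows once I show $dm/ds_{01}>0$.

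\emph{Step 2 (non-researchers).} Logit choice implies $\sigma m = \bar V_1-\bar V_2$. Set $b^M_j=0$ and $r_j=r_0+e\ln M_j$, so that $r_0$ cancels. Then
\begin{equation*}
\sigma m = (1-s^N)(s_{01}-s_{02}) + s^N\big[\ln S_1-\ln S_2\big] - \beta^M_r e\, m .
\end{equation*}
I then differentiate, using $d\ln S_j=\theta\lambda_j\, d\ln(N_j/M_j)$ and evaluating at the common $\bar\lambda$ as stated in the text. This gives $d(\ln S_1-\ln S_2)=\theta\bar\lambda\, d(n-m)$. Substituting $d(n-m)=\frac{\alpha}{\beta^N_b-\alpha}dm$ from Step 1 yields
\begin{equation*}
\Big[\sigma+\beta^M_r e - s^N\theta\bar\lambda\frac{\alpha}{\beta^N_b-\alpha}\Big]\, dm = (1-s^N)\, ds_{01}.
\end{equation*}

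\emph{Step 3 (signs).} The step I expect to be the main obstacle is signing the bracket. The feedback term $s^N\theta\bar\lambda\,\alpha/(\beta^N_b-\alpha)$ captures a loop: more non-researchers attract disproportionately more researchers, which raises access and wages and attracts still more non-researchers. This loop could in principle overturn stability. I expect Assumption \ref{assumption_1} to be precisely the condition that congestion and dispersion forces dominate, i.e. $\sigma+\beta^M_r e > s^N\theta\bar\lambda\,\alpha/(\beta^N_b-\alpha)$. This is also the local uniqueness/stability condition of the equilibrium. I would invoke it directly, checking the form of the assumption and adapting the expression if it is stated differently.

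Under that condition, $dm/ds_{01}>0$, because $0<s^N<1$. Total non-researcher population $M$ is fixed, and $M_1=M e^{m}/(1+e^{m})$ is increasing in $m$, so $M_1$ rises. By Step 1, $d(n-m)/ds_{01}=\frac{\alpha}{\beta^N_b-\alpha}\,dm/ds_{01}>0$, so $N_1/M_1$ rises relative to $N_2/M_2$.

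A remark on the $\bar\lambda$ evaluation: it is exact at a symmetric point and first-order otherwise, as already noted in the text. I would state the result in that local sense.
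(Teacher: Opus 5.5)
Your proposal is correct and follows essentially the same route as the paper: the paper totally differentiates the same researcher utility-equalization and non-researcher logit conditions, evaluates the access elasticities at $\bar\lambda$, and solves the resulting $2\times 2$ linear system jointly, whereas you first eliminate $n$ by substitution. Multiplying your bracket by $\beta^N_b-\alpha>0$ gives exactly the Assumption~\ref{assumption_1} expression $(\beta^N_b-\alpha)(\sigma+\beta^M_r e)-\alpha s^N\theta\bar\lambda$, so your $dm/ds_{01}$ and $d(n-m)/ds_{01}$ reduce to the paper's $(1-s^N)(\beta^N_b-\alpha)$ and $\alpha(1-s^N)$ over that expression, and your observation that $n-m$ is exactly (not just locally) increasing in $m$ makes explicit a point the joint solve leaves implicit.
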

The proof of Proposition \ref{prop_1} is provided in Appendix \ref{appendix_s01}.

An increase in $s_{01}$ attracts non-researchers to location 1 through higher wages. The larger local population expands research positions, while agglomeration economies further attract researchers. Researcher employment therefore rises more than proportionally, increasing researchers per capita in location 1 relative to location 2. The model predicts that large population and industry centers have both more researchers and more researchers per capita.
 
\subsubsection{Spatial Inertia of the Historical Legacy of Universities and Research Institutions} 

In the postwar period, especially after the 1970s, U.S. economic activity shifted substantially away from legacy cities in the Northeast and Midwest toward the Sun Belt. Based on Proposition \ref{prop_1}, this shift in productivity and population toward booming cities in the South and West should, all else equal, also attract researchers and gradually narrow differences in researcher access per capita across locations.\footnote{In the model, a large increase in non-research productivity in location 2, such as Dallas, raises its non-research population, which then attracts more researchers. As researchers move in, agglomeration effects further increase local researcher productivity, reinforcing this process. All else equal, per-capita access to researchers in location 2 should therefore increase through the home-market effect.} Yet 
universities and research institutions remained remarkably spatially inert and did not follow population and economic activity into these expanding locations.

In the model, this persistence is represented by a higher researcher amenity $\beta^N_{01}$ relative to $\beta^N_{02}$, reflecting the anchoring of research positions in legacy cities beyond what population or productivity alone would predict. 
Several forces may contribute to this anchoring. Alumni networks and institutional ties deepen over time and create persistent advantages for existing universities: donor relationships and academic networks are difficult to replicate in new locations \citep{hoxby2012endowment, blair2021elite}. In addition, the fixed costs of building a modern research university from scratch are substantial compared with the land-grant era, when federal land transfers and infrastructure requirements were smaller in scale \citep{ehrenberg2003startup}.

We summarize the effect of a higher value of $\beta^N_{01}$ on spatial inequality in research access as follows:
\begin{proposition}
\label{prop_2}
Under Assumption \ref{assumption_1}, an increase in the researcher amenity in location 1, $\beta_{01}^N$, lowers $N_2/M_2$ relative to $N_1/M_1$.
\end{proposition}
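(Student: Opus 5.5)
I will work with the two-location equilibrium system and carry out a log-linear comparative static in $\beta^N_{01}$, holding fixed the simplifications of Section \ref{two_locations}: $\beta^N_r=0$, $b^M_j=0$, $r_j=r_0+e\ln M_j$, and a common access share $\bar\lambda$. The endogenous objects are $n_j\equiv\ln N_j$ and $m_j\equiv\ln M_j$ for $j=1,2$, together with the researcher utility level and the non-researcher normalization. Researcher mobility implies $V^N_1=V^N_2$. Using $a_j=a_{0j}+\alpha n_j$ and $b_j=\beta^N_{0j}+\beta^N_b(m_j-n_j)$, this becomes
\[
(\beta^N_b-\alpha)(n_1-n_2)=\beta^N_b(m_1-m_2)+(a_{01}-a_{02})+(\beta^N_{01}-\beta^N_{02}).
\]
With two locations, the logit share condition for non-researchers reduces to
\[
\sigma(m_1-m_2)=(1-s^N)(s_{01}-s_{02})+s^N\ln\frac{S_0+(N_1/M_1)^\theta}{S_0+(N_2/M_2)^\theta}-\beta^M_r e\,(m_1-m_2).
\]
The adding-up constraint $M_1+M_2=M$ pins down levels but plays no role for the relative objects. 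It will be convenient to define $x\equiv n_1-n_2$, $y\equiv m_1-m_2$, and the statement's target $z\equiv x-y=\ln(N_1/M_1)-\ln(N_2/M_2)$.

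Next, I linearize the access term at the common $\bar\lambda$, as the paper does. Its derivative with respect to $z$ is $s^N\theta\bar\lambda$, so the second equation becomes
\[
(\sigma+\beta^M_r e)\,dy=s^N\theta\bar\lambda\,dz.
\]
The first equation, differentiated with respect to $\beta^N_{01}$, gives
\[
(\beta^N_b-\alpha)\,dx-\beta^N_b\,dy=d\beta^N_{01}.
\]
Substituting $dx=dz+dy$ turns this into
\[
(\beta^N_b-\alpha)\,dz-\alpha\,dy=d\beta^N_{01}.
\]
Let $\kappa\equiv s^N\theta\bar\lambda/(\sigma+\beta^M_r e)$, so that $dy=\kappa\,dz$. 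Eliminating $dy$ then yields
\[
\frac{dz}{d\beta^N_{01}}=\frac{1}{(\beta^N_b-\alpha)-\alpha\kappa}.
\]
The claim that $N_2/M_2$ falls relative to $N_1/M_1$ is exactly $dz/d\beta^N_{01}>0$. It therefore suffices for the denominator to be positive, namely
\[
\beta^N_b-\alpha>\frac{\alpha\,s^N\theta\bar\lambda}{\sigma+\beta^M_r e}.
\]
I expect Assumption \ref{assumption_1} to be a stability or uniqueness condition of exactly this form. It strengthens $\beta^N_b>\alpha$ to account for the feedback loop in which more researchers improve access, which attracts non-researchers, which in turn attracts researchers. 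I will invoke that assumption at this point.

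The main obstacle is signing this denominator. If Assumption \ref{assumption_1} is phrased differently, for instance as a local stability condition on the Jacobian of the map $(x,y)$, I will show that it implies positivity of the same determinant. The argument is that the determinant of the linearized $2\times 2$ system in $(dz,dy)$ equals $(\sigma+\beta^M_r e)\big[(\beta^N_b-\alpha)-\alpha\kappa\big]$, and stability requires it to be positive. Finally, as a consistency check, I will note the other comparative statics: $dy=\kappa\,dz>0$, so $M_1$ also rises, and $dx=(1+\kappa)dz>0$. This is the inertia mechanism described in the text.
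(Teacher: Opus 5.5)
Your proof is correct and follows essentially the same route as the paper. Like the paper, you totally differentiate the researcher utility-equalization condition and the two-location logit condition, linearize the access term at the common $\bar\lambda$, and solve the resulting $2\times 2$ system; your $dz/d\beta^N_{01}=(\sigma+\beta^M_r e)/[(\beta^N_b-\alpha)(\sigma+\beta^M_r e)-\alpha s^N\theta\bar\lambda]$ is exactly the negative of the paper's expression for $\partial\ln\bigl(\tfrac{N_2/M_2}{N_1/M_1}\bigr)/\partial\beta^N_{01}$, and your sufficient condition is Assumption~\ref{assumption_1} after multiplying through by $\sigma+\beta^M_r e>0$, so no separate stability argument is needed.
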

The proof of Proposition \ref{prop_2} is provided in Appendix \ref{appendix_beta01}. 

\subsubsection{Moving Research Access Toward Equality}
\label{equalizing_access}

We next consider a policy that reallocates research resources toward an allocation proportional to population. In the model, this operates through an increase in $\beta^N_b$, which governs how strongly the provision of local research positions responds to local population. The following proposition summarizes the effect of an increase in $\beta^N_b$.

\begin{proposition}
\label{prop_3}
Under Assumption \ref{assumption_1}, if researcher access per capita is initially higher in location 1 than in location 2, an increase in the linkage between the provision of local researcher positions and local population, $\beta_b^N$, narrows the researcher access gap by raising $N_2/M_2$ relative to $N_1/M_1$.
\end{proposition}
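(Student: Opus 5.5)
The plan is to reduce the two-location equilibrium of Section \ref{two_locations} to two log-linear equilibrium conditions. I would work with $x \equiv \ln(N_1/M_1) - \ln(N_2/M_2)$, the log researcher-access gap, and $m \equiv \ln M_1 - \ln M_2$. I would then solve for $x$ in closed form and differentiate with respect to $\beta^N_b$. This is the same route I would use for Propositions \ref{prop_1} and \ref{prop_2}, so the system built for those proofs carries over directly.

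\textbf{Researcher indifference.} With $\beta^N_r=0$, frictionless mobility gives $V^N_1=V^N_2$:
\begin{equation*}
a_{01} + \alpha \ln N_1 + \beta^N_{01} + \beta^N_b \ln(M_1/N_1) = a_{02} + \alpha \ln N_2 + \beta^N_{02} + \beta^N_b \ln(M_2/N_2).
\end{equation*}
Write $\ln N_1 - \ln N_2 = x + m$ and let $\Delta \equiv (a_{01}-a_{02}) + (\beta^N_{01}-\beta^N_{02})$. The condition then rearranges to
\begin{equation*}
(\beta^N_b-\alpha)\, x = \Delta + \alpha m .
\end{equation*}

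\textbf{Non-researcher choice.} The logit shares give $\sigma m = \bar V_1 - \bar V_2$. With $b^M_j=0$, $r_j = r_0 + e\ln M_j$, and the difference in $\ln S_j$ linearized at the common $\bar\lambda$ as $\theta\bar\lambda x$, this becomes
\begin{equation*}
\Sigma\, m = c + \kappa x, \qquad \Sigma \equiv \sigma + \beta^M_r e,\quad c \equiv (1-s^N)(s_{01}-s_{02}),\quad \kappa \equiv s^N\theta\bar\lambda .
\end{equation*}

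\textbf{Solving the system.} Substituting $m$ into the researcher condition yields
\begin{equation*}
x = \frac{\Sigma\Delta + \alpha c}{\Sigma(\beta^N_b-\alpha) - \alpha\kappa}.
\end{equation*}
$\Delta$, $c$, $\Sigma$ and $\kappa$ do not depend on $\beta^N_b$. The parameter $\beta^N_b$ enters only the denominator $D \equiv \Sigma(\beta^N_b-\alpha)-\alpha\kappa$, and it enters with positive coefficient $\Sigma>0$.

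\textbf{Sign argument.} I expect Assumption \ref{assumption_1} to be (or to imply) the stability/uniqueness condition $D>0$. This strengthens $\beta^N_b>\alpha$ so that agglomeration, amplified by non-researcher in-migration, does not overturn net decreasing returns. The hypothesis that access per capita is initially higher in location 1 means $x>0$, which together with $D>0$ forces the numerator to be positive. Then
\begin{equation*}
\partial x/\partial \beta^N_b = -\Sigma x/D < 0,
\end{equation*}
so $N_2/M_2$ rises relative to $N_1/M_1$ and the gap narrows. Because $x\to0$ only as $D\to\infty$, the gap shrinks without reversing sign.

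\textbf{Main obstacle.} The delicate step is legitimizing the linearization. One issue is the common-$\bar\lambda$ evaluation, which the text itself calls exact at symmetry and first order otherwise. The other is confirming that Assumption \ref{assumption_1} indeed delivers $D>0$. If it does not literally state this, I would derive $D>0$ from local stability of the equilibrium under the adjustment dynamics implied by the two utility differentials. The fixed total of researchers only pins down levels given $x$ and $m$, so it does not affect the sign.
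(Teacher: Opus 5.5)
Your argument is correct and is essentially the paper's proof. The paper totally differentiates the same two equilibrium conditions: researcher utility equalization, and the logit non-researcher allocation with access elasticities evaluated at $\bar\lambda$. It obtains exactly your expression $\partial x/\partial\beta^N_b = -(\sigma+\beta^M_r e)\,x/D$, where $D=(\beta^N_b-\alpha)(\sigma+\beta^M_r e)-\alpha s^N\theta\bar\lambda$, and Assumption~\ref{assumption_1} is literally $D>0$, so you need no separate stability argument. Differentiating the system locally, as the paper does, rather than solving it in closed form also settles your worry about the linearization: the $\bar\lambda$ approximation only has to hold at the equilibrium, not globally, and any intercept your closed form drops becomes irrelevant.
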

The proof of Proposition \ref{prop_3} is provided in Appendix \ref{appendix_betaNb}. 

The persistent attraction of legacy cities for researchers can lead to unequal access to frontier knowledge, while policies that direct research resources toward population centers can help reduce this inequality. However, such policies may shrink existing research clusters and weaken the research productivity gains generated by their scale, particularly when those clusters are large and highly productive. In addition, as knowledge access improves in location 2, non-researchers may adjust their own location choices endogenously. We now characterize the costs and benefits of increasing $\beta_b^N$, taking into account all of these adjustments.

\paragraph{Costs and Benefits of Reallocating Researchers in Proportion to Population} 
Taking the derivative of national output with respect to $\beta^N_b$, we show that the effect on national output is proportional to the following expression:\footnote{See Appendix \ref{appendix_y} for the full derivation.}
{\small \begin{align*}
   \frac{\partial y}{\partial \beta^N_b} \propto& P_1^N(1-P_1^N)N \bigg[\underbrace{s^N\left(\frac{Y_2}{Y}\frac{\theta\lambda_2}{N_2} - \frac{Y_1}{Y}\frac{\theta\lambda_1}{N_1}\right)}_{\text{Gain in Knowledge Access}} \underbrace{- \frac{s^N(1+\alpha)}{(1-\eta) \delta Q^{1-\eta}} \left(A_{01} N^\alpha_{1} - A_{02} N^\alpha_{2}\right)}_{\text{Loss in Knowledge Production}} \bigg] \\
    & \underbrace{-\;\frac{s^N\theta\bar\lambda\,\left(1 - s^N\theta\bar\lambda\right)}{B\left(\sigma + s^N\theta\bar\lambda + \beta^M_r e\right)}\, P_1^M\left(1-P_1^M\right)M \left(S_{01}^{1-s^N}\left(S_0 + \left(\frac{N_1}{M_1}\right)^{\theta}\right)^{s^N}-S_{02}^{1-s^N}\left(S_0 + \left(\frac{N_2}{M_2}\right)^{\theta}\right)^{s^N}
\right)}_{\text{Loss in Productivity due to Non-Researcher Migration}},
\end{align*}}
where $N$ and $M$ are national populations of researchers and non-researchers, respectively. $P_1^N=N_1/N$ and $P_1^M=M_1/M$.
The first two terms capture the effects of researcher migration. As researchers move from location 1 to location 2, access to frontier knowledge for non-researchers rises in location 2 and falls in location 1. Because knowledge access is concave in the number of researchers, holding frontier knowledge production fixed, equalizing access across locations raises aggregate output. The second term captures the offsetting effect on knowledge production. As researchers move out of location 1, they forgo the high research productivity of that location, and the contraction of the larger cluster further lowers productivity by weakening agglomeration effects. Because knowledge production is convex in the number of researchers, the productivity loss from shrinking location 1 exceeds the productivity gain from expanding location 2, so total knowledge production declines.

The third term captures the endogenous migration response of non-researchers. As researcher reallocation raises productivity in location 2 and lowers it in location 1, some non-researchers move toward location 2. Because location 2 remains the lower-productivity location overall, this migration can partially offset the gains from improved knowledge access.
\subsection{Proofs for the Comparative Statics in the Two-Location Case}

This section derives the comparative statics for the two-location case and proves Propositions \ref{prop_1}--\ref{prop_3}. 

\subsubsection{Effects of $s_{01}$ on the Spatial Distributions of Researchers and Non-Researchers: The ``Home-Market'' Effect}
\label{appendix_s01}

We derive the effect of $s_{01}$ on researcher and non-researcher populations. Because researchers are homogeneous and move frictionlessly, their equilibrium utility is equalized across the two locations. Researcher utility depends on research productivity and amenities rather than on knowledge access. The utility-equalization condition is:
\begin{align*}\label{utility_dif}
    0 &= (a_{01}-a_{02}) + (\beta^N_{01} - \beta^N_{02}) \\
    &+ \left(\alpha - \beta^N_{b} \right) \left[\ln N_1 -\ln (N-N_1)\right] + \beta^N_{b}\left[\ln M_1 -\ln (M-M_1)\right]. 
\end{align*}
Totally differentiating this equation yields 
\begin{equation*}\label{dif_equation_r}
(\alpha - \beta^N_b)\left(\frac{1}{N_1} + \frac{1}{N_2}\right)\frac{\partial N_1}{\partial s_{01}}  + \beta^N_{b}\left(\frac{1}{M_1} + \frac{1}{M_2}\right)\frac{\partial M_1}{\partial s_{01}} = 0.
\end{equation*}

We then totally differentiate non-researcher population at location 1. Because a non-researcher's real income depends on local access $S_j$, the access elasticities $\theta\lambda_j$ enter this row; evaluating them at $\bar\lambda$ gives 
\begin{equation*}\label{dif_equation_nr}
\left(\sigma + s^N\theta\bar\lambda + \beta^M_r e\right)\left(\frac{1}{M_1} + \frac{1}{M_2}\right)\frac{\partial M_1}{\partial s_{01}} - s^N\theta\bar\lambda\left(\frac{1}{N_1} + \frac{1}{N_2}\right)\frac{\partial N_1}{\partial s_{01}} = 1-s^N.
\end{equation*}
The right-hand side remains $1-s^N$ because the shock $s_{01}$ operates through the location-specific non-research factor $S_{01}$ and not through the access function, so it carries no $\bar\lambda$.

Solving this two-equation system yields:
\begin{equation*}
    \left(\frac{1}{N_1} + \frac{1}{N_2}\right)\frac{\partial N_1}{\partial s_{01}} = \frac{ \beta^N_b(1-s^N) }{(\beta^N_b - \alpha)(\sigma + \beta^M_r e) - \alpha s^N\theta\bar\lambda }
\end{equation*}
and
\begin{equation*}
    \left(\frac{1}{M_1} + \frac{1}{M_2}\right)\frac{\partial M_1}{\partial s_{01}} = \frac{(1-s^N)(\beta^N_b - \alpha)}{(\beta^N_b - \alpha)(\sigma + \beta^M_r e) - \alpha s^N\theta\bar\lambda }.
\end{equation*}
Since $\frac{\partial\ln\left(\frac{N_1}{N_2}\right)}{\partial s_{01}}  = \left(\frac{1}{N_1} + \frac{1}{N_2}\right) \frac{\partial N_1}{\partial s_{01}}$ and $\frac{\partial \ln\left(\frac{M_1}{M_2}\right)}{\partial s_{01}}  = \left(\frac{1}{M_1} + \frac{1}{M_2}\right) \frac{\partial M_1}{\partial s_{01}}$, we can rewrite the comparative statics as: 
\begin{equation}
\label{derivative_N_s01}
    \frac{\partial \ln\left(\frac{N_1}{N_2}\right)}{\partial s_{01}}  = \frac{ \beta^N_b(1-s^N) }{(\beta^N_b - \alpha)(\sigma + \beta^M_r e) - \alpha s^N\theta\bar\lambda }
\end{equation}
and
\begin{equation}
\label{derivative_M_s01}
     \frac{\partial \ln\left(\frac{M_1}{M_2}\right)}{\partial s_{01}} = \frac{(1-s^N)(\beta^N_b - \alpha)}{(\beta^N_b - \alpha)(\sigma + \beta^M_r e) - \alpha s^N\theta\bar\lambda }.
\end{equation}

Given these derivatives, we can characterize how a productivity shock in location 1 affects the distribution of per-capita access to researchers: 
\begin{equation}
\label{derivative_s01}
     \frac{\partial \ln\left(\frac{N_1/M_1}{N_2/M_2}\right)}{\partial s_{01}} = \frac{ \alpha(1-s^N)}{(\beta^N_b - \alpha)(\sigma + \beta^M_r e) - \alpha s^N\theta\bar\lambda }.
\end{equation}

A sufficient condition for a stable interior equilibrium is that the common denominator in Equations \ref{derivative_N_s01}--\ref{derivative_s01} be positive:
\begin{assumption}
\label{assumption_1}
The parameters satisfy the following condition:
\begin{equation*}
    (\beta^N_b - \alpha)(\sigma + \beta^M_r e) - \alpha s^N\theta\bar\lambda > 0.
\end{equation*}
\end{assumption} 
This condition imposes net decreasing returns to scale in each location. It holds as long as the agglomeration elasticity $\alpha$ is not too large, implying that $\beta^N_b - \alpha > 0$ must hold.
\setcounter{proposition}{0}
\begin{proposition}
Under Assumption \ref{assumption_1}, an increase in non-research productivity in location 1, $s_{01}$, raises $M_1$ and increases $N_1/M_1$ relative to $N_2/M_2$.
\end{proposition}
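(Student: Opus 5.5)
The plan is to read the result directly off the comparative statics already derived in Equations \ref{derivative_N_s01}--\ref{derivative_s01}, using Assumption \ref{assumption_1} to sign the common denominator. First I would restate the two equilibrium conditions: the researcher utility-equalization condition and the logit share equation for non-researchers. I would then totally differentiate both with respect to $s_{01}$, holding $N=N_1+N_2$ and $M=M_1+M_2$ fixed. This yields a linear $2\times 2$ system in $x\equiv(1/N_1+1/N_2)\,\partial N_1/\partial s_{01}$ and $z\equiv(1/M_1+1/M_2)\,\partial M_1/\partial s_{01}$, with access elasticities evaluated at $\bar\lambda$. By Cramer's rule the determinant is $(\beta^N_b-\alpha)(\sigma+\beta^M_r e)-\alpha s^N\theta\bar\lambda$, which is strictly positive by Assumption \ref{assumption_1}.

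Second, I would sign the pieces. Since $0<s^N<1$, the numerator of $z$ is $(1-s^N)(\beta^N_b-\alpha)$. Assumption \ref{assumption_1} forces $\beta^N_b-\alpha>0$: the bracket $\sigma+\beta^M_r e$ is positive and $\alpha s^N\theta\bar\lambda\ge 0$, so positivity of the determinant requires a positive first factor. Hence $z>0$. Because $1/M_1+1/M_2>0$, this gives $\partial M_1/\partial s_{01}>0$, the first claim. I would also note that $\partial \ln(M_1/M_2)/\partial s_{01}=z$, since $M_2=M-M_1$.

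Third, for the per-capita claim, I would write $\ln\!\big(\tfrac{N_1/M_1}{N_2/M_2}\big)=\ln(N_1/N_2)-\ln(M_1/M_2)$ and differentiate to get $x-z$. Subtracting the numerators gives $(1-s^N)\big[\beta^N_b-(\beta^N_b-\alpha)\big]=\alpha(1-s^N)$ over the same positive denominator, which is Equation \ref{derivative_s01}. With $\alpha>0$ and $s^N<1$ this is strictly positive, so $N_1/M_1$ rises relative to $N_2/M_2$.

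The main obstacle is not the algebra but justifying the linearization. The access elasticities $\theta\lambda_1$ and $\theta\lambda_2$ generally differ, and the system is solved at the common value $\bar\lambda$. I would state the result as exact at a symmetric point and otherwise valid to first order, as the paper does. I would also check that the sign conventions in the non-researcher row are consistent with the logit form. The positive term $s^N\theta\bar\lambda$ in the coefficient on $z$ arises because a larger $M_1$ dilutes access; the rent term enters through $\beta^M_r e$. Both must enter with the signs shown for the determinant to match Assumption \ref{assumption_1}.
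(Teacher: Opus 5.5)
Your proposal is correct and follows the paper's own proof. Like the paper, you totally differentiate the researcher utility-equalization condition and the logit share equation, solve the resulting $2\times 2$ system at the common $\bar\lambda$, and sign Equations \ref{derivative_M_s01} and \ref{derivative_s01} using the positive denominator guaranteed by Assumption \ref{assumption_1}. Your argument that Assumption \ref{assumption_1} forces $\beta^N_b-\alpha>0$ (because $\sigma+\beta^M_r e>0$ and $\alpha s^N\theta\bar\lambda\ge 0$) makes explicit a step the paper only asserts.
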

\begin{proof}
Under Assumption \ref{assumption_1}, the comparative static provided by equation \ref{derivative_M_s01} shows that an increase in $s_{01}$ leads to an increase in the population share of non-researchers in location 1 if $\beta^N_b >\alpha$. Since $\beta^N_b >\alpha$ is implied from Assumption \ref{assumption_1}, we know that an increase in $s_{01}$ must lead to an increase in $M_1$. 

Under Assumption \ref{assumption_1}, the comparative static provided by equation \ref{derivative_s01} shows that location 1's per-capita access to researchers by non-researchers increases relative to that in location 2 in response to an increase in $s_{01}$. 
\end{proof}

\subsubsection{Effects of $\beta^N_{01}$ on Research and Non-Research Population Distributions}
\label{appendix_beta01}

Next, we examine how a change in the location-specific amenity value for researchers, $\beta^N_{01}$, affects the spatial distribution of both researcher and non-researcher populations: $\frac{\partial\ln\left(\frac{N_1}{N_2}\right) }{\partial \beta^N_{01}}$ and $\frac{\partial\ln\left(\frac{M_1}{M_2}\right) }{\partial \beta^N_{01}}$. 

Totally differentiating the researchers’ utility-equalization condition yields:
\begin{equation*}
    (\alpha - \beta^N_b)\left(\frac{1}{N_1} + \frac{1}{N_2}\right)\frac{\partial N_1}{\partial \beta^N_{01}} + \beta^N_{b}\left(\frac{1}{M_1} + \frac{1}{M_2}\right)\frac{\partial M_1}{\partial \beta^N_{01}} + 1 =0.
\end{equation*}
Totally differentiating the equilibrium condition for the non-researcher population in location 1, again evaluating the access elasticities at $\bar\lambda$, yields: 
\begin{equation*}
    \left(\sigma + s^N\theta\bar\lambda + \beta^M_r e\right)\left(\frac{1}{M_1} + \frac{1}{M_2}\right)\frac{\partial M_1}{\partial \beta^N_{01}} - s^N\theta\bar\lambda\left(\frac{1}{N_1} + \frac{1}{N_2}\right)\frac{\partial N_1}{\partial \beta^N_{01}}=0.
\end{equation*}
Solving the two equations for $\frac{\partial M_1}{\partial \beta^N_{01}}$ and $\frac{\partial N_1}{\partial \beta^N_{01}}$ yields:

\begin{equation*}
    \left(\frac{1}{N_1} + \frac{1}{N_2}\right)\frac{\partial N_1}{\partial \beta^N_{01}} = \frac{\sigma + s^N\theta\bar\lambda + \beta^M_r e}{(\beta^N_b - \alpha)(\sigma + \beta^M_r e) - \alpha s^N\theta\bar\lambda }
\end{equation*}
and
\begin{equation*}
    \left(\frac{1}{M_1} + \frac{1}{M_2}\right)\frac{\partial M_1}{\partial \beta^N_{01}} = \frac{s^N\theta\bar\lambda}{(\beta^N_b - \alpha)(\sigma + \beta^M_r e) - \alpha s^N\theta\bar\lambda}.
\end{equation*}

Since $\frac{\partial\ln\left(\frac{N_1}{N_2}\right)}{\partial \beta^N_{01}}  = \left(\frac{1}{N_1} + \frac{1}{N_2}\right) \frac{\partial N_1}{\partial \beta^N_{01}}$ and $\frac{\partial \ln\left(\frac{M_1}{M_2}\right)}{\partial \beta^N_{01}}  = \left(\frac{1}{M_1} + \frac{1}{M_2}\right) \frac{\partial M_1}{\partial \beta^N_{01}}$, we arrive at the following: 

\begin{equation}
\label{derivative_N_beta01}
    \frac{\partial \ln\left(\frac{N_1}{N_2}\right)}{\partial \beta^N_{01}}  = \frac{\sigma + s^N\theta\bar\lambda + \beta^M_r e}{(\beta^N_b - \alpha)(\sigma + \beta^M_r e) - \alpha s^N\theta\bar\lambda }.
\end{equation}
and
\begin{equation}
\label{derivative_M_beta01}
     \frac{\partial \ln\left(\frac{M_1}{M_2}\right)}{\partial \beta^N_{01}} = \frac{s^N\theta\bar\lambda}{(\beta^N_b - \alpha)(\sigma + \beta^M_r e) - \alpha s^N\theta\bar\lambda }.
\end{equation}
Combining these derivatives gives the effect of $\beta^N_{01}$ on researcher access per capita in location 2 relative to location 1:
\begin{equation}
\label{derivative_beta01}
     \frac{\partial \ln\left(\frac{N_2/M_2}{N_1/M_1}\right)}{\partial \beta^N_{01}} = \frac{-(\sigma+\beta^M_r e)}{(\beta^N_b - \alpha)(\sigma + \beta^M_r e) - \alpha s^N\theta\bar\lambda}.
\end{equation}

\begin{proposition}
Under Assumption \ref{assumption_1}, an increase in the researcher amenity in location 1, $\beta_{01}^N$, lowers $N_2/M_2$ relative to $N_1/M_1$.
\end{proposition}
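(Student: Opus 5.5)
The argument rests on the comparative static in Equation \ref{derivative_beta01}, whose denominator is signed by Assumption \ref{assumption_1}. First, I take the log of the ratio of per-capita access,
\[
\ln\left(\frac{N_2/M_2}{N_1/M_1}\right) = -\ln\left(\frac{N_1}{N_2}\right) + \ln\left(\frac{M_1}{M_2}\right).
\]
Differentiating this identity with respect to $\beta^N_{01}$ writes the object of interest as the difference of the two derivatives in Equations \ref{derivative_N_beta01} and \ref{derivative_M_beta01}. These derivatives come from solving the two totally differentiated equilibrium conditions, the researcher utility-equalization condition and the non-researcher location condition, both evaluated at the common access share $\bar\lambda$.

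Second, I subtract the two expressions. They share the denominator
\[
\Delta \equiv (\beta^N_b - \alpha)(\sigma + \beta^M_r e) - \alpha s^N\theta\bar\lambda .
\]
In the numerators, the $s^N\theta\bar\lambda$ terms cancel, leaving $s^N\theta\bar\lambda - (\sigma + s^N\theta\bar\lambda + \beta^M_r e) = -(\sigma + \beta^M_r e)$. This reproduces Equation \ref{derivative_beta01}. As a quick check, I would verify the two-equation solution with Cramer's rule: the determinant of the system is $-\Delta$ up to sign convention, and the right-hand side is $(-1, 0)$.

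Third, I sign the result. The numerator $-(\sigma+\beta^M_r e)$ is strictly negative because $\sigma>0$ (the Type-I extreme value scale), $\beta^M_r\ge 0$, and $e\ge 0$ (the inverse housing supply elasticity). Assumption \ref{assumption_1} states exactly that $\Delta>0$. The derivative is therefore negative. Since $\ln$ is monotone, raising $\beta^N_{01}$ lowers $N_2/M_2$ relative to $N_1/M_1$.

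The main point requiring care is the algebra of the linear system: checking that the sign conventions in the totally differentiated equations lead to the stated numerators, in particular the ``$+1$'' shock term in the researcher condition and the zero right-hand side in the non-researcher condition. Once Equation \ref{derivative_beta01} is confirmed, the proposition follows from Assumption \ref{assumption_1} with no further conditions.
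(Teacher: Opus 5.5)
Your proposal is correct and follows essentially the same route as the paper. Both arguments obtain Equation \ref{derivative_beta01} as the difference between Equations \ref{derivative_M_beta01} and \ref{derivative_N_beta01}, which come from solving the linearized researcher and non-researcher equilibrium conditions at the common $\bar\lambda$, and then sign it using the positivity of the common denominator under Assumption \ref{assumption_1}. Your explicit observation that $\sigma+\beta^M_r e>0$ supplies the numerator's sign, which the paper's one-line proof leaves implicit.
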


\begin{proof}
Under Assumption 1, we can easily see that  $\frac{\partial \ln\left(\frac{N_2/M_2}{N_1/M_1}\right)}{\partial \beta^N_{01}}<0$, which proves the proposition. 
\end{proof}

\subsubsection{Effects of $\beta^N_{b}$ on Research and Non-Research Population Distributions}
\label{appendix_betaNb}
Following the same steps as above, we obtain the effects of $\beta^N_b$ on $N_1$ and $M_1$:
\begin{equation*}
    \left(\frac{1}{N_1}+\frac{1}{N_2}\right)\frac{\partial N_1}{\partial \beta^N_b} = -\frac{(\sigma + s^N\theta\bar\lambda + \beta^M_r e)\ln\left(\frac{N_1/N_2}{M_1/M_2}\right)}{(\beta^N_b - \alpha)(\sigma + \beta^M_r e) - \alpha s^N\theta\bar\lambda }
\end{equation*}
and
\begin{equation*}
    \left(\frac{1}{M_1}+\frac{1}{M_2}\right)\frac{\partial M_1}{\partial \beta^N_b} = -\frac{s^N\theta\bar\lambda\ln\left(\frac{N_1/N_2}{M_1/M_2}\right)}{(\beta^N_b - \alpha)(\sigma + \beta^M_r e) - \alpha s^N\theta\bar\lambda }
\end{equation*}
Combining these expressions gives the effect of $\beta^N_b$ on researcher access per capita in location 1 relative to location 2:
\begin{equation}
\label{derivative_betaNb}
     \frac{\partial \ln\left(\frac{N_1/M_1}{N_2/M_2}\right)}{\partial \beta^N_{b}} = \frac{-(\sigma+\beta^M_r e)\ln\left(\frac{N_1/M_1}{N_2/M_2}\right)}{(\beta^N_b - \alpha)(\sigma + \beta^M_r e) - \alpha s^N\theta\bar\lambda }.
\end{equation}
Equation \ref{derivative_betaNb} is negative when location 1 initially has higher researcher access per capita. Thus, an increase in $\beta^N_b$ narrows the access gap between the two locations.

\begin{proposition}
Under Assumption \ref{assumption_1}, if researcher access per capita is initially higher in location 1 than in location 2, an increase in the linkage between the provision of local researcher positions and local population, $\beta_b^N$, narrows the researcher access gap by raising $N_2/M_2$ relative to $N_1/M_1$.
\end{proposition}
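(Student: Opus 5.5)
The proof follows the same route as Propositions 1 and 2: first derive the comparative static in closed form, then read off its sign under Assumption \ref{assumption_1}. The key fact is that $\beta^N_b$ enters the researchers' utility-equalization condition multiplying $\ln(M_1/M_2)-\ln(N_1/N_2)$. Totally differentiating that condition with respect to $\beta^N_b$ therefore gives
\[
(\alpha-\beta^N_b)\left(\tfrac{1}{N_1}+\tfrac{1}{N_2}\right)\frac{\partial N_1}{\partial \beta^N_b}+\beta^N_b\left(\tfrac{1}{M_1}+\tfrac{1}{M_2}\right)\frac{\partial M_1}{\partial \beta^N_b}-\ln\!\left(\frac{N_1/N_2}{M_1/M_2}\right)=0 .
\]
This is the same linear row as before, with forcing term $-L$, where $L\equiv\ln\!\big((N_1/M_1)/(N_2/M_2)\big)$.

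The non-researcher row is unchanged from the $\beta^N_{01}$ case. We evaluate the access elasticities at $\bar\lambda$, and the right-hand side is zero because $\beta^N_b$ does not enter non-researcher utility directly. So the system has the same coefficient matrix as in Appendix \ref{appendix_beta01}, and only the forcing term changes from $1$ to $-L$.

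By linearity, the solutions are $-L$ times Equations \ref{derivative_N_beta01}--\ref{derivative_M_beta01}. Subtracting the $M$ equation from the $N$ equation gives Equation \ref{derivative_betaNb}:
\[
\frac{\partial L}{\partial\beta^N_b}=\frac{-(\sigma+\beta^M_r e)\,L}{(\beta^N_b-\alpha)(\sigma+\beta^M_r e)-\alpha s^N\theta\bar\lambda}.
\]
I would verify this by Cramer's rule, checking that the determinant equals the common denominator $(\beta^N_b-\alpha)(\sigma+\beta^M_r e)-\alpha s^N\theta\bar\lambda$. Getting its sign right is where errors could creep in.

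The sign argument then runs as follows:
\begin{itemize}
\item Under Assumption \ref{assumption_1}, the denominator is positive.
\item The factor $\sigma+\beta^M_r e$ is positive because $\sigma>0$ and $\beta^M_r e\ge 0$.
\item The hypothesis that researcher access per capita is initially higher in location 1 means $L>0$.
\end{itemize}
Hence $\partial L/\partial\beta^N_b<0$. Equivalently, $\ln\!\big((N_2/M_2)/(N_1/M_1)\big)$ rises, so the gap narrows. Because the derivative has the form $-cL$ with $c>0$, it always moves $L$ toward zero, which confirms that this is narrowing rather than overshooting.

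The main obstacle is the first step: correctly differentiating the $\beta^N_b\ln(M_j/N_j)$ term and its product-rule piece. The $\bar\lambda$ approximation only enters through the determinant, which is already sign-controlled by Assumption \ref{assumption_1}.
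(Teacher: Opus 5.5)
Your proposal is correct and follows essentially the paper's route: you differentiate the researcher utility-equalization condition and the non-researcher location condition with respect to $\beta^N_b$, and you observe that the system has the same coefficients as in the $\beta^N_{01}$ case with forcing term $-\ln\bigl(\frac{N_1/M_1}{N_2/M_2}\bigr)$ in place of $1$; you then combine the solutions to obtain Equation \ref{derivative_betaNb} and sign it using Assumption \ref{assumption_1} together with $L>0$. (On the step you flagged: with the rows written as in your display, the determinant is $(\alpha-\beta^N_b)(\sigma+s^N\theta\bar\lambda+\beta^M_r e)+\beta^N_b s^N\theta\bar\lambda$, which is minus the common denominator, and this sign is exactly what yields the $-L$ factors you obtain by linearity.)
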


\begin{proof}
    Under Assumption \ref{assumption_1}, $\frac{\partial \ln\left(\frac{N_1/M_1}{N_2/M_2}\right)}{\partial \beta^N_{b}}<0$ if the per-capita access to researchers is higher in location 1, and vice versa if the per-capita access to researchers is higher in location 2. This means that $\beta^N_b$ equalizes per-capita access to researchers across the two locations. 
\end{proof}

\subsubsection{The Effect of $\beta_b^N$ on National Output}
\label{appendix_y}

Totally differentiating log national output gives $\frac{\partial y}{\partial \beta_b^N}$ as a function of $\frac{\partial N_1}{\partial \beta_b^N}$ and $\frac{\partial M_1}{\partial \beta_b^N}$. The marginal effect of a researcher on log output carries the access elasticity $\theta\lambda_j$, while the marginal effect of a non-researcher carries the marginal-to-average product wedge $1-s^N\theta\lambda_j$; we evaluate both at $\bar\lambda$. The access levels take the additive form, with per-non-researcher access productivity $S_{0j}^{1-s^N}S_j^{s^N}$.

The effect of $\beta^N_b$ on log national output can be written as follows: 
{\small \begin{align*}
    \frac{\partial y}{\partial\beta^N_b} &= \left[s^N\left(\frac{Y_1}{Y}\frac{\theta\lambda_1}{N_1}-\frac{Y_2}{Y}\frac{\theta\lambda_2}{N_2} \right) + \frac{s^N(1+\alpha)}{(1-\eta) \delta Q^{1-\eta}} \left(A_{01} N^\alpha_{1} - A_{02} N^\alpha_{2}\right)\right]\frac{\partial N_1}{\partial \beta^N_b} \\
    & + \frac{1 - s^N\theta\bar\lambda}{B}\left(S_{01}^{1-s^N} S_1^{s^N} - S_{02}^{1-s^N} S_2^{s^N}\right)\frac{\partial M_1}{\partial \beta^N_b}.
\end{align*}}
Substituting the comparative statics for researchers and non-researchers and rearranging yields:
{\small \begin{align*}
   \frac{\partial y}{\partial \beta^N_b} \propto& P_1^N(1-P_1^N)N \bigg[\underbrace{s^N\left(\frac{Y_2}{Y}\frac{\theta\lambda_2}{N_2} - \frac{Y_1}{Y}\frac{\theta\lambda_1}{N_1}\right)}_{\text{Gain in Knowledge Access}} \underbrace{- \frac{s^N(1+\alpha)}{(1-\eta) \delta Q^{1-\eta}} \left(A_{01} N^\alpha_{1} - A_{02} N^\alpha_{2}\right)}_{\text{Loss in Knowledge Production}} \bigg] \\
    & \underbrace{-\;\frac{s^N\theta\bar\lambda\,\left(1 - s^N\theta\bar\lambda\right)}{B\left(\sigma + s^N\theta\bar\lambda + \beta^M_r e\right)}\, P_1^M\left(1-P_1^M\right)M \left(S_{01}^{1-s^N}\left(S_0 + \left(\frac{N_1}{M_1}\right)^{\theta}\right)^{s^N}-S_{02}^{1-s^N}\left(S_0 + \left(\frac{N_2}{M_2}\right)^{\theta}\right)^{s^N}
\right)}_{\text{Loss in Productivity due to Non-Researcher Migration}}.
\end{align*}}

\section{Comparison of Microsoft Academic Graph with Other Bibliographic Databases}
\label{app:MAG}
We use the Microsoft Academic Graph (MAG) because it combines broad cross-disciplinary coverage with linked information on publications, citations, authors, institutional affiliations, and fields of study—all of which are necessary to construct our annual researcher panel. An additional advantage is that MAG permits free bulk access, making the construction of our sample more transparent and reproducible.

Existing comparisons support the comprehensiveness and reliability of MAG. \cite{visser2021large} compare MAG with other major bibliographic data sources, including Scopus, Web of Science, Dimensions, and Crossref, and find that MAG provides more comprehensive coverage of scientific publications than the other databases, with most records unique to MAG corresponding to scientific work. While Scopus includes more journal articles and conference proceedings in some subjects, MAG exhibits the greatest overlap with Scopus relative to other sources. 

In terms of citation measures, \cite{thelwall2018does} finds that Microsoft Academic is essentially equivalent to Scopus in terms of its citation counts, making it a robust source of free citation data. \cite{martin2021google} show that in most subjects, MAG retrieves more citations than Scopus and WoS, although it has coverage gaps in some fields, particularly physics and parts of the humanities.
\cite{visser2021large} show that across all databases, documents with a higher number of citations are disproportionately represented in the overlap between data sources. Thus, differences between data sources may not be consequential in terms of the relative ``importance'' of publications \citep{visser2021large}. These findings suggest that MAG provides broadly comparable citation measures to those of other bibliographic data sources.

\cite{wang2020microsoft} describe MAG as replicating ``the success of Google Scholar, which utilizes the massive document index from a web search engine to achieve comprehensive coverage of contemporary scholarly materials, many of which are not published and distributed through traditional channels and not assigned DOIs.'' 
MAG has consequently been used in recent economics research on scientific careers, inequality, and innovation, such as \cite{airoldi2024inequality}, \cite{kim2025women}, \cite{truffa2025undergraduate}, and \cite{koffi2026cassatts}. 

\section{Back-of-the-Envelope Calculation: Gains from Improved Access to Knowledge}
\label{app:back_of_envelope}

The evidence in Section 4 shows that research activity has become increasingly misaligned with population, suggesting potential gains from bringing research closer to where people live and work. This appendix uses the spatial framework in Section 2 to quantify the access channel alone. Specifically, we ask how much aggregate output would increase if researchers were distributed across metropolitan areas in proportion to population, holding the total stock of frontier knowledge fixed.

Using Equation \eqref{eq:aggregate_production}, we reallocate researchers so that each MSA’s share of researchers equals its share of the national population. We measure non-research productivity using residualized wages from the 2015–2019 ACS and hold it fixed at its baseline value. We set $\lambda_j=\bar{\lambda}=0.5$, so that local and non-local channels each account for half of knowledge access. We set $s^N=0.2$, consistent with estimates of the aggregate elasticity of productivity with respect to frontier knowledge. We calibrate the local access elasticity $s^N\theta\bar{\lambda}$ to $0.08$, matching the magnitude of local university spillovers estimated by \citet{kantor2014knowledge}. 

Under these assumptions, reallocating researchers in proportion to population raises aggregate output by 5.3\%. When geographic access is instead measured using research papers per capita, the implied gain rises to 8.8\%. These calculations illustrate that the access benefits of a more geographically balanced allocation of research activity could be economically meaningful.

The exercise, however, captures only the gross gain through improved knowledge access. It holds the stock of frontier knowledge fixed and therefore does not account for the possibility that reallocation moves researchers away from productive institutions, reduces research-cluster sizes, and lowers knowledge production. Section 6 incorporates these production-side effects and evaluates the net effect of researcher reallocation.

\section{Counterfactual Analysis: Migration of Non-Researchers}
\label{app:nonresearcher}
To simulate the migration of the non-research population, we assume a standard multinomial logit framework in which the change in per-capita researchers shifts the local marginal product of labor. We set the unobserved idiosyncratic location taste dispersion based on \cite{hsieh_moretti2019}. We set the counterfactual wage in each MSA equal to this counterfactual marginal product of labor, and, given this wage shock, recompute each MSA's population; the resulting population serves as the counterfactual non-research population used to compute the counterfactual access term $\ln \widehat{B}$. We hold housing rents fixed with respect to the change in local wages. Because a rent response would dampen migration, holding rents fixed makes our counterfactual migration response an upper bound: if even this upper-bound response generates no sizable efficiency cost, the true cost is likely smaller still.

\clearpage

\begin{figure}[!h]
     \centering
        \caption{Shares of Population, Researchers, and Publications in the Ten Largest Research Clusters} 
     \begin{subfigure}[b]{0.9\linewidth}
         \centering
         \includegraphics[width=\linewidth]{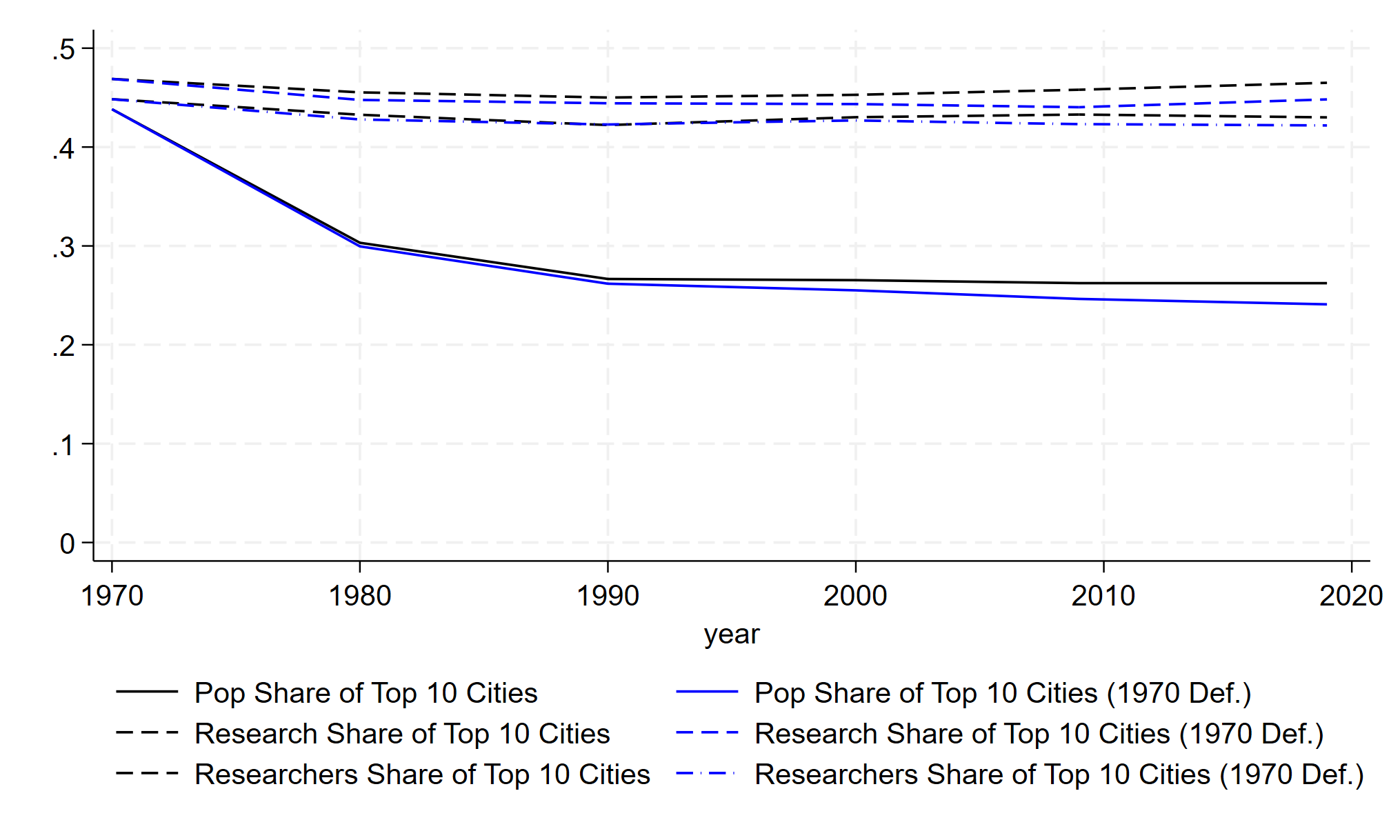}
     \end{subfigure}

\begin{minipage}{\textwidth}
\footnotesize{{\it Notes:} The figure plots the shares of U.S. population, researchers, and publications that the ten largest research clusters accounted for from 1970 to 2019. The black lines allow the set of ten largest clusters to change over time, while the blue lines hold the set fixed at the ten largest clusters in 1970.}
\end{minipage}
\label{fig:top_10_cities}
\end{figure}

\clearpage

\begin{figure}[!h]
     \centering
        \caption{Shares of Researcher Affiliations by Institution Type and Field} 
     \begin{subfigure}[b]{0.95\linewidth}
         \centering
         \includegraphics[width=\linewidth]{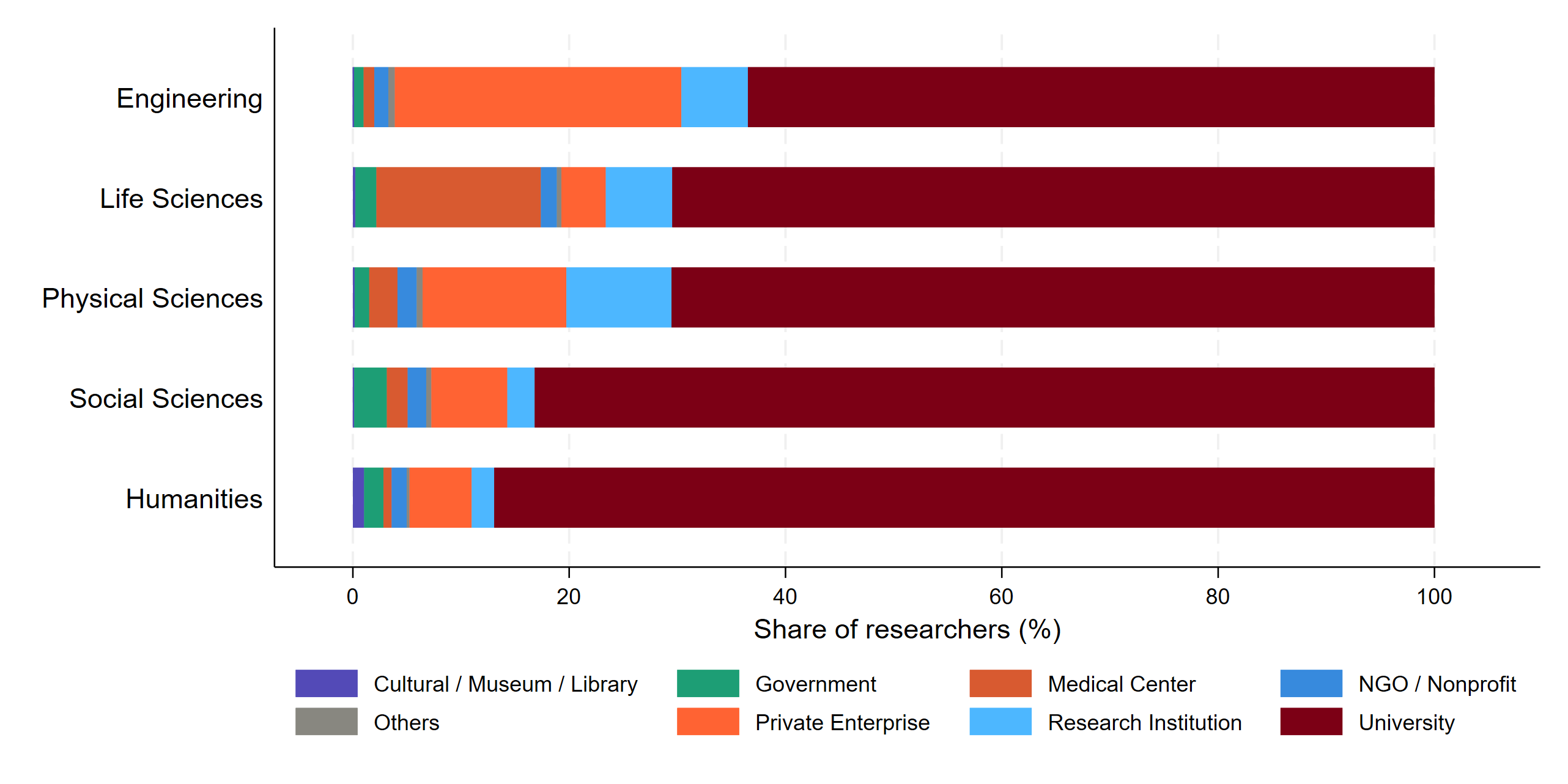}
      \caption{\centering Shares by Institution Type and Field}
             \label{fig:researcher_share}
     \end{subfigure}
   \\
        \begin{subfigure}[b]{0.8\linewidth}
         \centering
         \includegraphics[width=\linewidth]{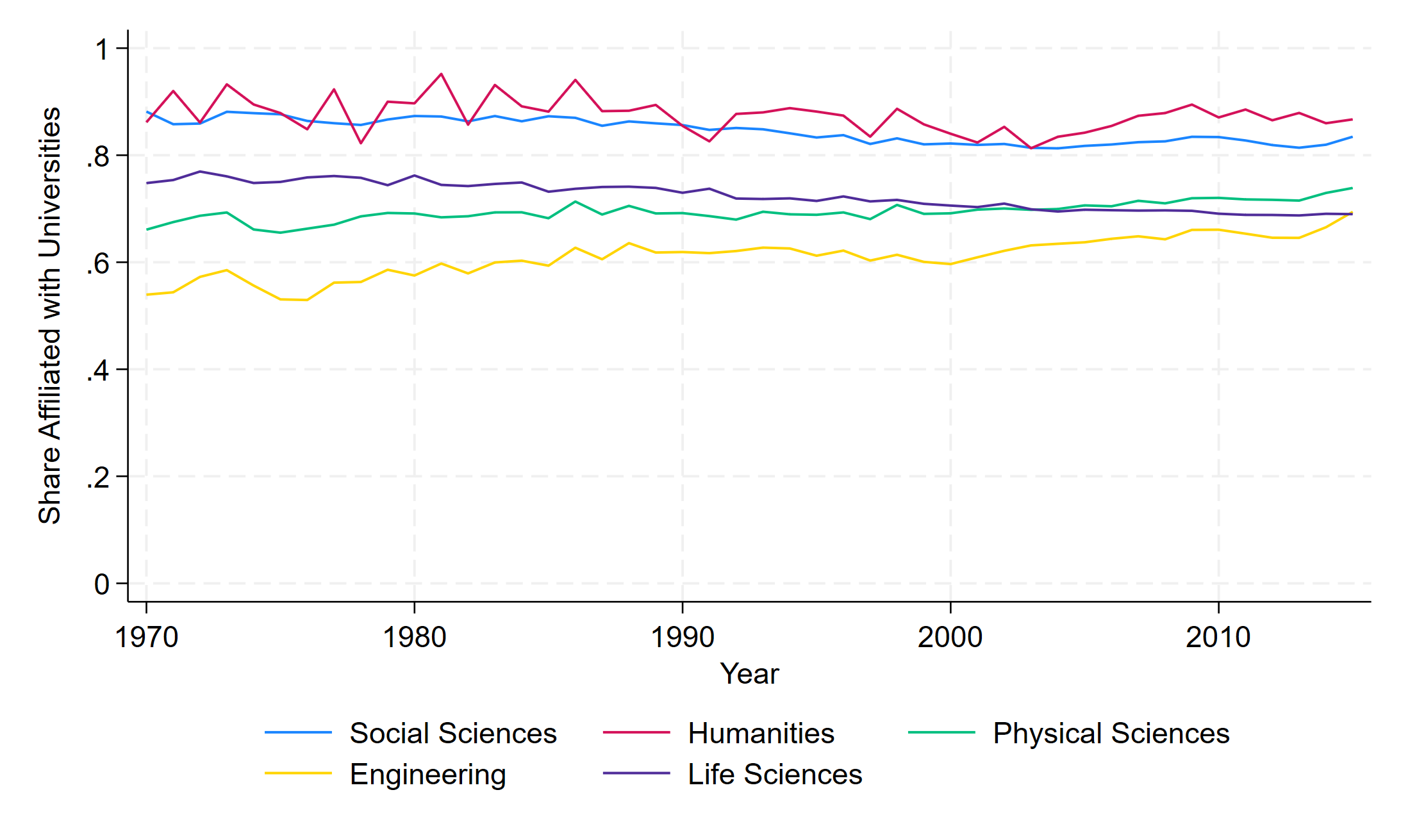}
     \caption{\centering Shares of University Affiliations over Time}
        \label{fig:univerity_share}
     \end{subfigure}
\begin{minipage}{\textwidth}
\footnotesize{{\it Notes:} We use Claude AI to classify research institutions into broad types based on the names of the institutions. Panel (a) shows, for each broad field, the share of researcher affiliations associated with each institution type, pooling publication data from 1970 to 2015. The unit of observation is a researcher–institution affiliation appearing in a publication in a given year. Panel (b) plots the annual shares of researcher affiliations associated with universities within each broad field.
}
\end{minipage}
        \label{fig:institution_type}
\end{figure}

\clearpage

\begin{figure}[!h]
     \centering
        \caption{Rising Inequality in Access to Research Papers: Universities vs. Non-Universities} 
     \begin{subfigure}[b]{0.7\linewidth}
         \centering
         \includegraphics[width=\linewidth]{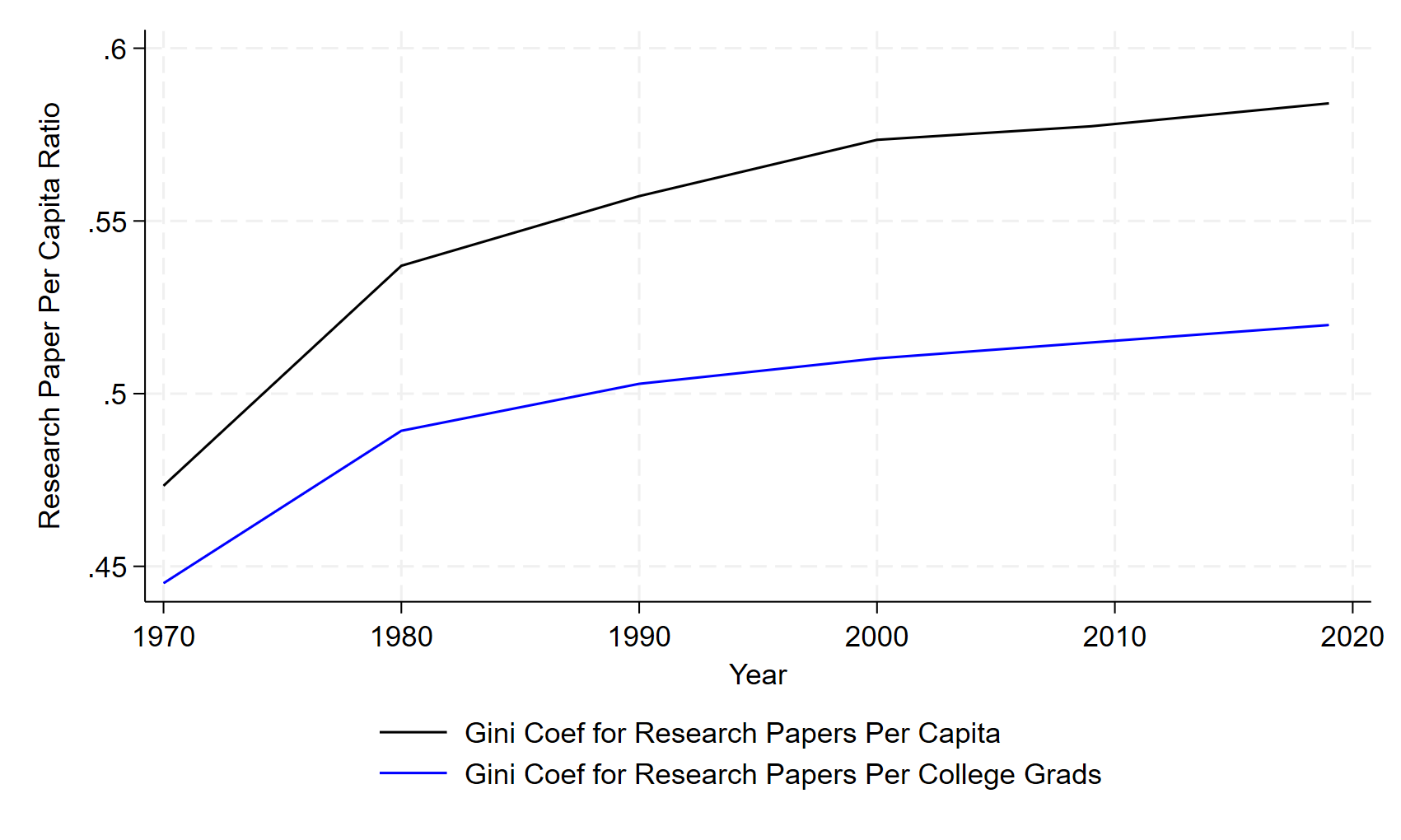}
      \caption{\centering University Research}
             \label{fig:gini_univ}
     \end{subfigure}
   \\
        \begin{subfigure}[b]{0.7\linewidth}
         \centering
         \includegraphics[width=\linewidth]{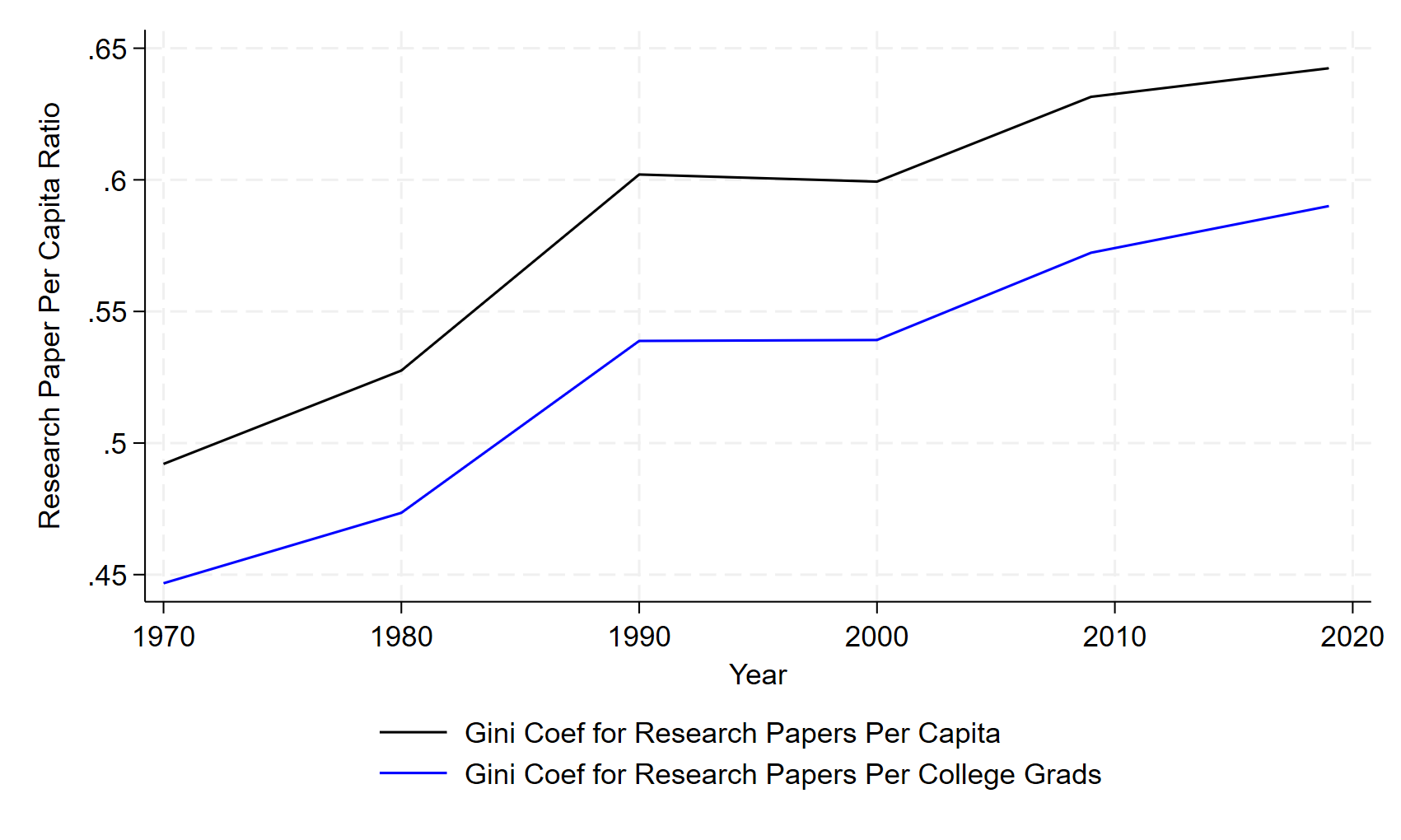}
     \caption{\centering Non-University Research}
        \label{fig:gini_nonuniv}
     \end{subfigure}
\begin{minipage}{\textwidth}
\footnotesize{{\it Notes:} Panel (a) plots population-weighted Gini coefficients of access to research papers produced by university-affiliated authors across MSAs, while Panel (b) reports the corresponding coefficients for non-university-affiliated authors. The black series use papers per resident, and the blue series use papers per college graduate.}
\end{minipage}
        \label{fig:gini_uni_nonuni}
\end{figure}

\clearpage 

\begin{figure}[!h]
    \centering
    \caption{Gini Coefficients of Research Papers per Resident by Field, 1970--2019} 
    \includegraphics[width=0.8\linewidth]{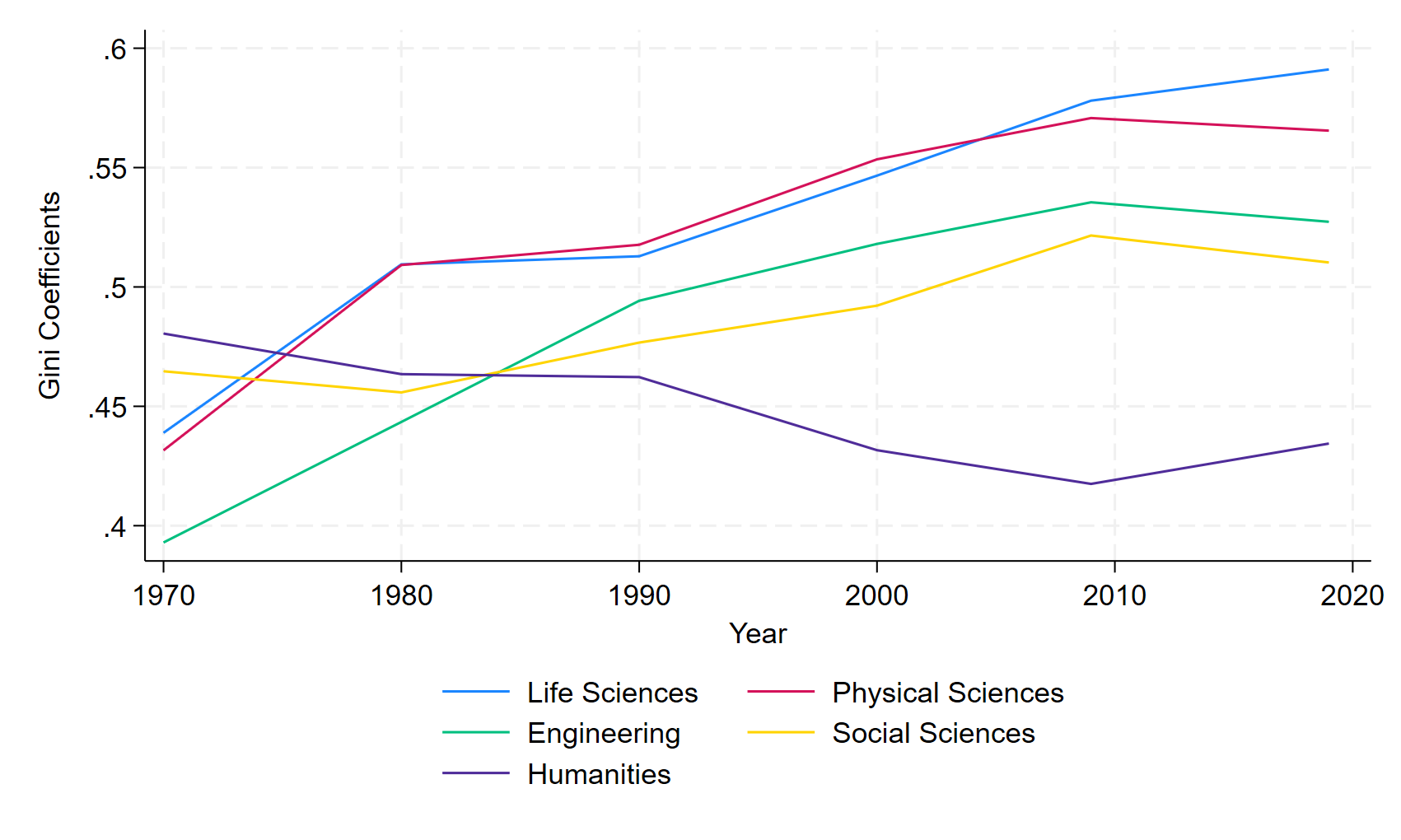}
    \label{fig:gini_field}
    \vspace{0.25cm}
\begin{minipage}{\textwidth}
\footnotesize{{\it Notes:} The figure plots population-weighted Gini coefficients of research papers per resident across MSAs from 1970 to 2019, separately for five broad fields.}
\end{minipage}
\end{figure}

\clearpage

\begin{figure}[!h]
     \centering
        \caption{Distributions of Access to Research Papers Across MSAs, 1970, 1990, and 2019} 
     \begin{subfigure}[b]{0.7\linewidth}
         \centering
         \includegraphics[width=\linewidth]{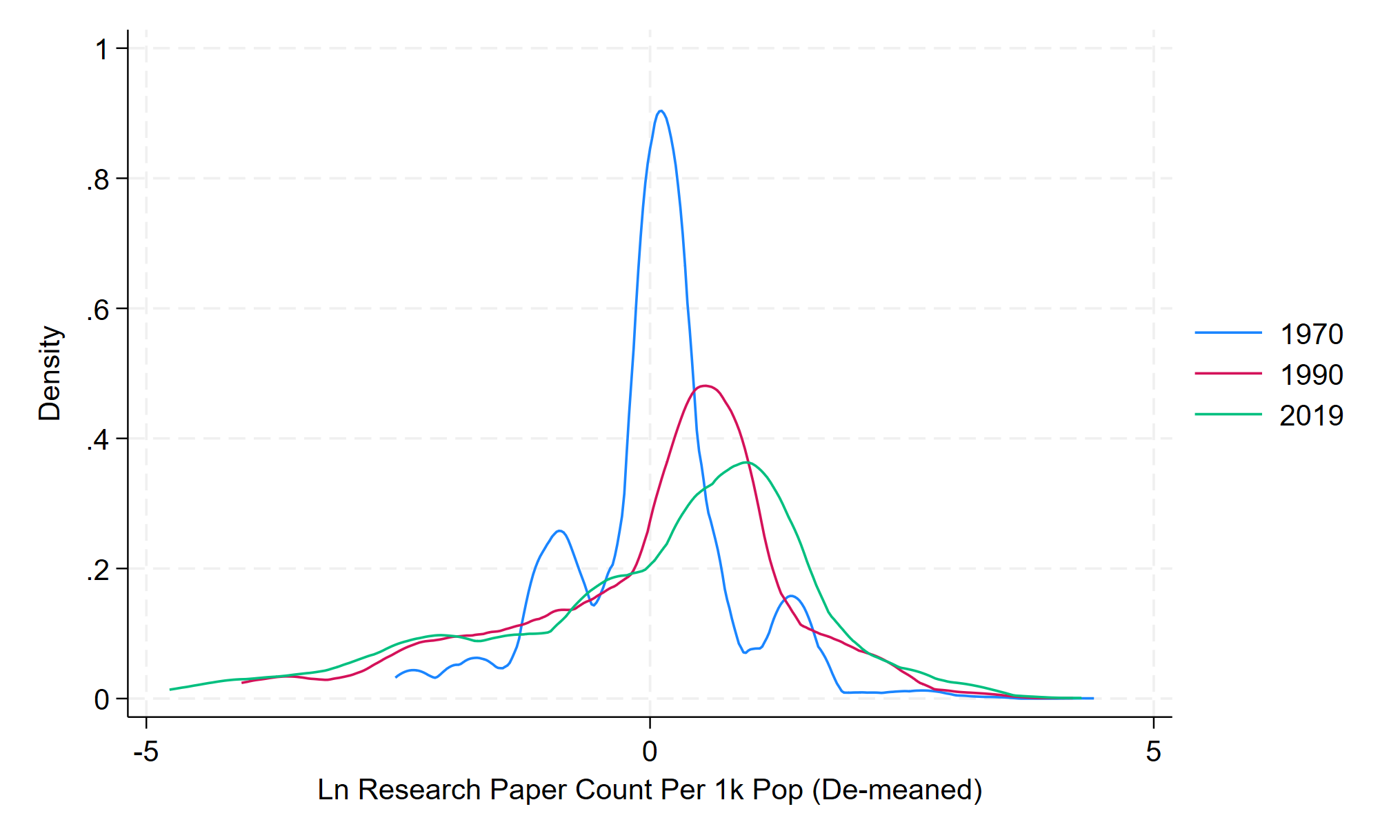}
      \caption{\centering Research Papers per 1,000 Residents}
             \label{fig:distribution_overall}
     \end{subfigure}
   \\
        \begin{subfigure}[b]{0.7\linewidth}
         \centering
         \includegraphics[width=\linewidth]{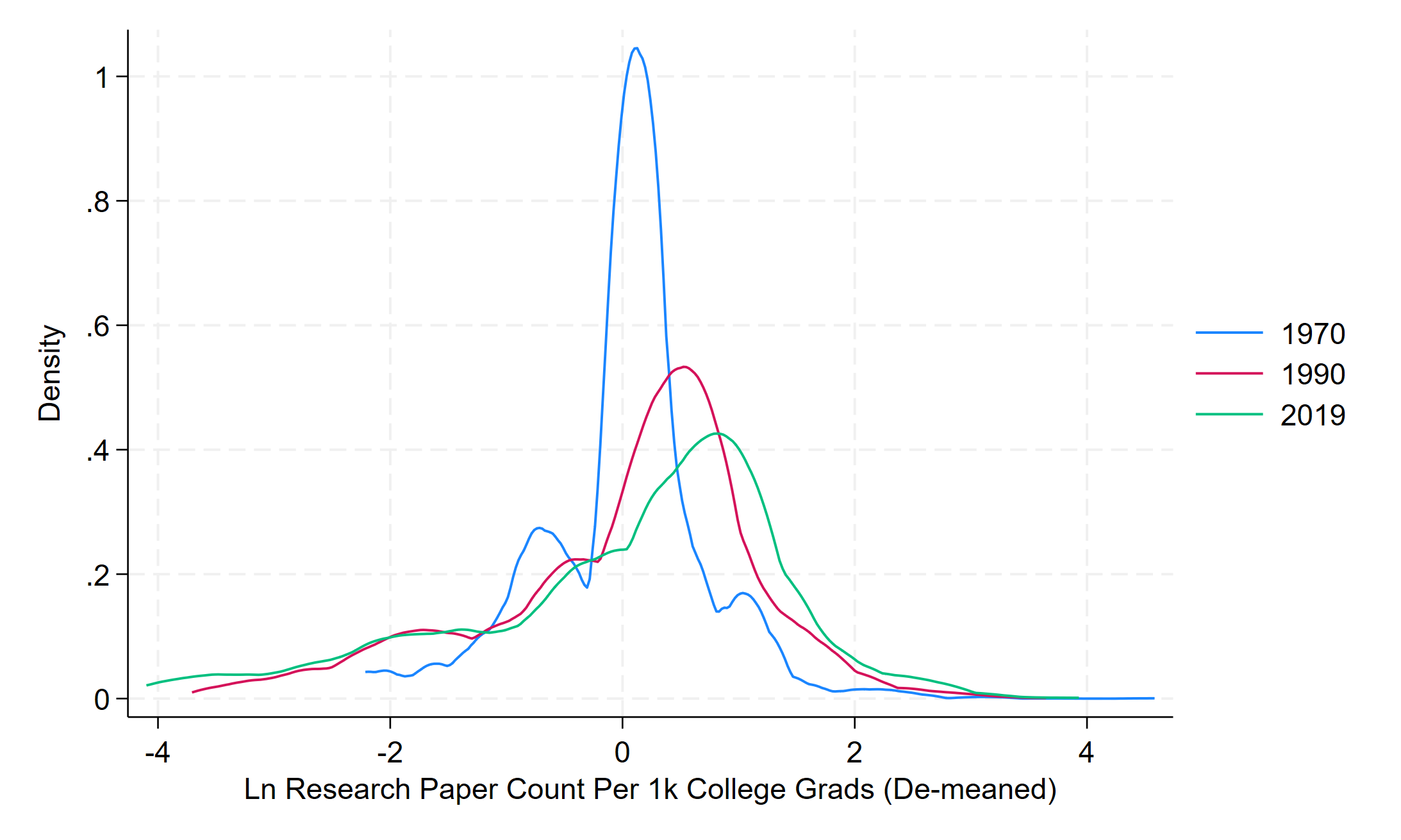}
     \caption{\centering Research Papers per 1,000 College Graduates}
        \label{fig:distribution_college}
     \end{subfigure}
\begin{minipage}{\textwidth}
\footnotesize{{\it Notes:} Panel (a) plots the population-weighted distribution of demeaned log research papers per 1,000 residents across MSAs. Panel (b) plots the corresponding distribution of research papers per 1,000 college graduates. Publication counts are aggregated over five-year periods for authors affiliated with institutions in each MSA.
}
\end{minipage}
        \label{fig:distribution}
\end{figure}

\begin{figure}[!h]
     \centering
        \caption{Decomposing Changes in the Distribution of Research Papers per 1,000 Residents} 
     \begin{subfigure}[b]{0.7\linewidth}
         \centering
         \includegraphics[width=\linewidth]{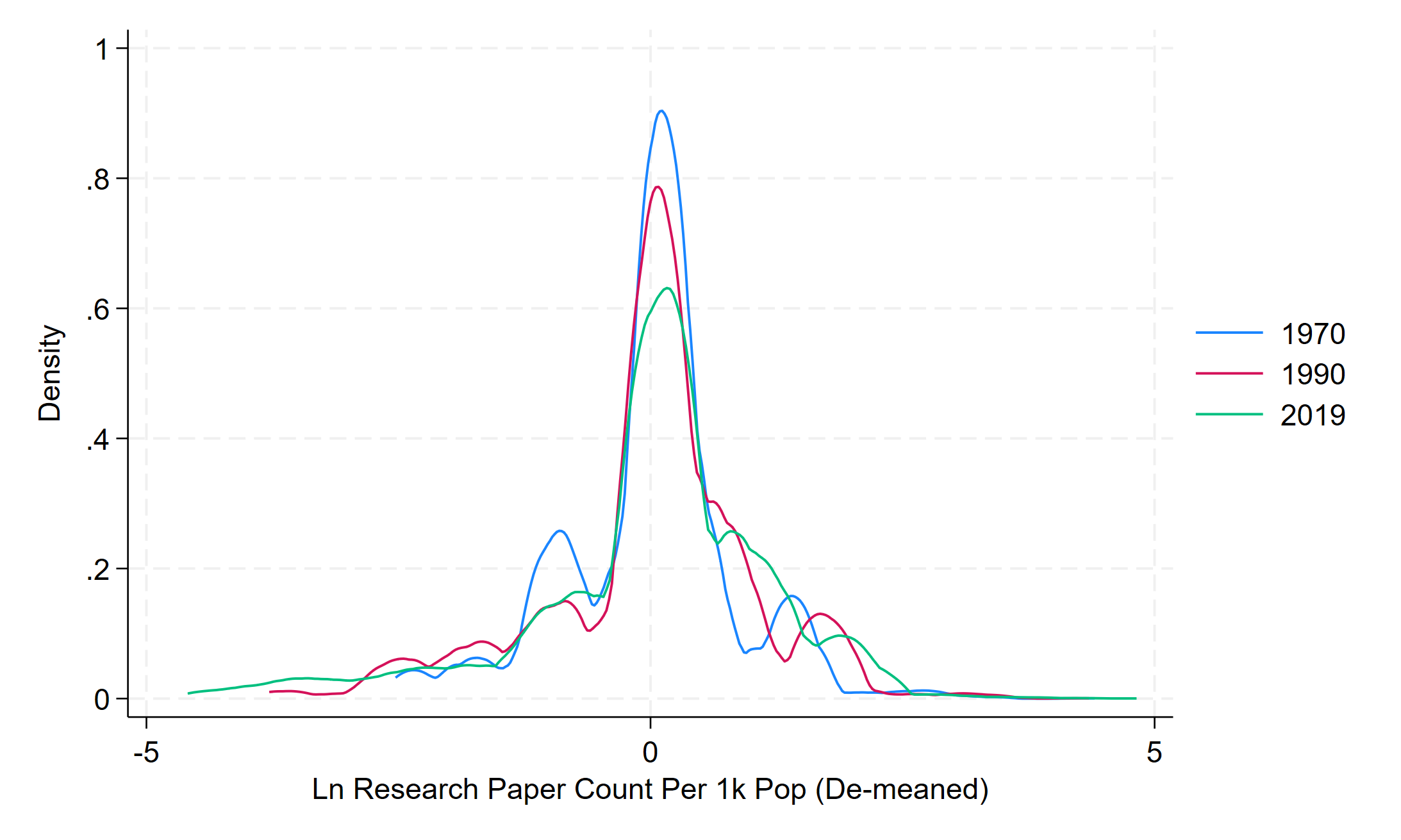}
      \caption{\centering Changing Research Activity (Population Fixed at 1970)}
        \label{fig:researcher_driven}
     \end{subfigure}
   \\
        \begin{subfigure}[b]{0.7\linewidth}
         \centering
    \includegraphics[width=\linewidth]{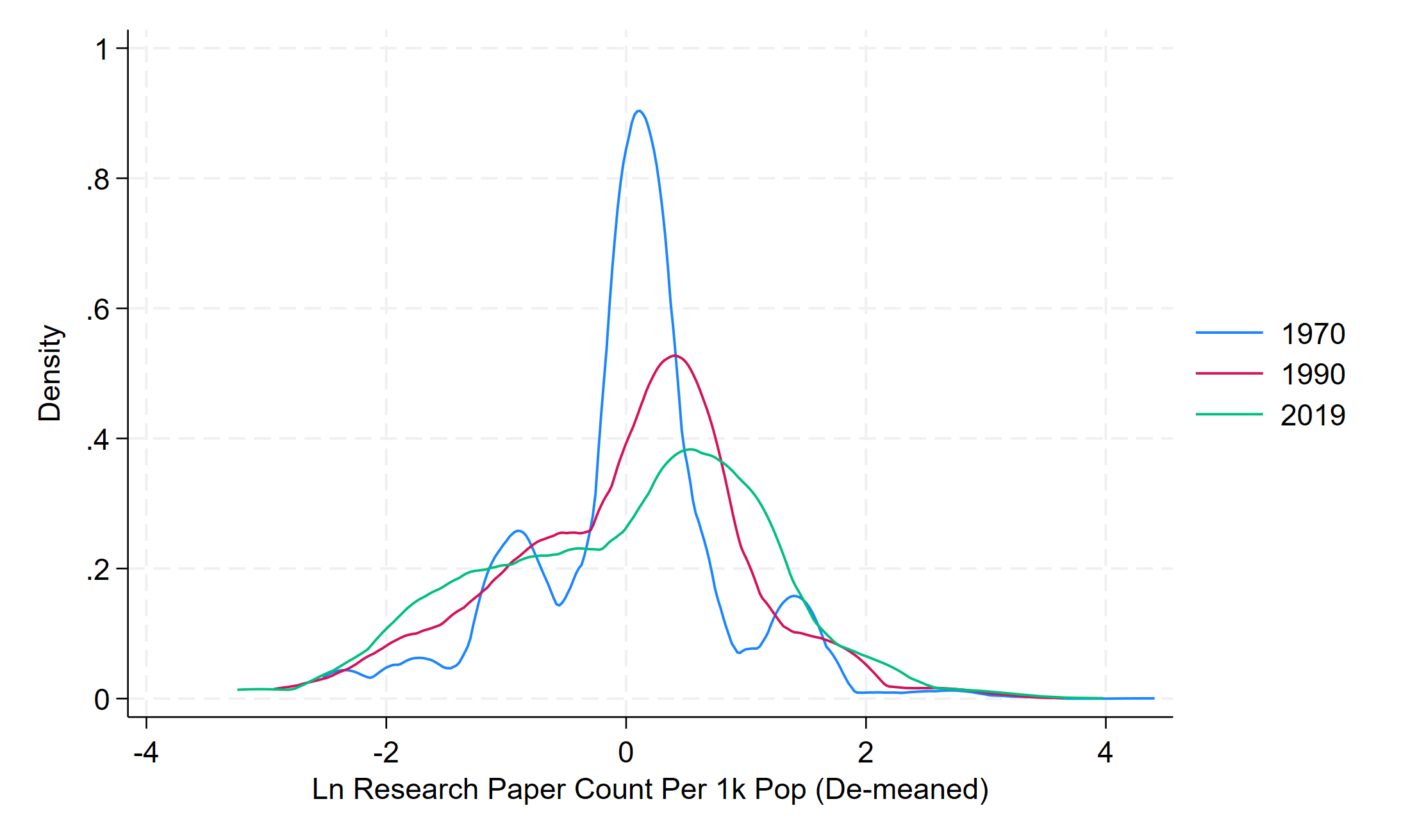}
     \caption{\centering Population Shifts (Publications Fixed at 1970)}
        \label{fig:population_driven}
     \end{subfigure}
\begin{minipage}{\textwidth}
\footnotesize{{\it Notes:} Panel (a) plots the population-weighted distribution of demeaned log research papers per 1,000 residents across MSAs, holding each MSA’s population fixed at its 1970 level while allowing publication counts to vary. Panel (b) holds publication counts fixed at their 1970 levels while allowing population to vary. The panels isolate the contributions of changing research activity and population shifts, respectively.}
\end{minipage}
        \label{fig:population_researchers}
\end{figure}

\clearpage

\begin{figure}[!h]
     \centering
        \caption{Distributions of Research Papers per Worker and Business Establishment} 
     \begin{subfigure}[b]{0.48\linewidth}
         \centering
         \includegraphics[width=\linewidth]{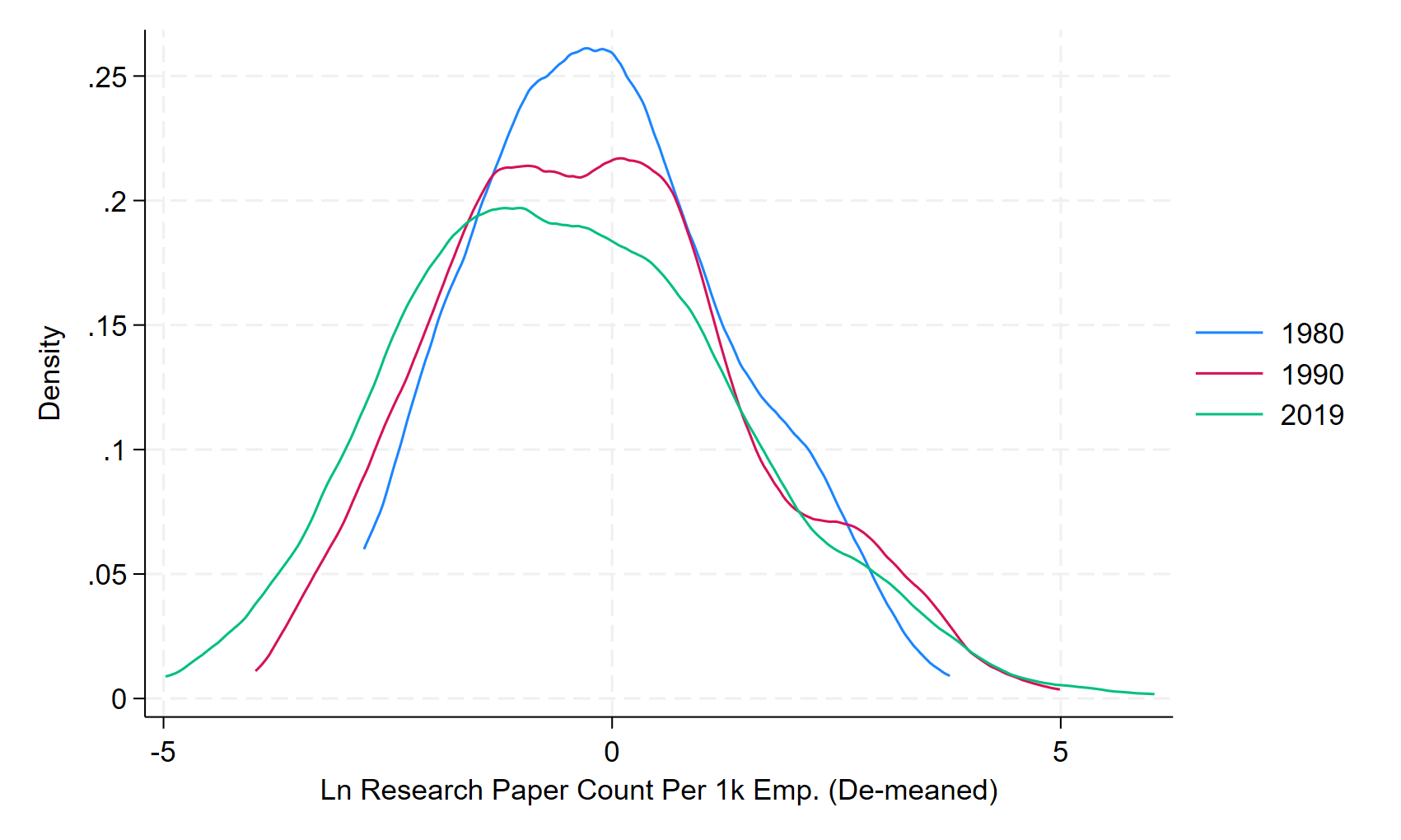}
      \caption{\centering Research Papers Per 1,000 Workers}
             \label{fig:dist_emp}
     \end{subfigure}
\hfill
        \begin{subfigure}[b]{0.48\linewidth}
         \centering
         \includegraphics[width=\linewidth]{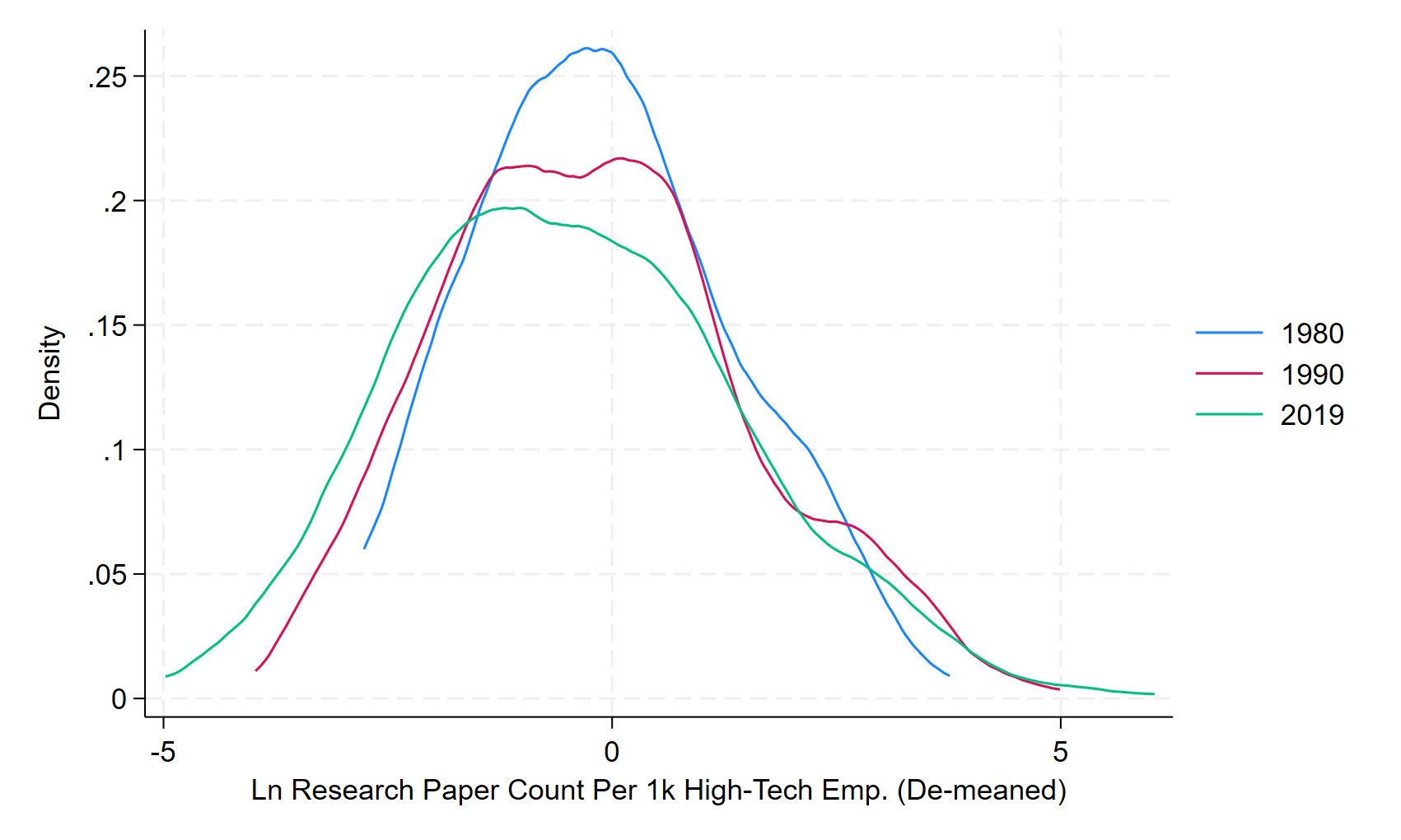}
     \caption{\centering Research Papers Per 1,000 High-Tech Workers}
        \label{fig:dist_high_tech_emp}
     \end{subfigure}

\begin{subfigure}[b]{0.48\linewidth}
         \centering
         \includegraphics[width=\linewidth]{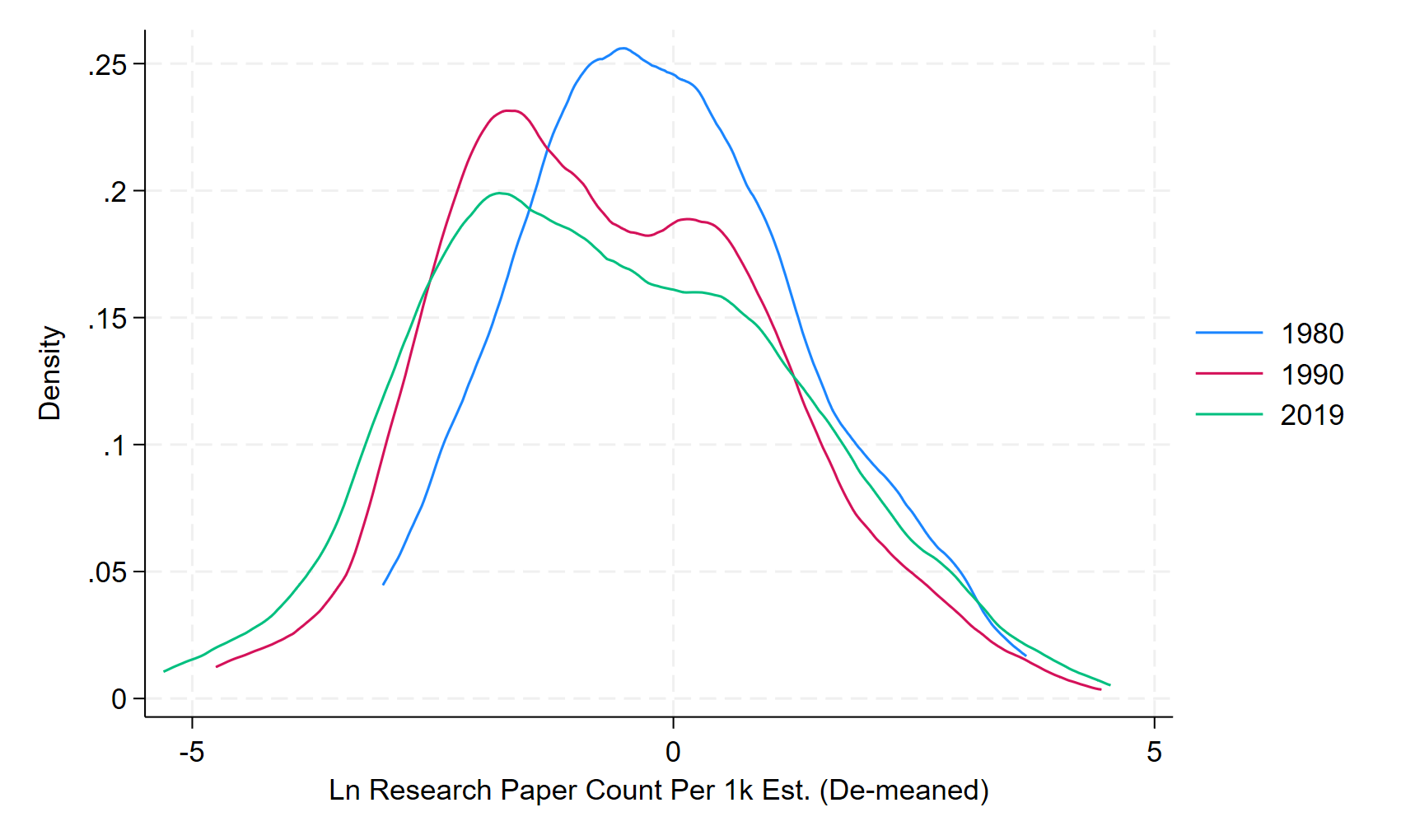}
      \caption{\centering Research Papers Per 1,000 Establishments}
             \label{fig:dist_est}
     \end{subfigure}
\hfill
        \begin{subfigure}[b]{0.48\linewidth}
         \centering
         \includegraphics[width=\linewidth]{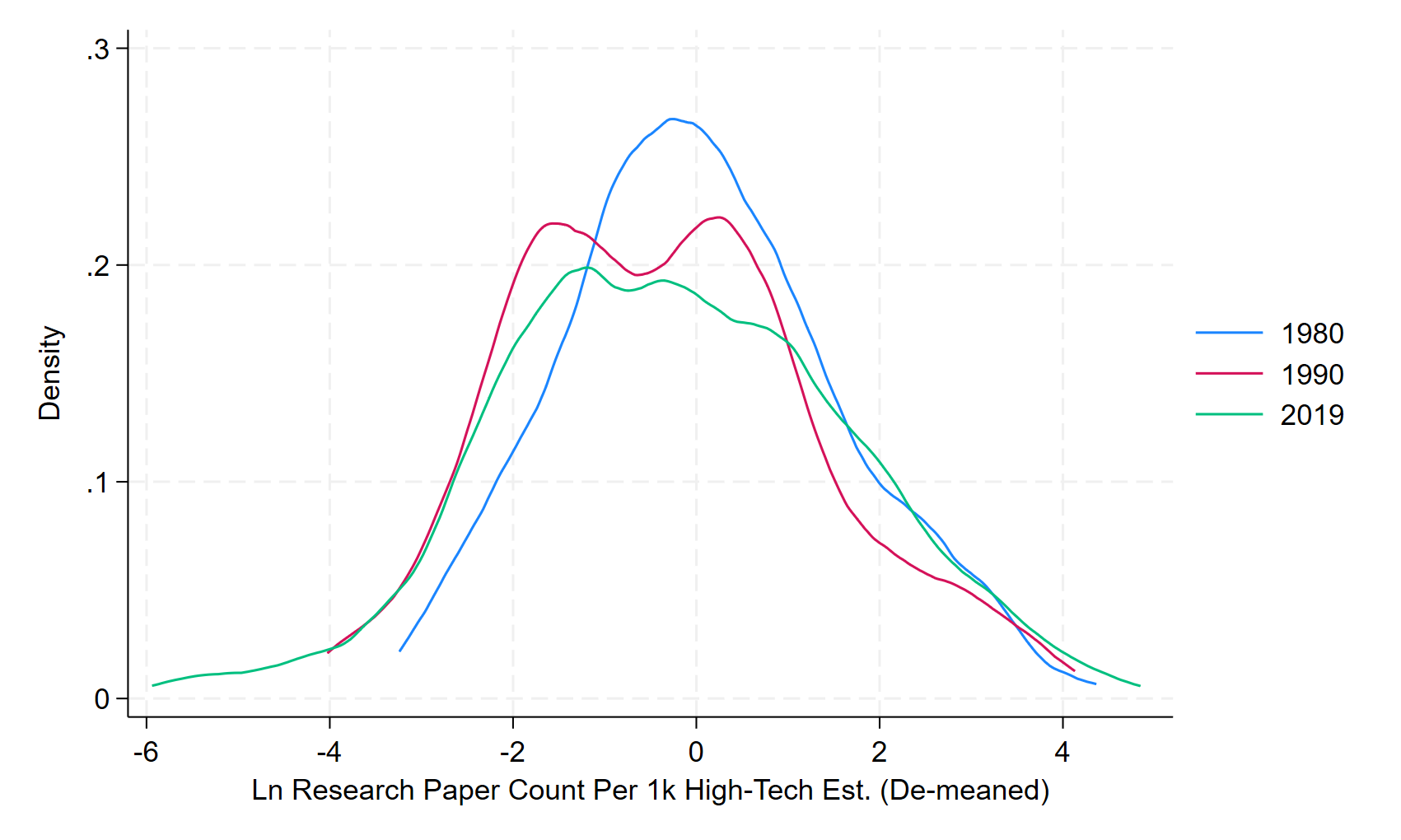}
     \caption{\centering Research Papers Per 1,000 High-Tech Establishments}
        \label{fig:dist_high_tech_est}
     \end{subfigure}
\begin{minipage}{\textwidth}
\footnotesize{{\it Notes:} The figure plots distributions of access to research papers across MSAs in 1980, 1990, and 2019. Panels (a) and (b) show demeaned log research papers per 1,000 workers and per 1,000 high-tech workers, respectively, weighted by the corresponding employment counts. Panels (c) and (d) show demeaned log research papers per 1,000 establishments and per 1,000 high-tech establishments, respectively, weighted by the corresponding establishment counts. Publication counts are aggregated over five-year periods. Employment and establishment data come from the QCEW and County Business Patterns, respectively. High-tech sectors are defined using 2017 three-digit NAICS codes, including oil and gas extraction (211), utilities (221), pipeline transportation (486), petroleum and coal products (324), chemicals (325), machinery (333), computer and electronic products (334), electrical equipment and components (335), transportation equipment (336), software (511), telecommunications (517), data processing and hosting (518), other information services (519), wholesale electronic markets and agents and brokers (425), professional, scientific, and technical services (541), and management of companies and enterprises (551).
}
\end{minipage}
        \label{fig:dist_emp_est}
\end{figure}

\clearpage

\begin{figure}[!h]
     \centering
        \caption{Distributions of Access to University and Non-University Research} 
     \begin{subfigure}[b]{0.48\linewidth}
         \centering
         \includegraphics[width=\linewidth]{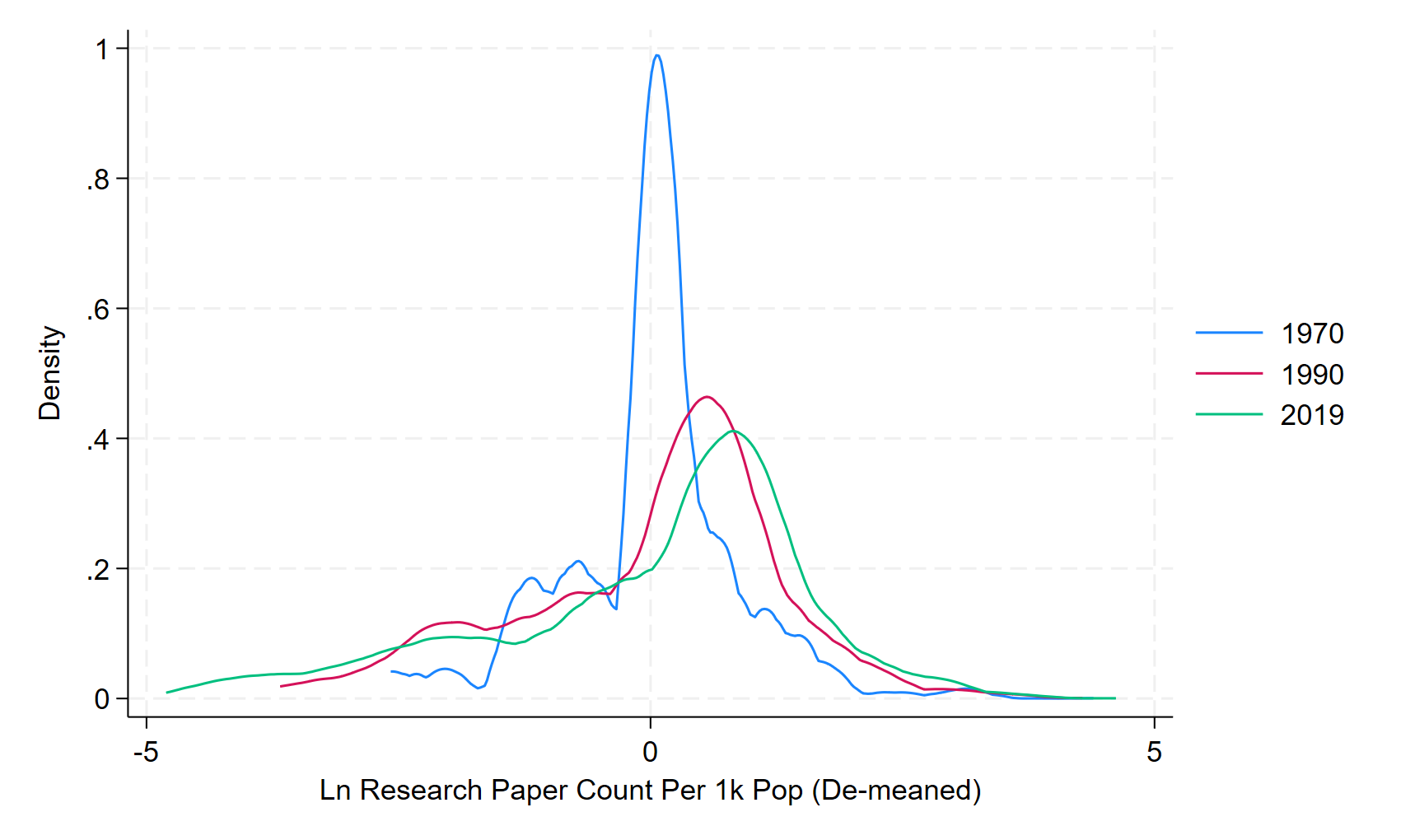}
      \caption{\centering University Research per 1,000 Residents}
             \label{fig:dist_univ}
     \end{subfigure}
\hfill
        \begin{subfigure}[b]{0.48\linewidth}
         \centering
         \includegraphics[width=\linewidth]{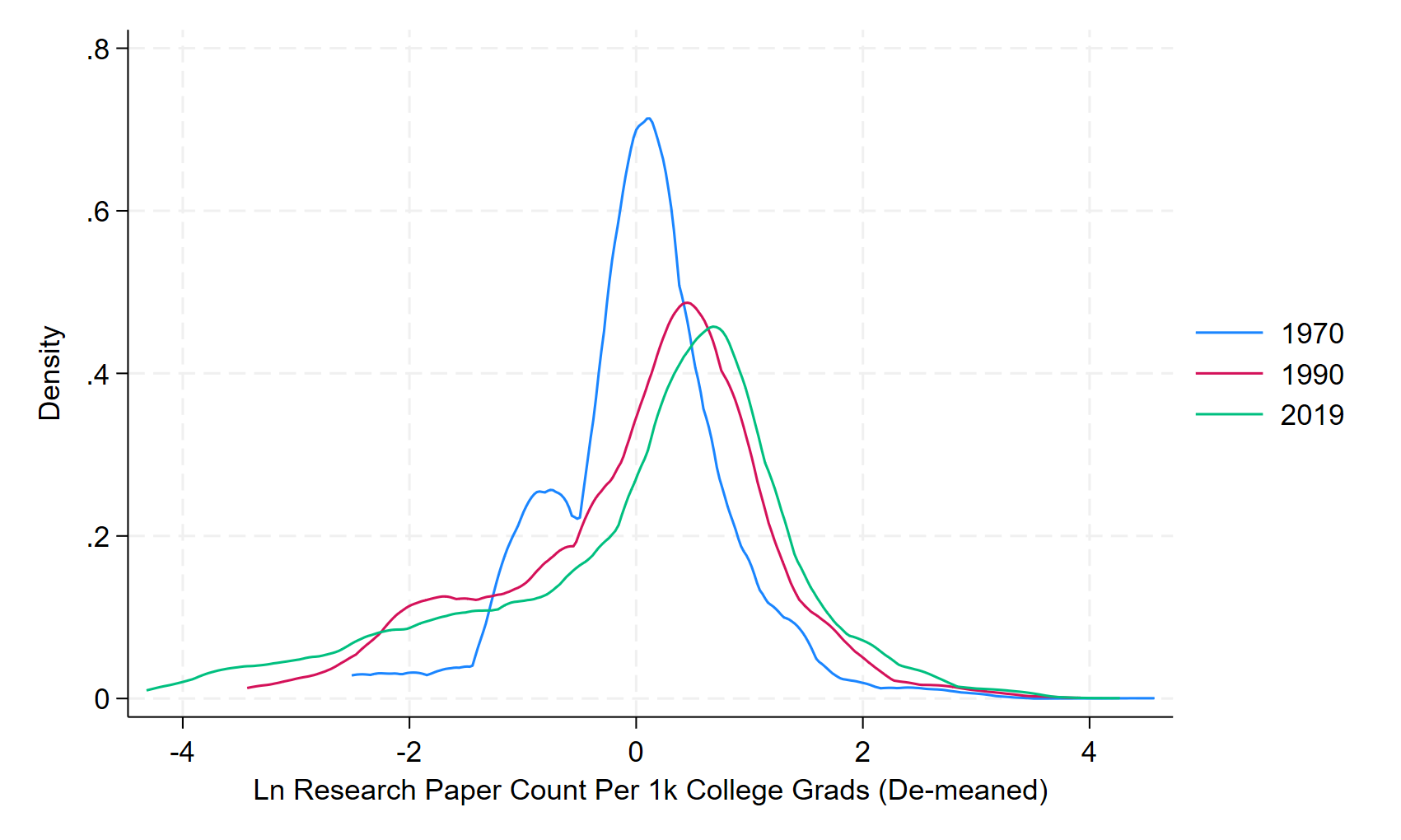}
     \caption{\centering University Research per 1,000 College Graduates}
        \label{fig:dist_univ_college}
     \end{subfigure}

\begin{subfigure}[b]{0.48\linewidth}
         \centering
         \includegraphics[width=\linewidth]{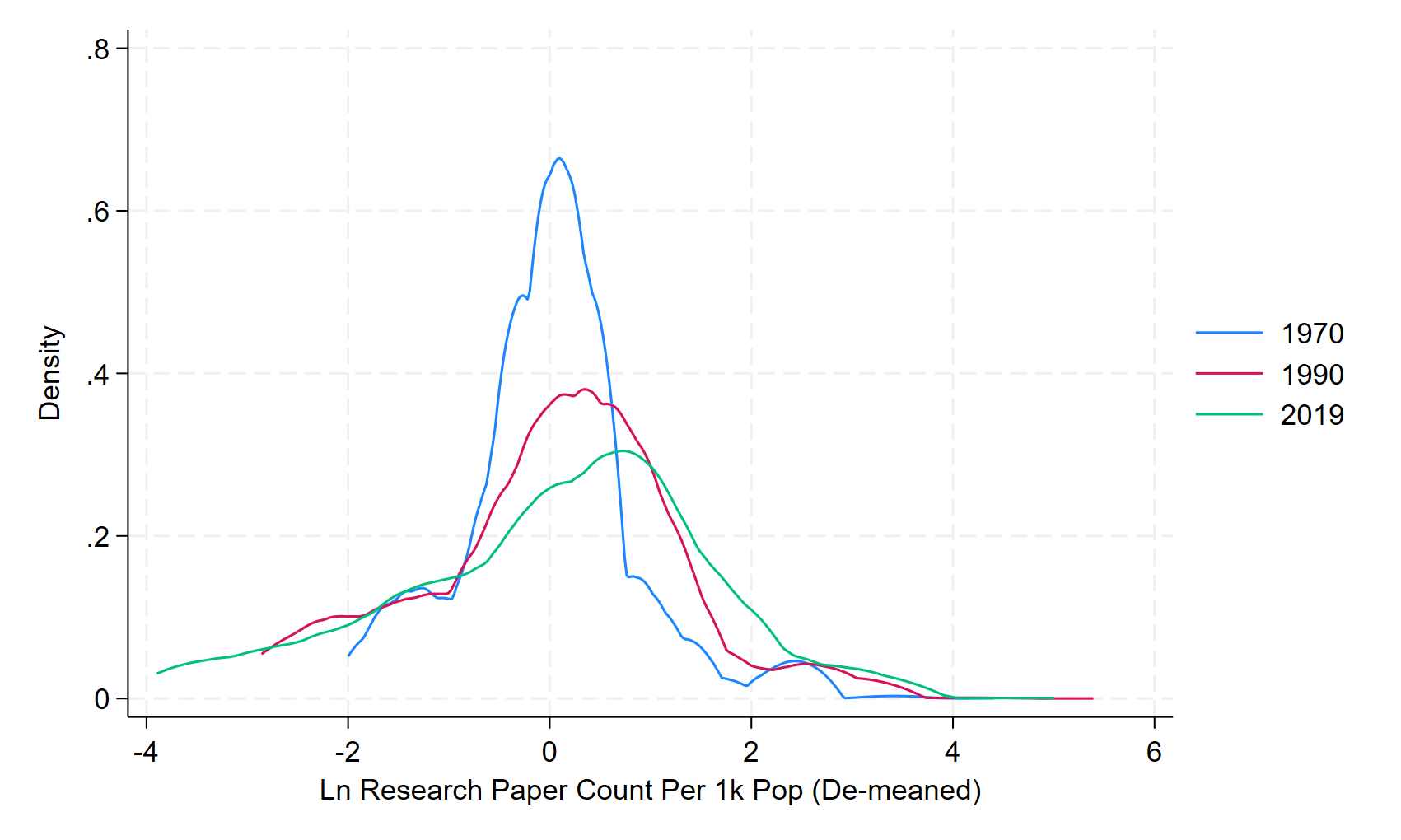}
      \caption{\centering Non-University Research per 1,000 Residents}
             \label{fig:dist_nonuniv}
     \end{subfigure}
\hfill
        \begin{subfigure}[b]{0.48\linewidth}
         \centering
         \includegraphics[width=\linewidth]{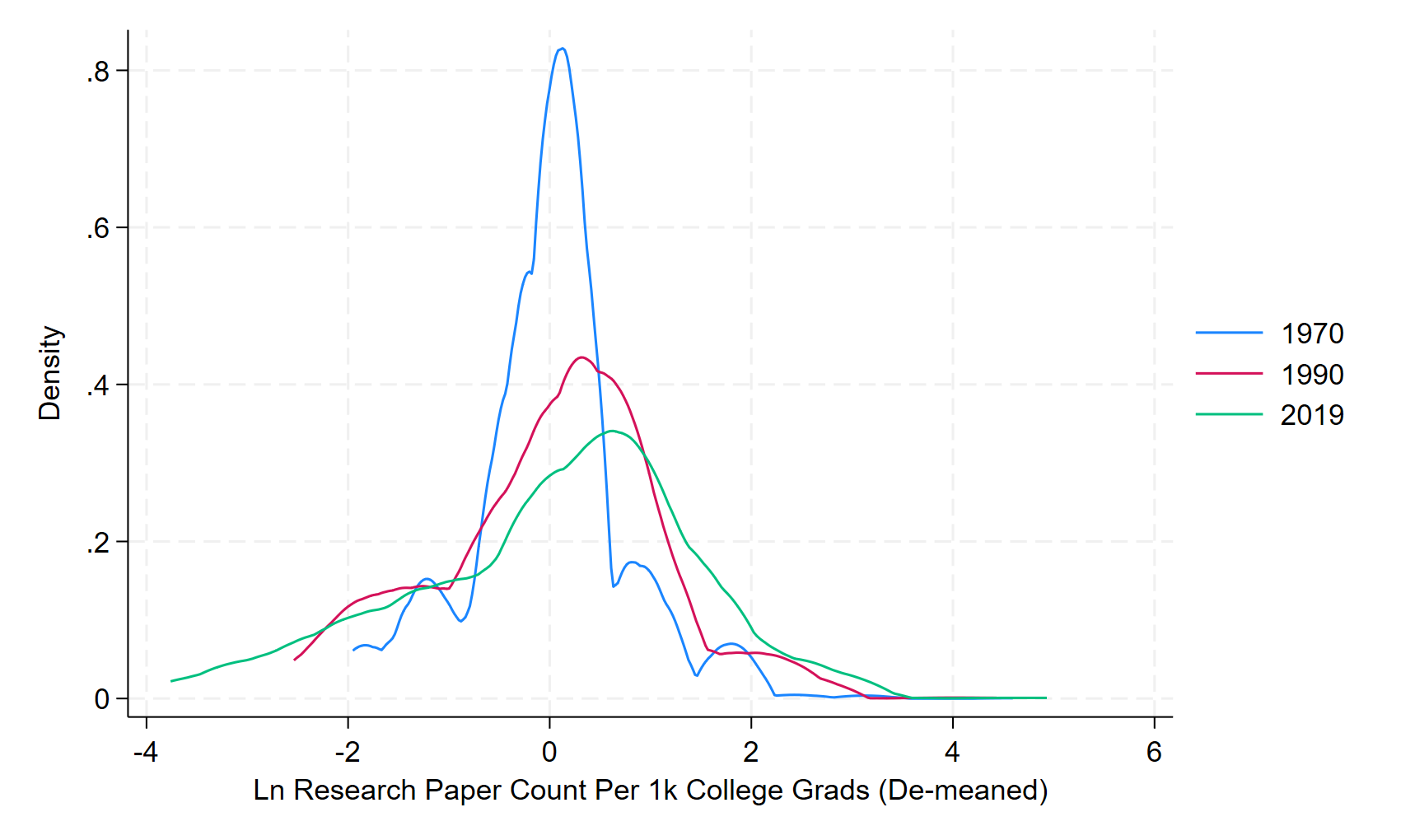}
     \caption{\centering Non-University Research per 1,000 College Graduates}
        \label{fig:dist_nonuniv_college}
     \end{subfigure}
\begin{minipage}{\textwidth}
\footnotesize{{\it Notes:} The figure plots distributions of access to research papers across MSAs in 1970, 1990, and 2019. Panels (a) and (b) include papers produced by university-affiliated authors, while Panels (c) and (d) include papers produced by authors affiliated with non-university institutions. Panels (a) and (c) show demeaned log papers per 1,000 residents, weighted by MSA population. Panels (b) and (d) show demeaned log papers per 1,000 college graduates, weighted by the college-educated population. Publication counts are aggregated over five-year periods.}
\end{minipage}
        \label{fig:dist_univ_nonuniv}
\end{figure}

\clearpage

\begin{figure}[!h]
     \centering
        \caption{Research Papers, Researchers, and Faculty per 1,000 Residents} 
     \begin{subfigure}[b]{0.47\linewidth}
         \centering
         \includegraphics[width=\linewidth]{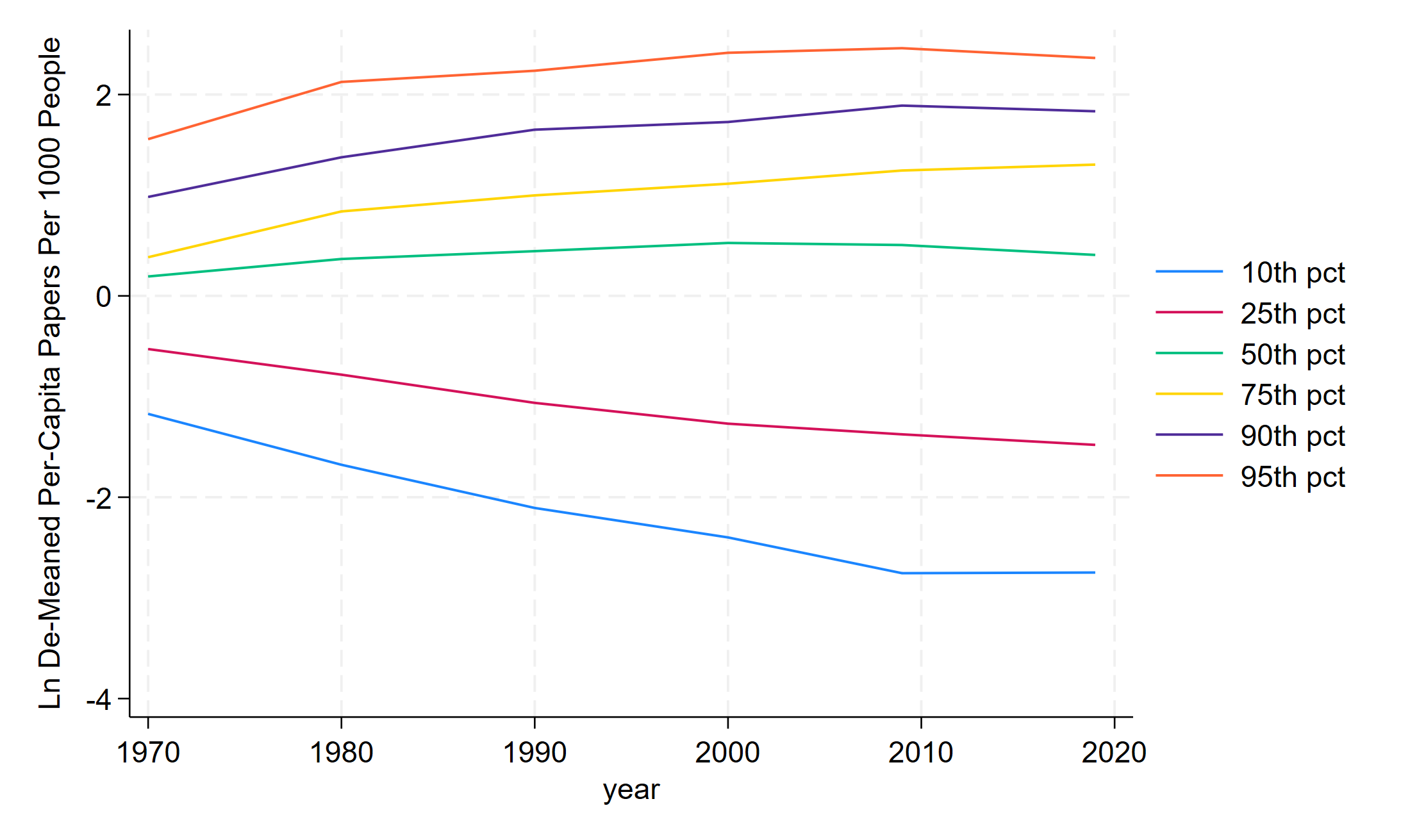}
      \caption{\centering Research Papers}
             \label{fig:per_capita_research}
     \end{subfigure}
     \hfil
    \begin{subfigure}[b]{0.47\linewidth}
         \centering
         \includegraphics[width=\linewidth]{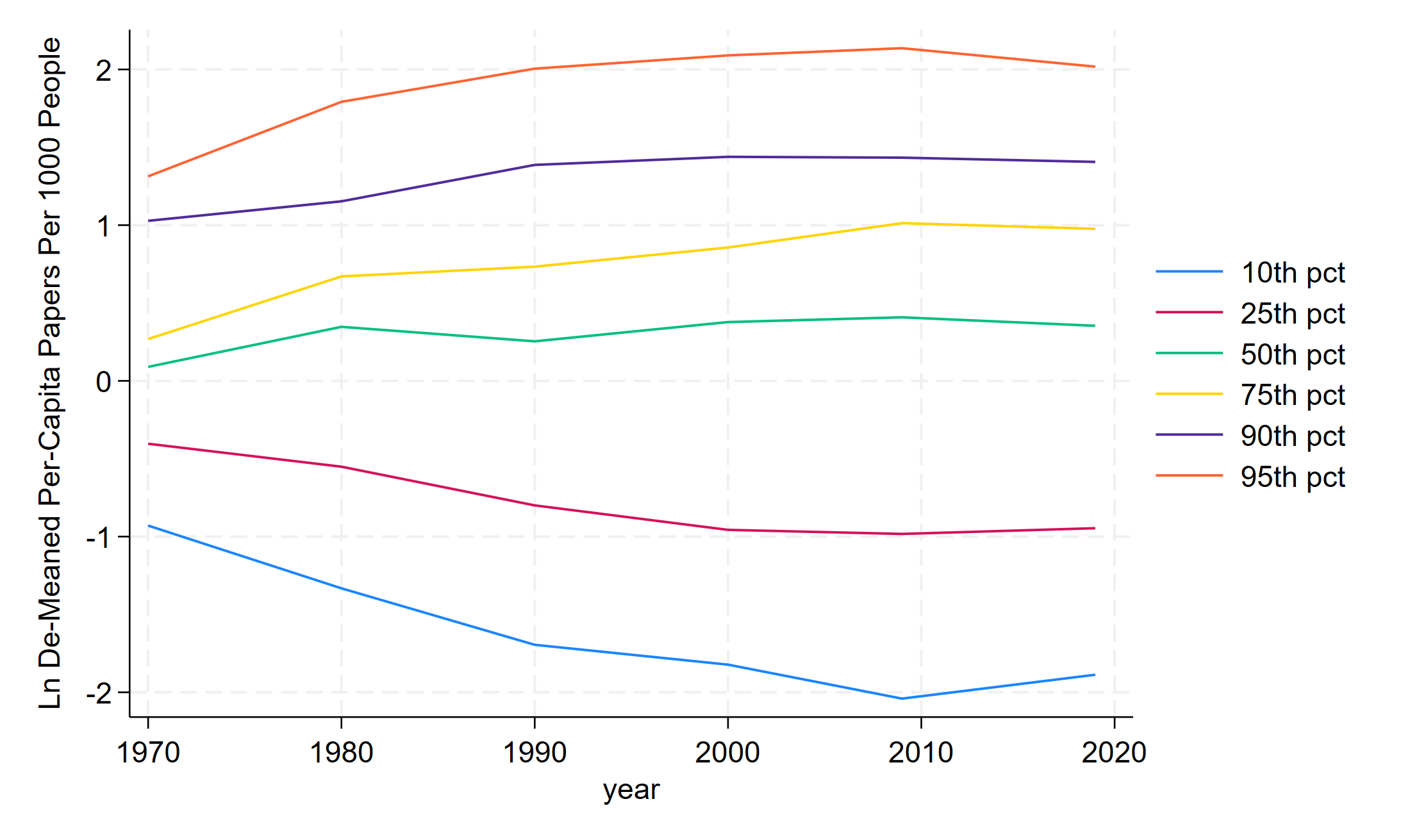}
      \caption{\centering Researchers}
\label{fig:per_capita_researcher}
     \end{subfigure}
                 \\ 
        \begin{subfigure}[b]{0.47\linewidth}
         \centering
         \includegraphics[width=\linewidth]{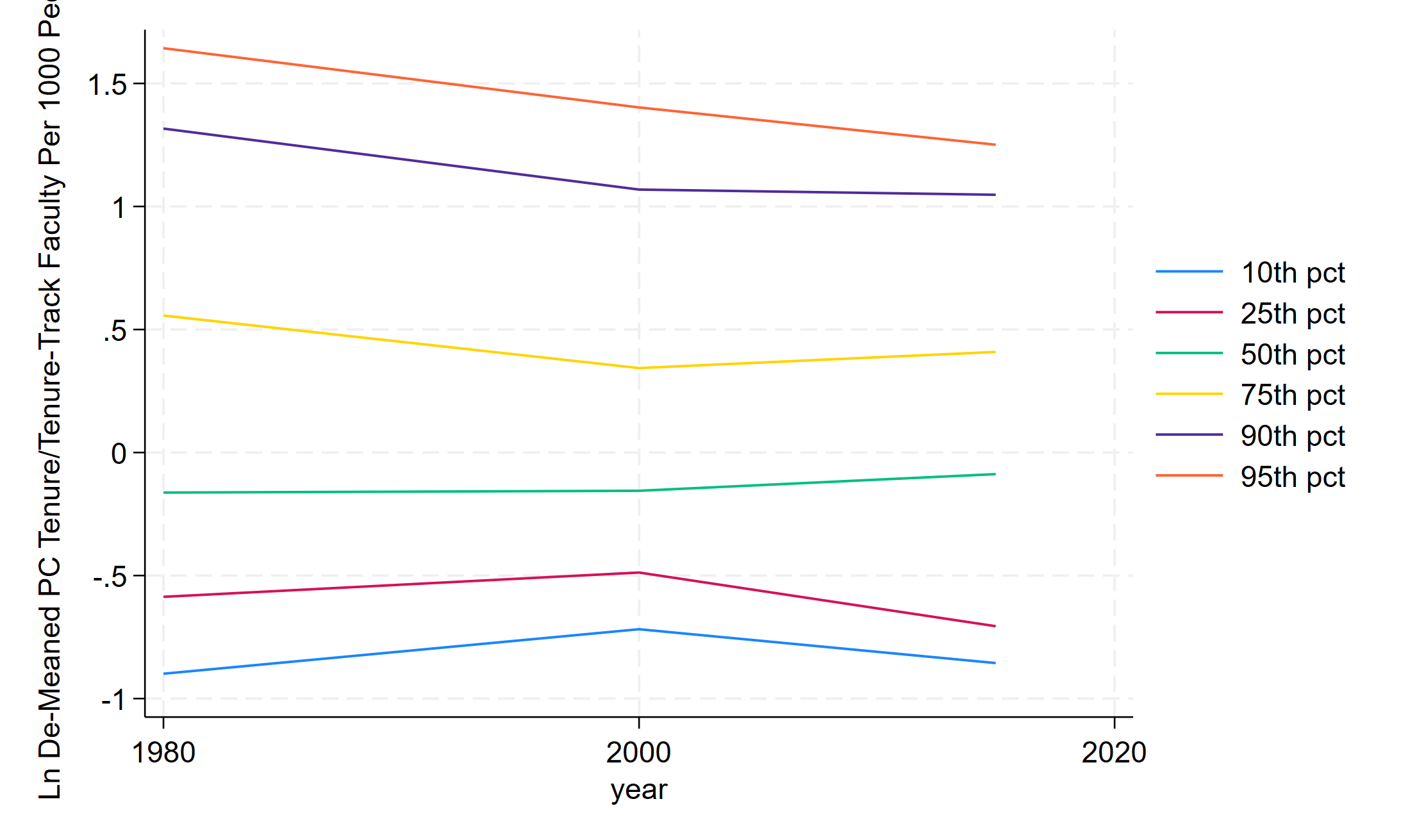}
         \label{fig:per_capita_faculty}
     \caption{\centering Tenured and Tenure-Track Faculty}
     \end{subfigure}
\begin{minipage}{\textwidth}
\footnotesize{{\it Notes:} The figure plots selected percentiles of the MSA-level distributions of demeaned log research papers, active researchers, and tenured and tenure-track faculty per 1,000 residents over time. Research papers and researcher counts come from the Microsoft Academic Graph, while faculty counts come from IPEDS.} 
\end{minipage}
        \label{fig:research_vs_faculty}
\end{figure}

\clearpage

\begin{figure}[!h]
     \centering
    \caption{Event Study: Moves to Higher vs. Lower-Output Institutions} 
     \begin{subfigure}[b]{0.4\linewidth}
         \centering
         \includegraphics[width=\linewidth]{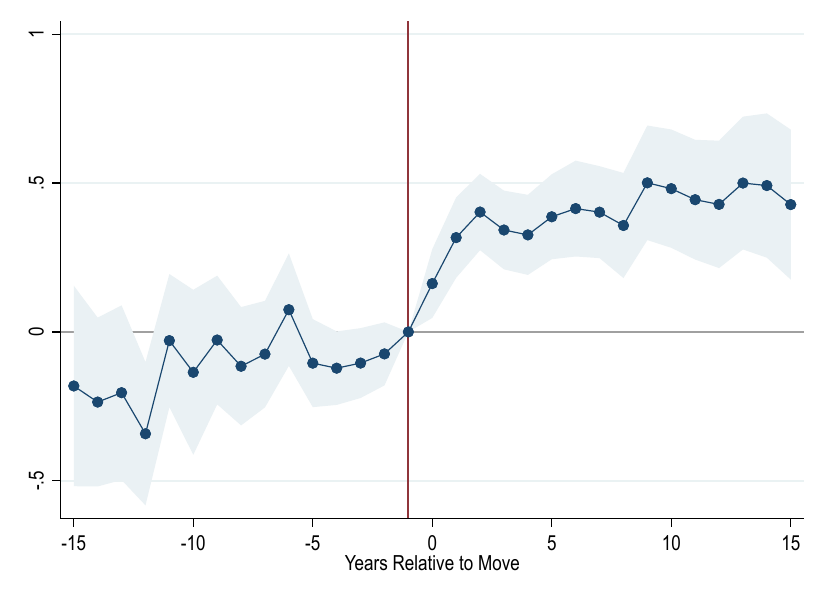}
         \caption{\centering Publications (Moves Up)}
         \label{fig:event_location_publication_up}
     \end{subfigure}
     \qquad
    \begin{subfigure}[b]{0.4\linewidth}
         \centering
         \includegraphics[width=\linewidth]{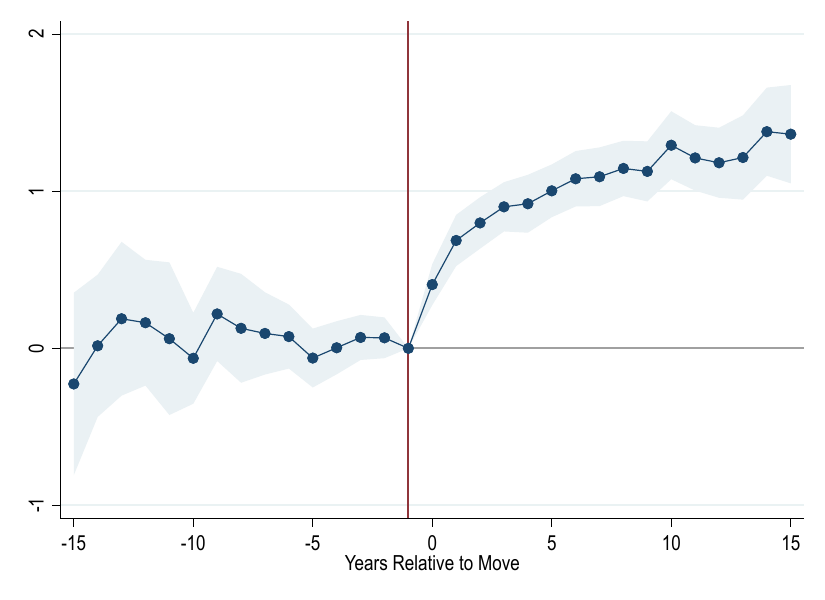}
         \caption{\centering Publications (Moves Down)}
         \label{fig:event_location_publication_down}
     \end{subfigure}
     \\
     \vspace{0.25cm}    
     \begin{subfigure}[b]{0.4\linewidth}
         \centering
         \includegraphics[width=\linewidth]{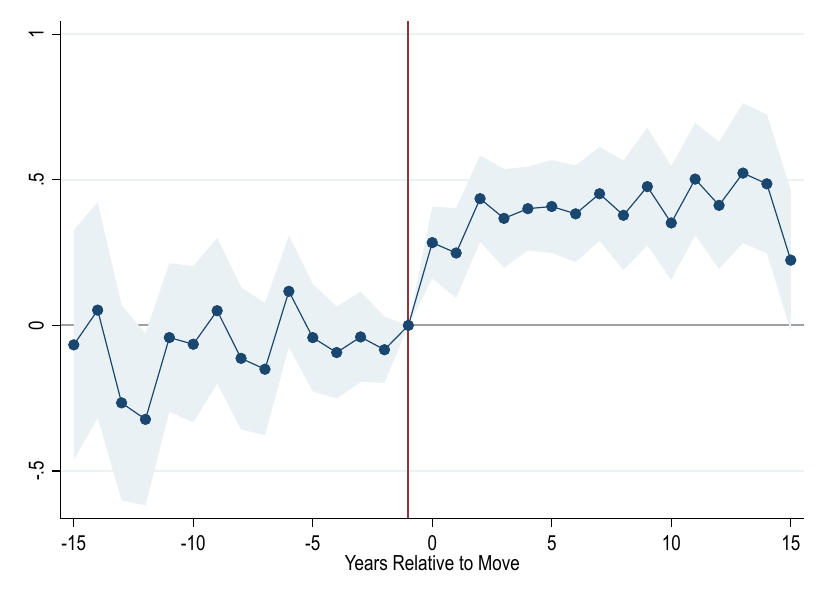}
         \caption{\centering Citations (Moves Up) }
         \label{fig:event_location_citation_up}
     \end{subfigure}
     \qquad
     \begin{subfigure}[b]{0.4\linewidth}
         \centering
         \includegraphics[width=\linewidth]{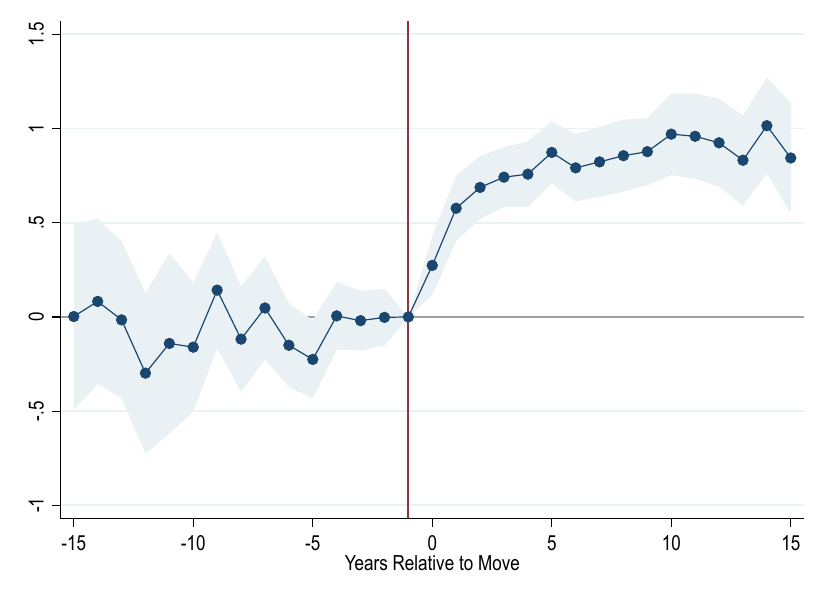}
         \caption{\centering Citations (Moves Down)}
         \label{fig:event_location_citation_down}
     \end{subfigure} 
          \\
     \vspace{0.25cm}    
     \begin{subfigure}[b]{0.4\linewidth}
         \centering
         \includegraphics[width=\linewidth]{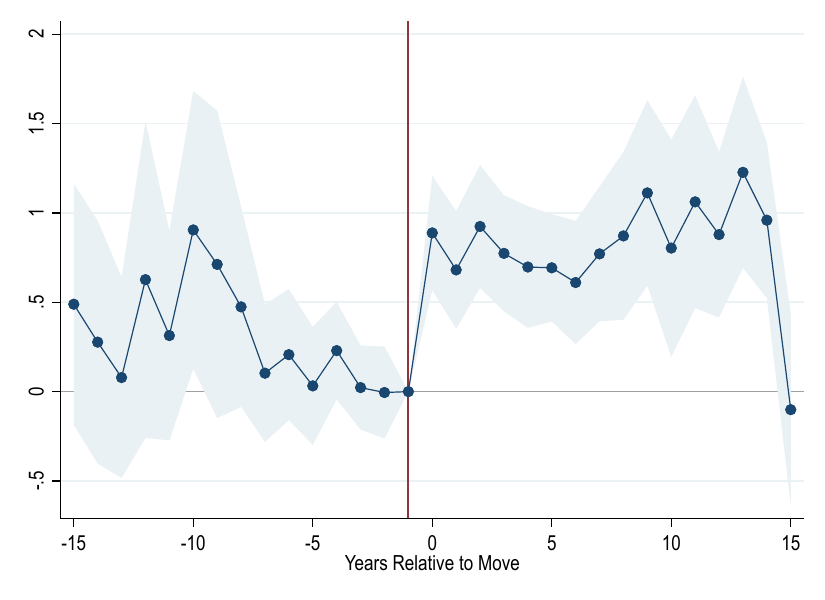}
         \caption{\centering Patent Citations (Moves Up) }
         \label{fig:event_location_patent_up}
     \end{subfigure}
     \qquad
     \begin{subfigure}[b]{0.4\linewidth}
         \centering
         \includegraphics[width=\linewidth]{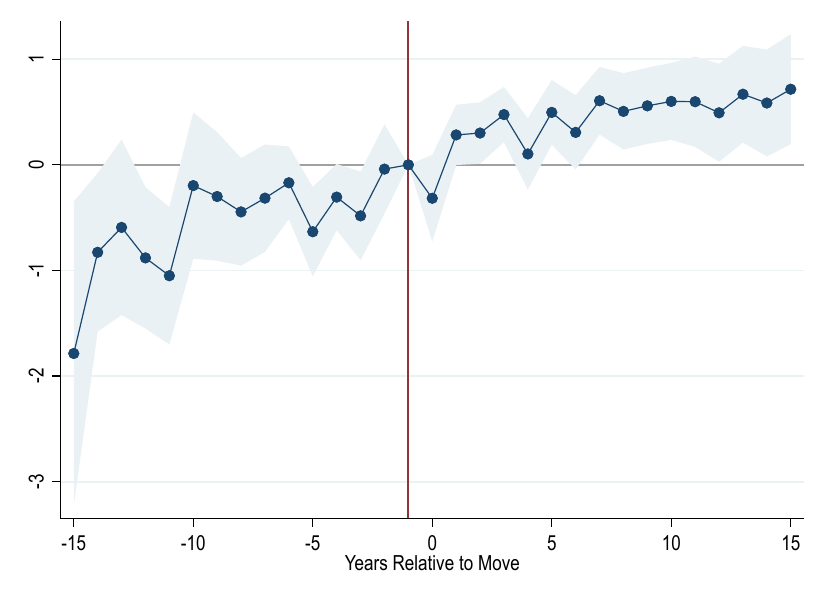}
         \caption{\centering Patent Citations (Moves Down)}
         \label{fig:event_location_patent_down}
     \end{subfigure}
    \vspace{0.25cm}
    \label{fig:event_location_direction}
    \vspace{0.25cm}
\begin{minipage}{\textwidth}
\footnotesize{{\it Notes:} Each panel plots the estimated coefficients $\delta_s$ for $-15\leq s \leq 15$ ($s \neq -1$) from Equation \ref{eq:event}, using the sample of researchers who changed institutions once (analogous to Figure \ref{fig:event_location}). The dependent variable is the inverse hyperbolic sine of publications (Panels (a) and (b)), citations (Panels (c) and (d)), and patent citations (Panels (e) and (f)). Panels (a), (c), and (e) use movers who move up (i.e., $\Delta_i>0$). Panels (b), (d), and (f) use movers who move down (i.e., $\Delta_i<0$).}
\end{minipage}
\end{figure}

\clearpage

\begin{figure}[!h]
    \captionsetup{justification=centering}
    \caption{Additive Decomposition of Research Outcomes By Field: \\ Top vs. Bottom Productivity Deciles} 
    \centering
    \begin{subfigure}[b]{0.47\linewidth}
         \centering
         \includegraphics[width=\linewidth]{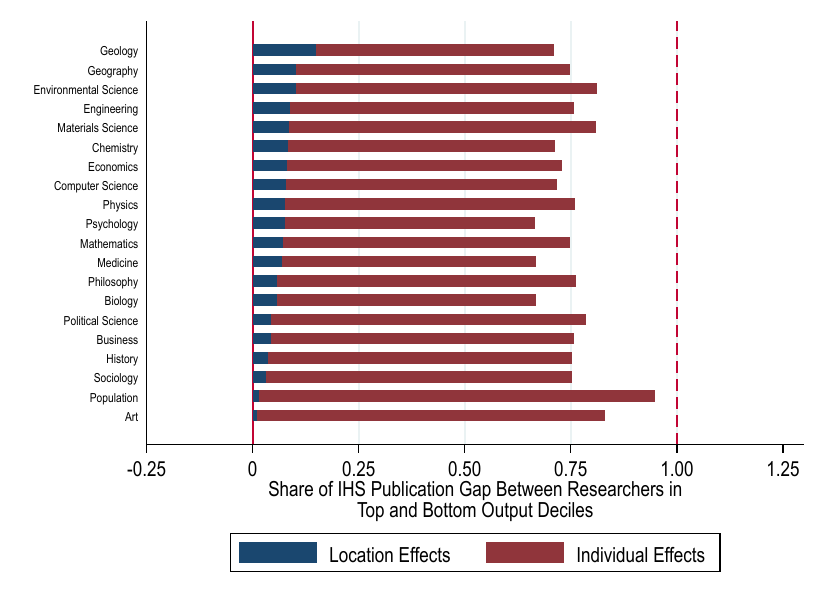}
         \caption{\centering Publications}
         \label{fig:decomp_field_output_publications}
     \end{subfigure}
     \qquad 
     \begin{subfigure}[b]{0.47\linewidth}
         \centering
         \includegraphics[width=\linewidth]{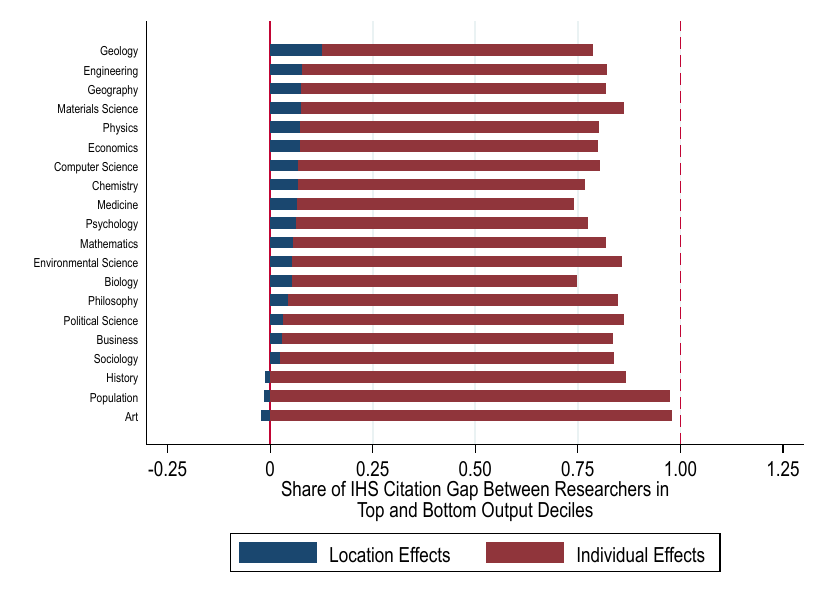}
         \caption{\centering Citations }
         \label{fig:decomp_field_output_citations}
     \end{subfigure}
     \\
     \vspace{0.25cm}    
     \begin{subfigure}[b]{0.47\linewidth}
         \centering
         \includegraphics[width=\linewidth]{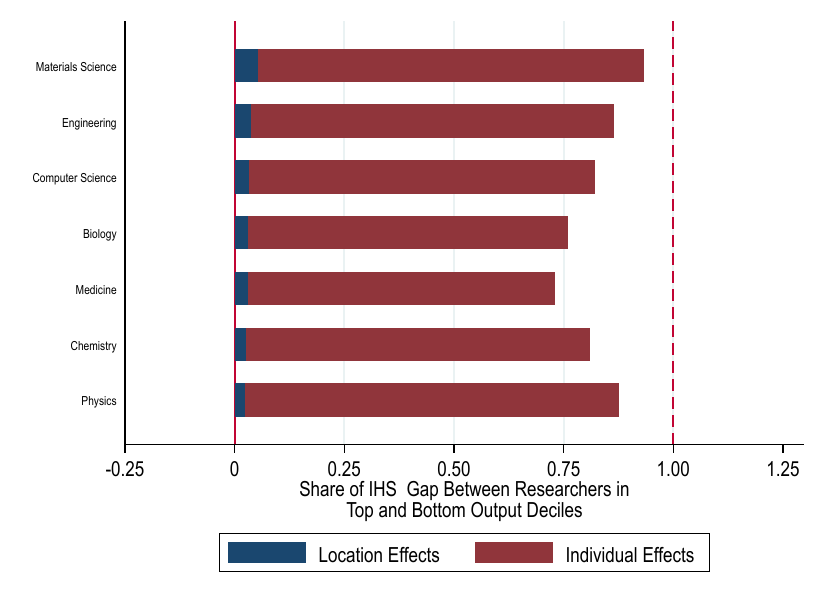}
         \caption{\centering Patent Citations}
         \label{fig:decomp_field_output_patents}
     \end{subfigure}
    \vspace{0.25cm}
    \label{fig:decomp_field_output}
    \vspace{0.25cm}
\begin{minipage}{\textwidth}
\footnotesize{{\it Notes:} This figure reports field-specific additive decompositions of research outcomes across researcher productivity groups during 2010--2015. Panels (a), (b), and (c) report results for IHS publications, citations, and patent citations, respectively. Within each field, researchers are ranked by their average value of the corresponding outcome, and the figure compares researchers in the top and bottom deciles. The researcher component includes author fixed effects and field-specific academic-age effects, while the location component is predicted using lagged observable characteristics of the institution--field and external MSA--field research environments. Component shares are calculated as the difference in each component divided by the overall difference in the corresponding outcome. Subfigure (c) includes only patent-relevant fields, i.e., fields in which more than 30\% of papers receive at least one patent citation.}
\end{minipage}
\end{figure}

\clearpage

\begin{figure}[!h]
    
    \captionsetup{justification=centering}
    \caption{Additive Decomposition of Research Outcomes By Field: \\ Top vs. Bottom Institution Size Deciles} 
    \centering
    \begin{subfigure}[b]{0.47\linewidth}
         \centering
         \includegraphics[width=\linewidth]{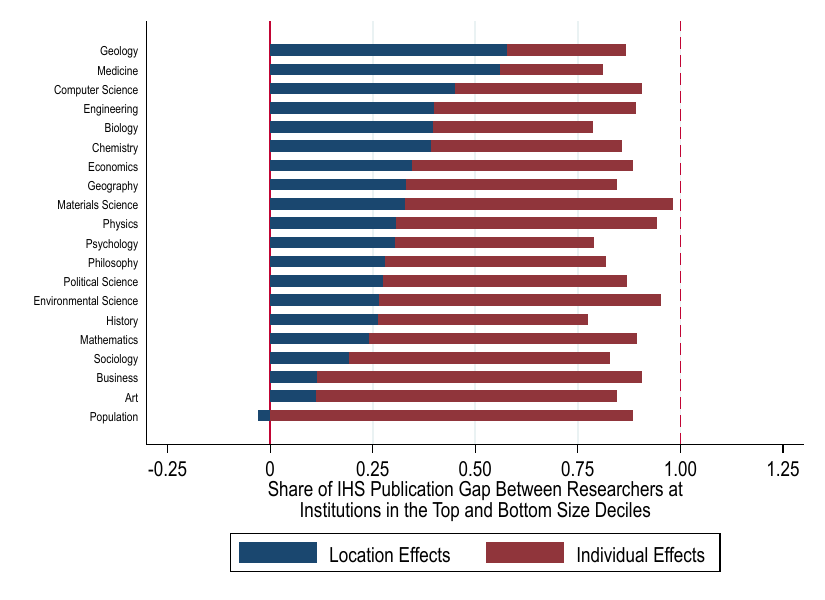}
         \caption{\centering Publications}
         \label{fig:decomp_field_instsize_publications}
     \end{subfigure}
     \qquad 
     \begin{subfigure}[b]{0.47\linewidth}
         \centering
         \includegraphics[width=\linewidth]{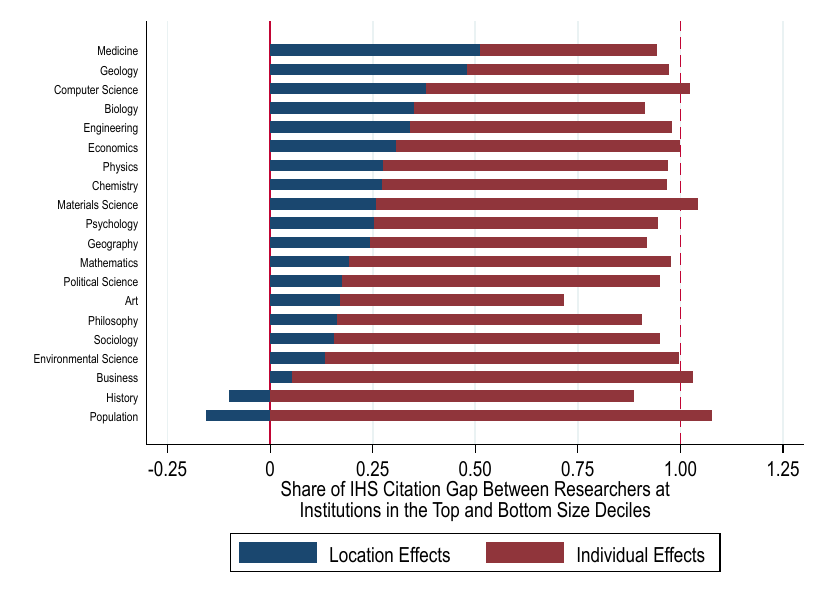}
         \caption{\centering Citations }
         \label{fig:decomp_field_instsize_citations}
     \end{subfigure}
     \\
     \vspace{0.25cm}    
     \begin{subfigure}[b]{0.47\linewidth}
         \centering
         \includegraphics[width=\linewidth]{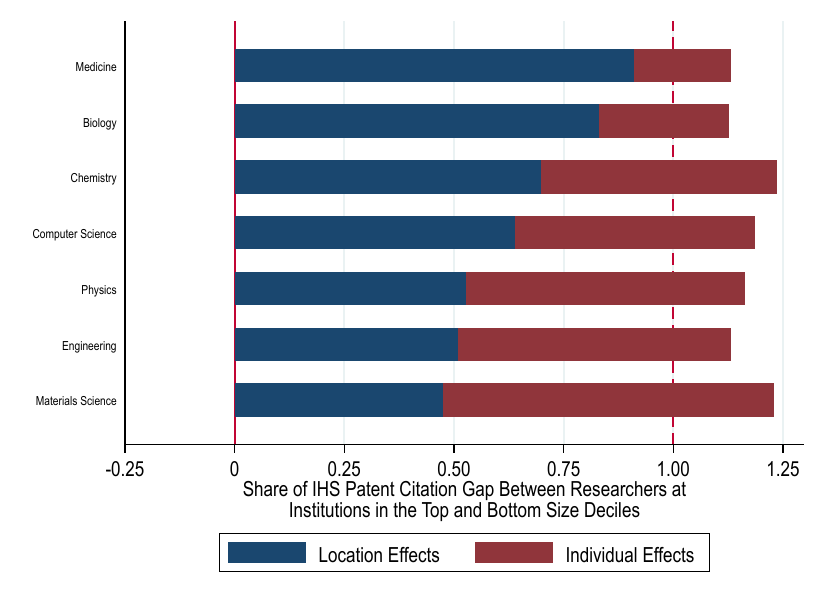}
         \caption{\centering Patent Citations}
         \label{fig:decomp_field_instsize_patents}
     \end{subfigure}
    \vspace{0.25cm}
    \label{fig:decomp_field_instsize}
    \vspace{0.25cm}
\begin{minipage}{\textwidth}
\footnotesize{{\it Notes:} This figure reports field-specific additive decompositions of research outcomes across researchers by institution size during 2010--2015. Panels (a), (b), and (c) report results for IHS publications, citations, and patent citations, respectively. Within each field, institutions are ranked by the average number of active researchers, and the figure compares researchers in the top and bottom deciles. The researcher component includes author fixed effects and field-specific academic-age effects, while the location component is predicted using lagged observable characteristics of the institution--field and external MSA--field research environments. Component shares are calculated as the difference in each component divided by the overall difference in the corresponding outcome. Panel (c) includes only patent-relevant fields, i.e., fields in which more than 30\% of papers receive at least one patent citation.}
\end{minipage}
\end{figure}

\clearpage

\begin{figure}[!h]
    
    \captionsetup{justification=centering}
    \caption{Additive Decomposition of Research Outcomes By Field: \\ Top vs. Bottom MSA Size Deciles} 
    \centering
    \begin{subfigure}[b]{0.47\linewidth}
         \centering
         \includegraphics[width=\linewidth]{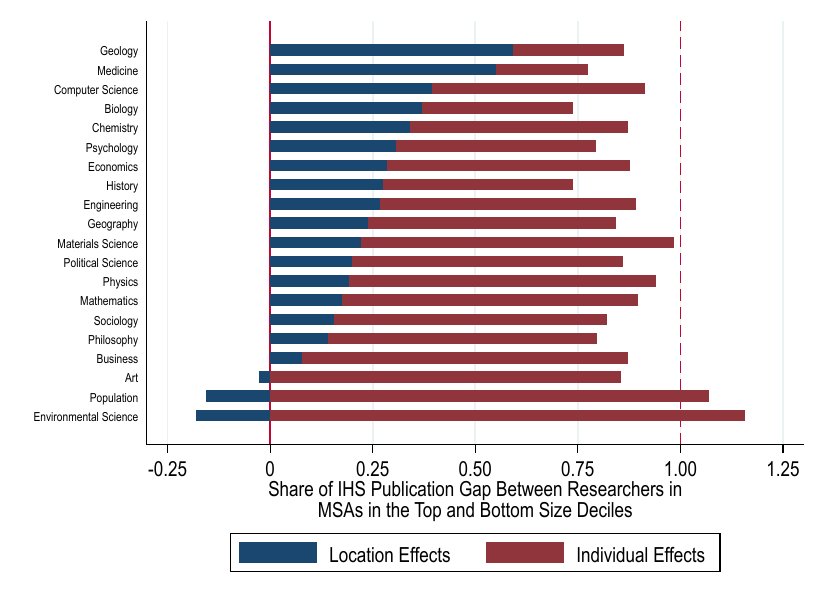}
         \caption{\centering Publications}
         \label{fig:decomp_field_msasize_publications}
     \end{subfigure}
     \qquad 
     \begin{subfigure}[b]{0.47\linewidth}
         \centering
         \includegraphics[width=\linewidth]{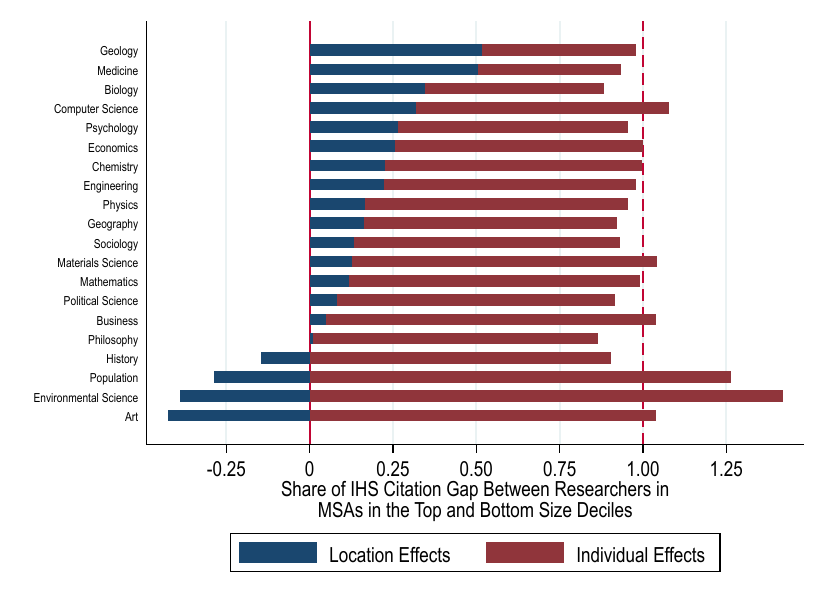}
         \caption{\centering Citations }
         \label{fig:decomp_field_msasize_citations}
     \end{subfigure}
     \\
     \vspace{0.25cm}    
     \begin{subfigure}[b]{0.47\linewidth}
         \centering
         \includegraphics[width=\linewidth]{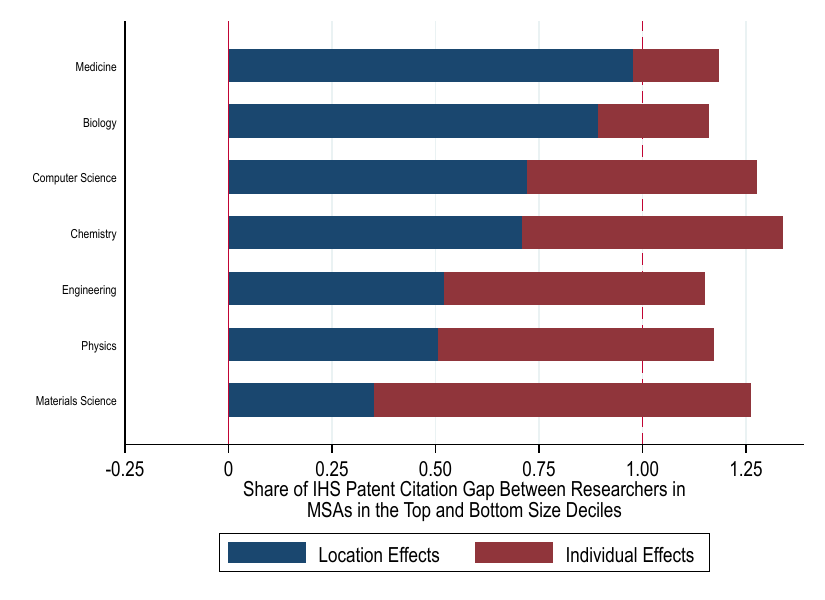}
         \caption{\centering Patent Citations}
         \label{fig:decomp_field_msasize_patents}
     \end{subfigure}
    \vspace{0.25cm}
    \label{fig:decomp_field_msasize}
    \vspace{0.25cm}
\begin{minipage}{\textwidth}
\footnotesize{{\it Notes:} This figure reports field-specific additive decompositions of research outcomes across researchers by MSA size during 2010--2015. Panels (a), (b), and (c) report results for IHS publications, citations, and patent citations, respectively. Within each field, MSAs are ranked by the average number of active researchers, and the figure compares researchers in the top and bottom deciles. The researcher component includes author fixed effects and field-specific academic-age effects, while the location component is predicted using lagged observable characteristics of the institution--field and external MSA--field research environments. Component shares are calculated as the difference in each component divided by the overall difference in the corresponding outcome. Subfigure (c) includes only patent-relevant fields, i.e., fields in which more than 30\% of papers receive at least one patent citation.}
\end{minipage}
\end{figure}

\clearpage

\begin{figure}[!h]
     \centering
    \caption{Leave-One-Out Test for IV Estimates: Top Contributing Subfields} 
     \begin{subfigure}[b]{0.45\linewidth}
         \centering
         \includegraphics[width=\linewidth]{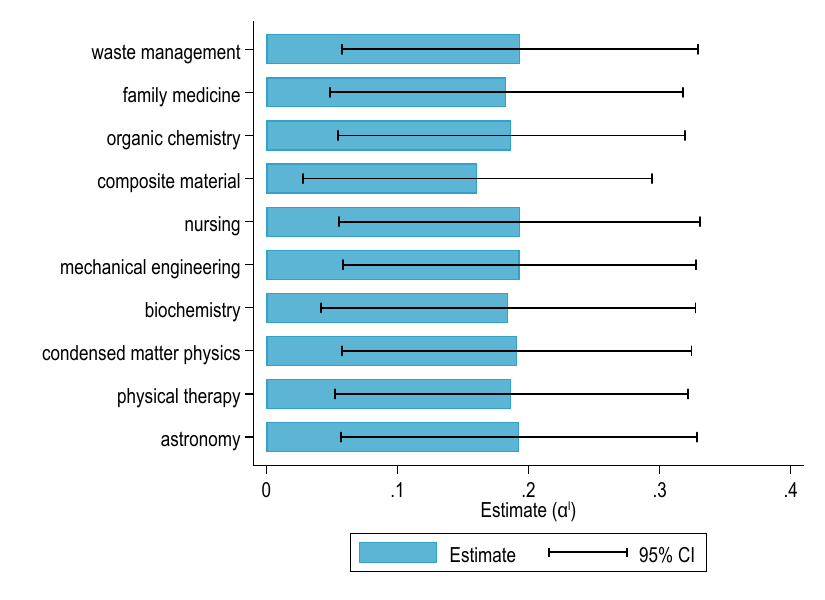}
         \caption{\centering Publications ($\alpha^I$)}
         \label{fig:leave_one_out_metrofips_pub_alphaI}
     \end{subfigure}
     \qquad
    \begin{subfigure}[b]{0.45\linewidth}
         \centering
         \includegraphics[width=\linewidth]{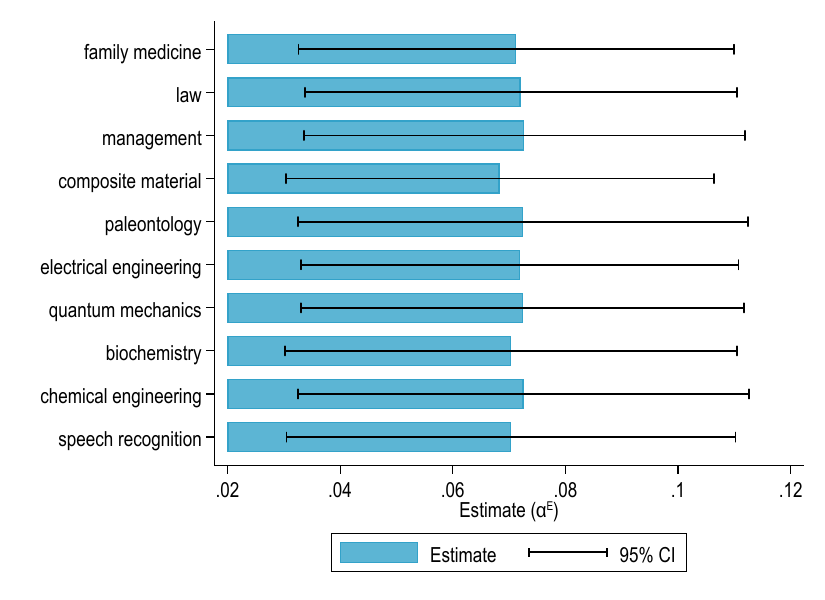}
         \caption{\centering Publications ($\alpha^E$)}
         \label{fig:leave_one_out_metrofips_pub_alphaE}
     \end{subfigure}
     \\
     \vspace{0.25cm}    
     \begin{subfigure}[b]{0.45\linewidth}
         \centering
         \includegraphics[width=\linewidth]{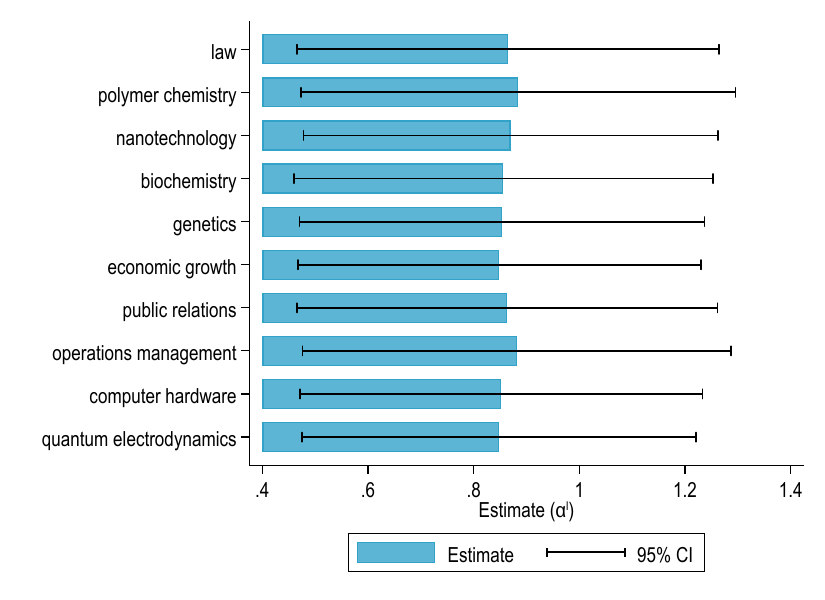}
         \caption{\centering Citations ($\alpha^I$)}
         \label{fig:leave_one_out_metrofips_cit_alphaI}
     \end{subfigure}
     \qquad
     \begin{subfigure}[b]{0.45\linewidth}
         \centering
         \includegraphics[width=\linewidth]{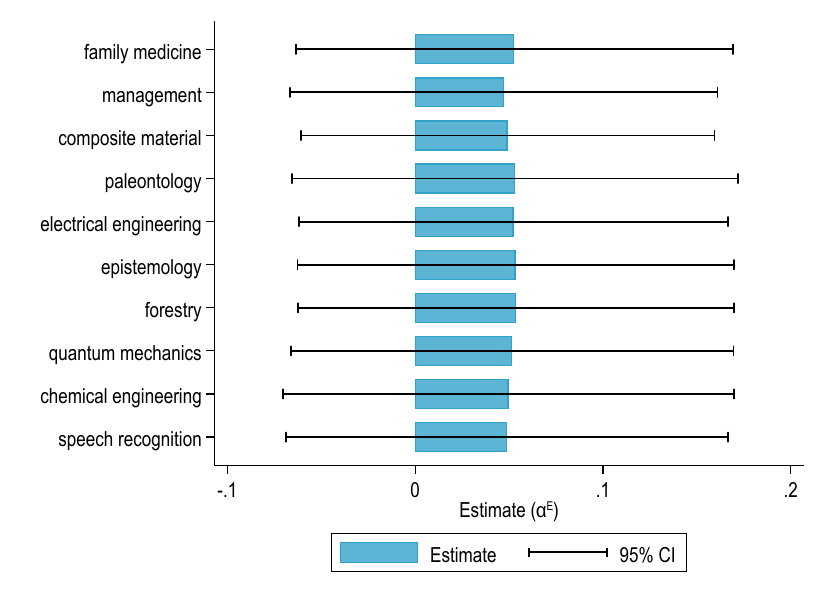}
         \caption{\centering Citations ($\alpha^E$)}
         \label{fig:leave_one_out_metrofips_cit_alphaE}
     \end{subfigure} 
      \\
     \vspace{0.25cm}    
     \begin{subfigure}[b]{0.45\linewidth}
         \centering
         \includegraphics[width=\linewidth]{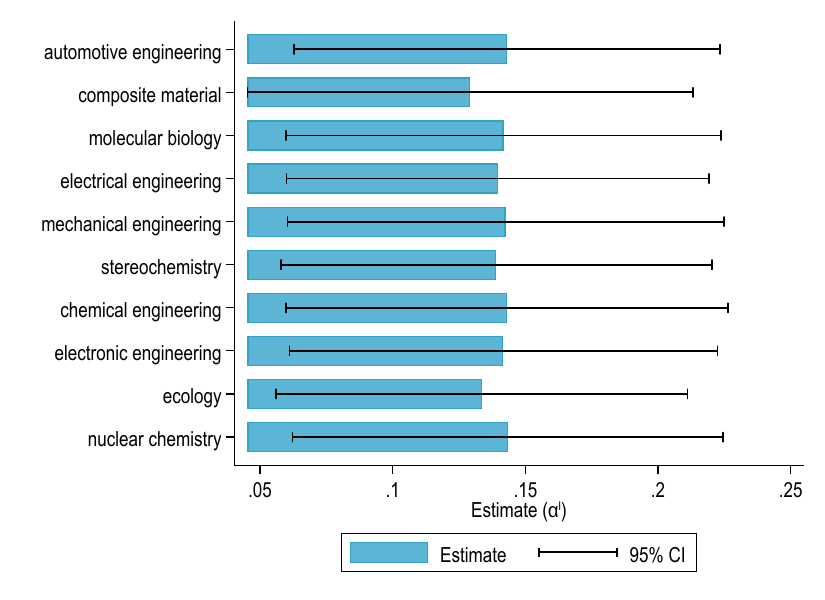}
         \caption{\centering Patent Citations ($\alpha^I$)}
         \label{fig:leave_one_out_metrofips_pcit_alphaI}
     \end{subfigure}
     \qquad
     \begin{subfigure}[b]{0.45\linewidth}
         \centering
         \includegraphics[width=\linewidth]{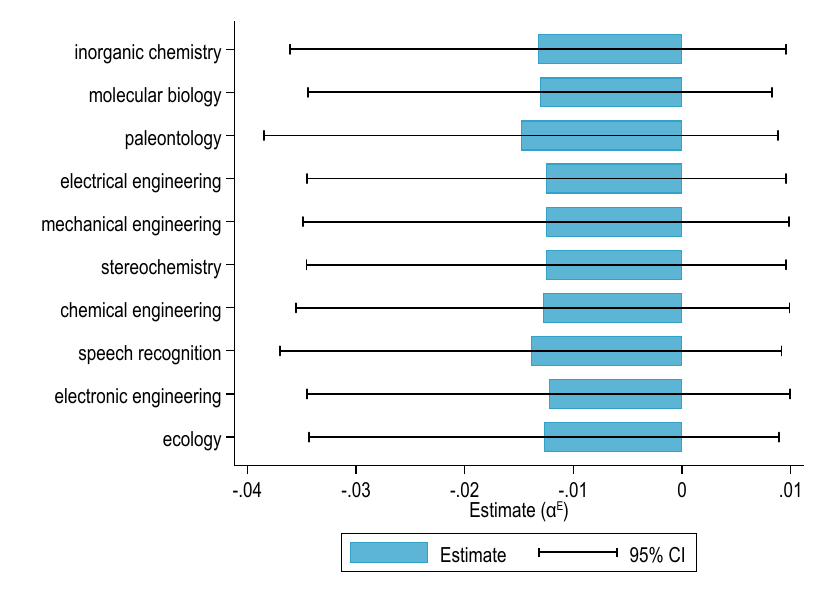}
         \caption{\centering Patent Citations ($\alpha^E$)}
         \label{fig:leave_one_out_metrofips_pcit_alphaE}
     \end{subfigure} 
    \vspace{0.25cm}
\begin{minipage}{\textwidth}
\footnotesize{{\it Notes:} This figure reports leave-one-subfield-out estimates from the preferred IV specification. For each outcome, we reconstruct the internal and external Bartik instruments 364 times, each time omitting one detailed subfield, and display the ten estimates with the largest absolute deviations from the corresponding baseline estimate. Panels (a) and (b) report results for IHS publications, Panels (c) and (d) for IHS citations, and Panels (e) and (f) for IHS patent citations. Panels (a), (c), and (e) report the own-institution elasticity, $\alpha^I$, while Panels (b), (d), and (f) report the external-cluster elasticity, $\alpha^E$. 95\% confidence intervals are displayed based on standard errors clustered at the institution level.}
\label{fig:leave_one_out}
\end{minipage}
\end{figure}

\clearpage

\begin{figure}[p]
    \centering
    \caption{Anderson--Rubin Confidence Sets for IV Estimates}
    \label{fig:ar}

    \begin{subfigure}[b]{0.8\linewidth}
        \centering
        \includegraphics[
    width=\linewidth,
    height=0.5\textheight,
    keepaspectratio
]{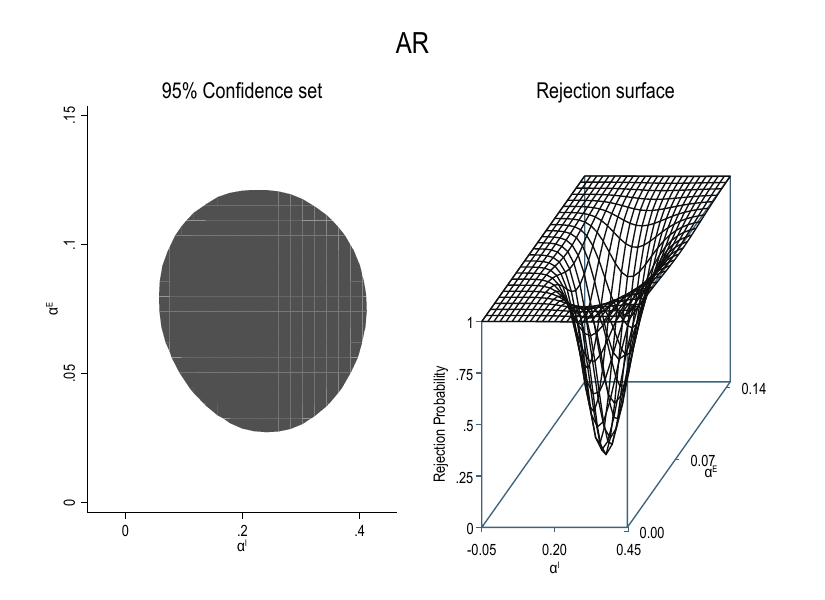}
        \caption{Publications}
        \label{fig:ar_pub}
    \end{subfigure}

    \vspace{1em}

    \begin{subfigure}[b]{0.8\linewidth}
        \centering
        \includegraphics[
    width=\linewidth,
    height=0.5\textheight,
    keepaspectratio
]{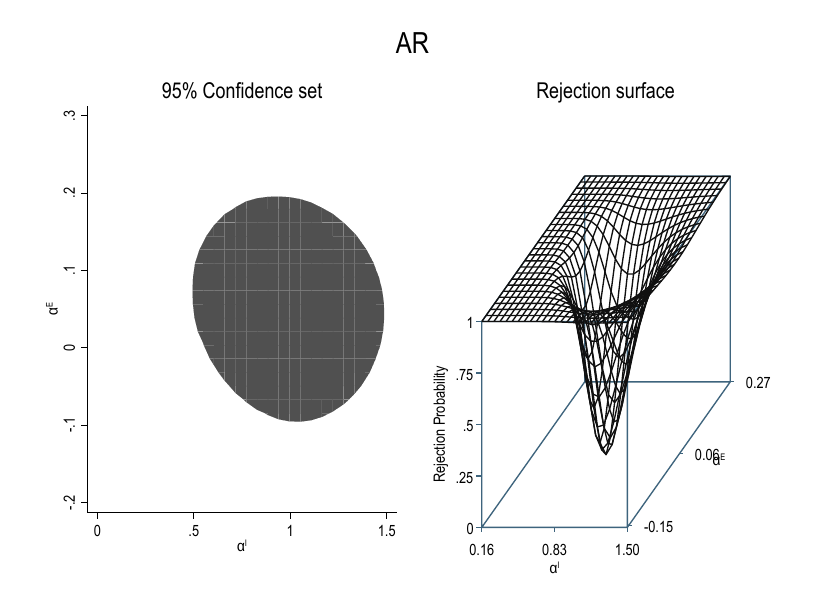}
        \caption{Citations}
        \label{fig:ar_cit}
    \end{subfigure}
    \end{figure}

    \clearpage

    \begin{figure}[p]
    \ContinuedFloat
    \centering
    \caption[]{Anderson--Rubin Confidence Sets for IV Estimates
    (continued)}

    \setcounter{subfigure}{2}

    \begin{subfigure}[b]{0.8\linewidth}
        \centering
        \includegraphics[
    width=\linewidth,
    height=0.5\textheight,
    keepaspectratio
]{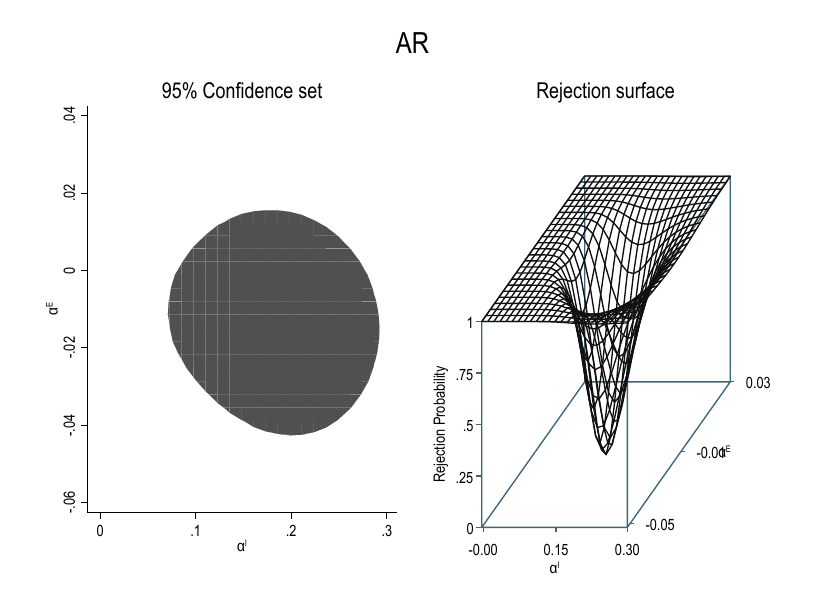}
        \caption{Patent Citations}
        \label{fig:ar_pcit}
    \end{subfigure}

    \vspace{1em}

    \begin{minipage}{\textwidth}
        \footnotesize
        \textit{Notes:}
        Each panel reports the 95\% Anderson–Rubin confidence set for the own-institution and external-cluster elasticities, $(\alpha^{I},\alpha^{E})$, from our preferred IV specification for IHS publications in Panel (a), IHS citations in Panel (b), and IHS patent citations in Panel (c). The shaded region contains parameter pairs that cannot be rejected at the 5\% level, and the accompanying surface plots the rejection probability, $1-p$.
        Anderson–Rubin inference remains valid under weak identification, with standard errors clustered at the institution level. For all three outcomes, the confidence sets lie entirely in the region with $\alpha^I>0$. 
    \end{minipage}
\end{figure}

\clearpage

\begin{table}[h]
\centering
\captionsetup{justification=centering}
\caption{Sample Sizes}
\begin{tabular}{lr}
\toprule
Sample                                                                 & Observations \\ \midrule 
{\it Panel A: Full Sample}                                             &           \\
Researcher-years   (with imputation)                                   & 3,092,028 \\
Researcher-years (without imputation)                                  & 1,191,273 \\
Researchers                                                            & 187,865   \\
Non-movers                                                             & 143,856   \\
Movers                                                                 & 44,009    \\
Movers (moved once)                                                    & 17,482    \\ \addlinespace
{\it Panel B: IV Sample}                                               &           \\
Researcher-years   (with imputation)                                   & 798,548   \\
Researcher-years (without imputation)                                  & 271,602   \\
Researchers                                                            & 35,263    \\
Non-movers                                                             & 29,302    \\
Movers                                                                 & 5,961     \\
Movers (moved once)                                                    & 2,652     \\
\bottomrule  \noalign{\vskip 0.1in}
\multicolumn{2}{l}{%
\begin{minipage}{13.3cm}%
\footnotesize{{\it Note:} This table reports sample sizes under different sample restrictions. Each observation in the data is a researcher-year. ``With imputation'' includes years in which no publication is observed; for these years, we impute zero publications and carry forward the most recent affiliation. ``Without imputation'' restricts the sample to researcher–years with at least one publication. The IV sample includes three periods: 1995, 2005, and 2015; each period pools data over the focal year and the two preceding years.  }
\end{minipage}}%
\end{tabular}
\label{table:sample_size}
\end{table}

\begin{table}[h]
\centering
\captionsetup{justification=centering}
\caption{Top 10 Metropolitan Research Clusters}
\begin{tabular}{lcc}
\toprule
Top 10 Clusters in 2019          & Number of Authors & Number of Publications \\ \midrule
Boston-Cambridge-Newton          & 3812              & 22805                  \\
New York-Newark-Jersey City      & 3536              & 15137                  \\
Washington-Arlington-Alexandria  & 3102              & 13520                  \\
Los Angeles-Long Beach-Anaheim   & 1706              & 8155                   \\
San Francisco-Oakland-Hayward    & 1316              & 7155                   \\
Philadelphia-Camden-Wilmington   & 1451              & 6426                   \\
Chicago-Naperville-Elgin         & 1396              & 6351                   \\
Houston-The Woodlands-Sugar Land & 850               & 5127                   \\
San Jose-Sunnyvale-Santa Clara   & 1268              & 4945                   \\
Seattle-Tacoma-Bellevue          & 862               & 4527                   \\
\bottomrule  \noalign{\vskip 0.1in}
\multicolumn{3}{l}{%
\begin{minipage}{13.2cm}%
\footnotesize{{\it Note:} This table reports the ten largest MSAs in academic research in 2019, ranked by the number of publications authored by researchers affiliated with institutions in each MSA. It also reports the corresponding number of active authors affiliated with institutions in each MSA. }
\end{minipage}}%
\end{tabular}
\label{table:cluster}
\end{table}

\begin{table}[h]
\centering
\captionsetup{justification=centering}
\caption{OLS Estimates of Agglomeration Elasticities: Full Sample}
\begin{tabular}{lcccc}
\toprule
                               & (1)       & (2)       & (3)       & (4)       \\ \midrule
\multicolumn{5}{l}{\textit{Panel A: Publications}}                             \\
Ln(Institution Size)           & 0.134***  & 0.131***  & 0.258***  & 0.196***  \\
                               & (0.006)   & (0.006)   & (0.007)   & (0.005)   \\
Ln(External Size)              & -0.039*** & 0.001     & -0.007*** & 0.016***  \\
                               & (0.004)   & (0.002)   & (0.002)   & (0.001)   \\ \addlinespace
\multicolumn{5}{l}{\textit{Panel B: Citations}}                                \\
Ln(Institution Size)           & 0.351***  & 0.283***  & 0.547***  & 0.412***  \\
                               & (0.015)   & (0.011)   & (0.015)   & (0.011)   \\
Ln(External Size)              & -0.087*** & 0.011**   & -0.017*** & 0.036***  \\
                               & (0.009)   & (0.005)   & (0.004)   & (0.003)   \\ \addlinespace
\multicolumn{5}{l}{\textit{Panel C: Patent Citations}}                         \\
Ln(Institution Size)           & 0.044***  & 0.027***  & 0.043***  & 0.031***  \\
                               & (0.003)   & (0.001)   & (0.002)   & (0.003)   \\
Ln(External Size)              & -0.004*** & 0.002***  & -0.007*** & -0.002*** \\
                               & (0.002)   & (0.001)   & (0.001)   & (0.001)   \\  \midrule
Observations                   & 3,092,028 & 3,088,219 & 3,086,886 & 2,755,515 \\
Field $\times$ Year FE         & Yes       & Yes       & Yes       & Yes       \\
Author FE                      & No        & Yes       & Yes       & Yes       \\
Academic Age $\times$ Field FE & No        & Yes       & Yes       & Yes       \\
Affiliation $\times$ Field FE  & No        & No        & Yes       & Yes       \\
MSA $\times$ Year FE           & No        & No        & No        & Yes       \\
\bottomrule  \noalign{\vskip 0.1in}
\multicolumn{5}{l}{%
\begin{minipage}{12cm}%
\footnotesize{\textit{Note:} This table reports OLS estimates of Equation \ref{eq:agg_reg} using the full annual author-year panel from 1970 to 2015. The dependent variable is the inverse hyperbolic sine (IHS) of publications in Panel A, IHS citations in Panel B, and IHS patent citations in Panel C. Institution size is the number of active researchers in the author's field and institution in a given year. External size is the number of active researchers in the same field and MSA, excluding the home institution. Each column corresponds to a separate regression with the fixed effects listed at the bottom of the table. Academic age is measured as years since the author's first publication. Standard errors are in parentheses and clustered at the institution level: *** $p< 0.01$, ** $p< 0.05$, * $p< 0.1$.
}
\end{minipage}}%
\end{tabular}
\label{table:result_OLS_full}
\end{table}

\begin{table}[h]
\centering
\captionsetup{justification=centering}
\caption{Pre-Trend Tests for Bartik-Predicted Cluster Growth}
\begin{tabular}{lcccccc}
\toprule
                         & \multicolumn{6}{c}{1990--1995 Change in IHS Output}                                 \\ \cmidrule(lr){2-7}
                         & Pubs         & Citations  & Patent Cites & Pubs         & Citations  & Patent Cites \\
                         & (1)          & (2)        & (3)          & (4)          & (5)        & (6)          \\ \midrule
Pred. Institution Growth & -0.044       & 0.044      & -0.173**     & -0.017       & -0.006     & -0.054**     \\
                         & (0.044)      & (0.072)    & (0.075)      & (0.012)      & (0.020)    & (0.023)      \\
Pred. External Growth    & 0.005        & -0.035     & 0.059***     & 0.003        & -0.014     & 0.023***     \\
                         & (0.012)      & (0.025)    & (0.019)      & (0.004)      & (0.009)    & (0.008)      \\
Observations             & 9,781        & 9,781      & 9,781        & 9,781        & 9,781      & 9,781        \\
R-squared                & 0.052        & 0.033      & 0.119        & 0.052        & 0.033      & 0.119        \\
Joint p-value            & 0.586        & 0.329      & 0.002        & 0.324        & 0.262      & 0.003        \\
Predicted Growth Window  & 95--05       & 95--05     & 95--05       & 95--15       & 95--15     & 95--15       \\ \midrule
                         & \multicolumn{6}{c}{1985--1995 Change in IHS Output}                                 \\ \cmidrule(lr){2-7}
                         & Pubs         & Citations  & Patent Cites & Pubs         & Citations  & Patent Cites \\
                         & (1)          & (2)        & (3)          & (4)          & (5)        & (6)          \\ \midrule
Pred. Institution Growth & -0.085       & -0.018     & -0.366***    & -0.041**     & -0.032     & -0.159***    \\
                         & (0.064)      & (0.107)    & (0.103)      & (0.018)      & (0.029)    & (0.034)      \\
Pred. External Growth    & 0.040**      & 0.024      & 0.170***     & 0.015***     & 0.010      & 0.066***     \\
                         & (0.016)      & (0.030)    & (0.032)      & (0.006)      & (0.011)    & (0.012)      \\
Observations             & 8,657        & 8,657      & 8,657        & 8,657        & 8,657      & 8,657        \\
R-squared                & 0.161        & 0.097      & 0.346        & 0.162        & 0.097      & 0.348        \\
Joint p-value            & 0.024        & 0.733      & 0.000        & 0.004        & 0.381      & 0.000        \\
Predicted Growth Window  & 95--05       & 95--05     & 95--05       & 95--15       & 95--15     & 95--15       \\
\bottomrule  \noalign{\vskip 0.1in}
\multicolumn{7}{l}{%
\begin{minipage}{15.7cm}%
\footnotesize{Note: This table reports estimates of $\pi^I$ and $\pi^E$ from Equation \ref{eq:pretrend}, which relate pre-base-period growth in IHS research output to Bartik-predicted growth in home-institution and external-cluster size. The dependent variable is the change in IHS research output between 1990 and 1995 in the upper panel and between 1985 and 1995 in the lower panel. In Columns 1–3, the key regressors are Bartik-predicted growth in home-institution and external-cluster size between 1995 and 2005. In Columns 4–6, the key regressors are the corresponding predicted growth measures between 1995 and 2015. All regressions include field fixed effects. Standard errors are in parentheses and clustered at the institution level: *** $p< 0.01$, ** $p< 0.05$, * $p< 0.1$. }
\end{minipage}}%
\end{tabular}
\label{table:result_pretrend}
\end{table}

\begin{table}[h]
\centering
\captionsetup{justification=centering}
\caption{IV Estimates of Agglomeration Elasticities by Cluster-Size Tertile}
\begin{tabular}{lccccc}
\toprule
                          & Publications & Citations & Patent Cits \\
                          & (1)          & (2)       & (3)         \\ \midrule
Institution Size: Small   & 0.205**      & 0.960***  & 0.193***    \\
                          & (0.091)      & (0.226)   & (0.047)     \\
Institution Size: Medium  & 0.157*       & 0.801***  & 0.156***    \\
                          & (0.083)      & (0.213)   & (0.043)     \\
Institution Size: Large   & 0.225***     & 0.793***  & 0.109***    \\
                          & (0.067)      & (0.169)   & (0.036)     \\
External Size: Small      & 0.071***     & 0.035     & -0.011      \\
                          & (0.027)      & (0.058)   & (0.010)     \\
External Size: Medium     & 0.056***     & 0.051     & -0.009      \\
                          & (0.020)      & (0.054)   & (0.010)     \\
External Size: Large      & 0.090***     & 0.081     & -0.004      \\
                          & (0.025)      & (0.071)   & (0.013)     \\ \midrule
Observations              & 778,733      & 778,733   & 778,733     \\
Baseline FEs              & Yes          & Yes       & Yes         \\
Author/Coauthor Controls  & Yes          & Yes       & Yes         \\
Kleibergen–Paap Wald F    & 5.97         & 5.97      & 5.97        \\
Equal Institution p-value & 0.057        & 0.053     & 0.000       \\
Equal External p-value    & 0.156        & 0.703     & 0.809       \\
\bottomrule  \noalign{\vskip 0.1in}
\multicolumn{4}{l}{%
\begin{minipage}{10.5cm}%
\footnotesize{
\textit{Note:} This table reports IV estimates of Equation \ref{eq:heter} using the IV sample. The sample consists of author-year observations in three periods: 1995, 2005, and 2015, where each period pools the focal year and the two preceding years. The dependent variable is IHS publications in Column 1, IHS citations in Column 2, and IHS patent citations in Column 3. The key regressors are log own-institution size and log external-cluster size interacted with indicators for their respective 1995 size tertiles. These interactions are instrumented with the corresponding Bartik instruments interacted with the same tertile indicators. All columns include the baseline fixed effects from Equation \ref{eq:agg_reg} and the author and coauthor direct-exposure controls. The final two rows report $p$-values for equality of the three own-institution coefficients and the three external-cluster coefficients, respectively. Standard errors are in parentheses and clustered at the institution level: *** $p< 0.01$, ** $p< 0.05$, * $p< 0.1$.
}
\end{minipage}}%
\end{tabular}
\label{table:result_IV_heter}
\end{table}

\begin{table}[h]
\centering
\captionsetup{justification=centering}
\caption{IV Estimates of Agglomeration Elasticities \\ Using Alternative Cluster Measures}
\begin{tabular}{lcccc}
\toprule
Cluster Definition      & \multicolumn{2}{c}{MSA}    & \multicolumn{2}{c}{25-km Radius} \\ \cmidrule(lr){2-3} \cmidrule(lr){4-5}
Size Measurement        & Author Count & Paper Count & Author Count    & Paper Count    \\
                        & (1)          & (2)         & (3)             & (4)            \\ \midrule
\multicolumn{5}{l}{{\it Panel A: Publications}}                                         \\
Ln(Institution Size)    & 0.193***     & 0.101       & 0.221***        & 0.133          \\
                        & (0.069)      & (0.097)     & (0.082)         & (0.110)        \\
Ln(External Size)       & 0.081***     & 0.049***    & -0.139          & -0.062         \\
                        & (0.019)      & (0.014)     & (0.119)         & (0.053)        \\ \addlinespace
\multicolumn{5}{l}{{\it Panel B: Citations}}                                            \\
Ln(Institution Size)    & 0.795***     & 0.898***    & 0.806***        & 0.870***       \\
                        & (0.183)      & (0.281)     & (0.214)         & (0.320)        \\
Ln(External Size)       & 0.084        & 0.048       & -0.199          & -0.162         \\
                        & (0.053)      & (0.053)     & (0.279)         & (0.167)        \\ \addlinespace
\multicolumn{5}{l}{{\it Panel C: Patent Citations}}                                     \\
Ln(Institution Size)    & 0.140***     & 0.336***    & 0.138***        & 0.335***       \\
                        & (0.041)      & (0.114)     & (0.052)         & (0.127)        \\
Ln(External Size)       & -0.008       & 0.007       & 0.076           & 0.000          \\
                        & (0.011)      & (0.020)     & (0.082)         & (0.057)        \\
Observations            & 794,658      & 794,658     & 692,763         & 692,763        \\
Kleibergen-Paap Wald F  & 15.33        & 4.718       & 3.282           & 3.926          \\
\bottomrule  \noalign{\vskip 0.1in}
\multicolumn{5}{l}{%
\begin{minipage}{13.5cm}%
\footnotesize{Note: This table reports IV estimates of Equation \ref{eq:agg_reg} using the IV sample. The dependent variable is IHS publications in Panel A, IHS citations in Panel B, and IHS patent citations in Panel C. The sample consists of author-year observations in three periods: 1995, 2005, and 2015, where each period pools the focal year and the two preceding years. In Columns 1–2, the external cluster is defined as institutions in the same field and MSA as the home institution; in Columns 3–4, it is defined as institutions in the same field within 25 km of the home institution. Cluster size is measured by the number of active researchers in Columns 1 and 3 and by the number of papers published in Columns 2 and 4. The size measures are instrumented using the corresponding Bartik instruments. All specifications include author fixed effects, academic-age-by-field fixed effects, field-by-year fixed effects, affiliation-by-field fixed effects, and the author and coauthor direct-exposure controls. The Kleibergen–Paap rk Wald F statistic is reported for each specification. Standard errors are in parentheses and clustered at the institution level: *** $p< 0.01$, ** $p< 0.05$, * $p< 0.1$. }
\end{minipage}}%
\end{tabular}
\label{table:result_IV_cluster}
\end{table}

\begin{table}[h]
\centering
\captionsetup{justification=centering}
\caption{IV Estimates of Agglomeration Elasticities \\ Under Explicit Extensive-Margin Calibrations}
\begin{tabular}{
    >{\raggedright\arraybackslash}p{4cm}
    *{3}{>{\centering\arraybackslash}p{1.8cm}}
}
\toprule
                       & (1)      & (2)      & (3)      \\ \midrule
\multicolumn{4}{l}{{\it Panel A: Publications}}         \\
Ln(Institution Size)   & 0.199*** & 0.112**  & 0.236*** \\
                       & (0.070)  & (0.054)  & (0.079)  \\
Ln(External Size)      & 0.074*** & 0.060*** & 0.082*** \\
                       & (0.020)  & (0.015)  & (0.023)  \\ \addlinespace
\multicolumn{4}{l}{{\it Panel B: Citations}}            \\
Ln(Institution Size)   & 0.828*** & 0.737*** & 0.879*** \\
                       & (0.190)  & (0.171)  & (0.201)  \\
Ln(External Size)      & 0.056    & 0.043    & 0.062    \\
                       & (0.060)  & (0.054)  & (0.063)  \\ \addlinespace
\multicolumn{4}{l}{{\it Panel C: Patent Citations}}     \\
Ln(Institution Size)   & 0.148*** & 0.117*** & 0.163*** \\
                       & (0.043)  & (0.034)  & (0.047)  \\
Ln(External Size)      & -0.011   & -0.007   & -0.013   \\
                       & (0.012)  & (0.009)  & (0.013)  \\
                       &          &          &          \\
Extensive Margin Value & Baseline & 0.1      & 1        \\
\bottomrule  \noalign{\vskip 0.1in}
\multicolumn{4}{l}{%
\begin{minipage}{10.8cm}%
\footnotesize{Note: This table reports IV estimates of Equation \ref{eq:agg_reg} using the IV sample. The sample consists of author-year observations in three periods: 1995, 2005, and 2015, where each period pools the focal year and the two preceding years. Column 1 uses the baseline IHS transformation. Columns 2 and 3 use $m(F)=\log(F)$ for $F>0$ and $m(0)=-x$, with $x=0.1$ and $x=1$, respectively. All specifications include author fixed effects, academic-age-by-field fixed effects, field-by-year fixed effects, affiliation-by-field fixed effects, and the author and coauthor direct-exposure controls. Standard errors are in parentheses and clustered at the institution level: *** $p< 0.01$, ** $p< 0.05$, * $p< 0.1$. }
\end{minipage}}%
\end{tabular}
\label{table:result_IV_extensive}
\end{table}

\begin{table}[h]
\centering
\captionsetup{justification=centering}
\caption{IV Estimates of Agglomeration Elasticities \\ Using Alternative Log Transformations of Cluster Size}
\begin{tabular}{
    >{\raggedright\arraybackslash}p{4cm}
    *{3}{>{\centering\arraybackslash}p{1.8cm}}
}
\toprule 
                                   & (1)      & (2)      & (3)      \\ \midrule
\textit{Panel A: Publications}     &          &          &          \\
Ln(Institution Size)               & 0.199*** & 0.537*** & 1.031**  \\
                                   & (0.070)  & (0.188)  & (0.467)  \\
Ln(External Size)                  & 0.074*** & 0.043    & -0.023   \\
                                   & (0.020)  & (0.030)  & (0.066)  \\ \addlinespace
\textit{Panel B: Citations}        &          &          &          \\
Ln(Institution Size)               & 0.828*** & 1.932*** & 3.527**  \\
                                   & (0.190)  & (0.599)  & (1.583)  \\
Ln(External Size)                  & 0.056    & -0.071   & -0.313   \\
                                   & (0.060)  & (0.108)  & (0.235)  \\ \addlinespace
\textit{Panel C: Patent Citations} &          &          &          \\
Ln(Institution Size)               & 0.148*** & 0.320*** & 0.566*   \\
                                   & (0.043)  & (0.121)  & (0.290)  \\
Ln(External Size)                  & -0.011   & -0.034*  & -0.075*  \\
                                   & (0.012)  & (0.020)  & (0.042)  \\
                                   &          &          &          \\
Transformations                    & Ln(N+1)  & Ln(N+5)  & Ln(N+10) \\
Kleibergen–Paap Wald F             & 14.9     & 5.34     & 2.34     \\
\bottomrule  \noalign{\vskip 0.1in}
\multicolumn{4}{l}{%
\begin{minipage}{10.8cm}%
\footnotesize{Note: This table reports IV estimates of Equation \ref{eq:agg_reg} using different transformations of own-institution and external-cluster size. Columns 1–3 use ln(N+1), ln(N+5), and ln(N+10), respectively.
The dependent variable is IHS publications in Panel A, IHS citations in Panel B, and IHS patent citations in Panel C. The sample consists of author-year observations in three periods: 1995, 2005, and 2015, where each period pools the focal year and the two preceding years. All specifications include author fixed effects, academic-age-by-field fixed effects, field-by-year fixed effects, affiliation-by-field fixed effects, and the author and coauthor direct-exposure controls. The Kleibergen–Paap rk Wald F statistic is reported for each specification. Standard errors are in parentheses and clustered at the institution level: *** $p< 0.01$, ** $p< 0.05$, * $p< 0.1$. }
\end{minipage}}%
\end{tabular}
\label{table:result_IV_zero}
\end{table}

\end{document}